\newif\iftechreport
\newcommand{\myparagraph}[1]{\medskip \noindent \textbf{#1}}
\title[Binsec/Rel: Symbolic Binary Analyzer for Security]{Binsec/Rel: Symbolic Binary Analyzer for Security with Applications to
  Constant-Time and Secret-Erasure}
\author{Lesly-Ann Daniel}
\email{lesly-ann.daniel@cea.fr}
\affiliation{%
  \institution{Université Paris-Saclay, CEA, List}
  \postcode{F-91120}
  \city{Palaiseau}
  \country{France}
}
\affiliation{%
  \institution{imec-DistriNet, KU Leuven}
  \city{Leuven}
  \country{Belgium}
}
\author{Sébastien Bardin}
\email{sebastien.bardin@cea.fr}
\affiliation{%
  \institution{Université Paris-Saclay, CEA, List}
  \postcode{F-91120}
  \city{Palaiseau}
  \country{France}
}
\author{Tamara Rezk}
\email{tamara.rezk@inria.fr}
\affiliation{%
  \institution{Inria}
  \city{Sophia Antipolis}
  \country{France}
}
\begin{document}
\begin{abstract}
  This paper tackles the problem of designing efficient binary-level verification for a
  subset of information flow properties encompassing \emph{constant-time} and
  \emph{secret-erasure}. These properties are crucial for cryptographic implementations, but
  are generally not preserved by compilers.
  Our proposal builds on relational symbolic execution enhanced with new
  optimizations dedicated to information flow and binary-level analysis,
  yielding a dramatic improvement over prior work based on symbolic
  execution.
  We implement a prototype, \brelse{}, for bug-finding and bounded-verification
  of constant-time and secret-erasure, and perform extensive experiments on a set
  of 338 cryptographic implementations, demonstrating the benefits of our
  approach.
  Using \brelse{}, we also automate two %
  prior manual studies on preservation of constant-time and secret-erasure by
  compilers for a total of \newercontent{4148} and 1156 binaries respectively. %
  Interestingly, our analysis highlights incorrect usages of volatile data
  pointer for secret erasure and shows that scrubbing mechanisms based on
  \emph{volatile function pointers} can introduce additional register spilling
  which might break secret-erasure. We also discovered that \texttt{gcc -O0} and
  backend passes of \texttt{clang} introduce violations of constant-time in
  implementations that were previously deemed secure by a state-of-the-art
  constant-time verification tool operating at LLVM level, showing the
  importance of reasoning at binary-level.
\end{abstract}

\begin{CCSXML}
<ccs2012>
   <concept>
       <concept_id>10002978.10002986.10002990</concept_id>
       <concept_desc>Security and privacy~Logic and verification</concept_desc>
       <concept_significance>500</concept_significance>
       </concept>
   <concept>
       <concept_id>10002978.10003006.10011608</concept_id>
       <concept_desc>Security and privacy~Information flow control</concept_desc>
       <concept_significance>100</concept_significance>
       </concept>
   <concept>
       <concept_id>10002978.10002979.10002983</concept_id>
       <concept_desc>Security and privacy~Cryptanalysis and other attacks</concept_desc>
       <concept_significance>100</concept_significance>
       </concept>
 </ccs2012>
\end{CCSXML}

\ccsdesc[500]{Security and privacy~Logic and verification}
\ccsdesc[100]{Security and privacy~Information flow control}
\ccsdesc[100]{Security and privacy~Cryptanalysis and other attacks}

\keywords{constant-time, secret-erasure, information-flow analysis, binary
  analysis, symbolic execution.} %

\maketitle

\section{Introduction}\label{sec:intro}
Safety properties~\cite{DBLP:journals/dc/AlpernS87}, such as buffer
overflows, have been extensively studied and numerous efficient tools have been
developed for their
verification~\cite{DBLP:conf/osdi/CadarDE08,DBLP:journals/cacm/GodefroidLM12,DBLP:journals/fac/KirchnerKPSY15,
  DBLP:journals/sttt/HavelundP00, DBLP:conf/popl/JourdanLBLP15,
  DBLP:conf/esop/CousotCFMMMR05,DBLP:journals/ieeesp/AvgerinosBDGNRW18}.
However, safety properties are properties of individual execution traces,
whereas many important security properties are expressed as sets of
traces---i.e., are \emph{hyperproperties}~\cite{DBLP:conf/csfw/ClarksonS08}. In
particular, information flow properties, which regulate the leakage of
information from the secret inputs of a program to public outputs, relate two
execution traces---i.e., are \emph{2-hypersafety properties}. %

\emph{Constant-time} and \emph{secret-erasure} are two examples of
\emph{2-hypersafety properties} that are crucial in cryptographic
implementations.
The constant-time programming discipline (CT) is a software-based countermeasure
to timing and microarchitectural attacks which requires the control flow and the
memory accesses of the program to be independent from the secret
input\footnote{Some versions of constant-time also require that the size of operands of
  variable-time instructions (e.g.~integer division) is independent from
  secrets.}. %
Constant-time has been proven to protect against cache-based timing
attacks~\cite{DBLP:conf/ccs/BartheBCLP14} %
and is widely used to secure cryptographic implementations (e.g.
BearSSL~\cite{BearSSLConstantTimeCrypto},
NaCL~\cite{DBLP:conf/latincrypt/BernsteinLS12},
HACL*~\cite{DBLP:conf/ccs/ZinzindohoueBPB17}, etc). %
Secret-erasure~\cite{DBLP:conf/csfw/ChongM05} (a.k.a.\ data scrubbing or safe
erasure) requires to clear secret data (e.g.\ secret keys) from the memory after
the execution of a critical function, for instance by zeroing the corresponding
memory. It ensures that secret data do not persist in memory longer than
necessary, protecting them against subsequent memory disclosure vulnerabilities.

\myparagraph{Problem.} %
These properties are generally not preserved by
compilers~\cite{DBLP:conf/eurosp/SimonCA18,DBLP:conf/sp/DSilvaPS15,DBLP:conf/csfw/BessonDJ19}.
For example, reasoning about constant-time requires to know whether the code
\lstinline{c=(x<y)-1} will be compiled to branchless code or not, but this
depends on the compiler version and
optimization~\cite{DBLP:conf/eurosp/SimonCA18}.
Similarly, scrubbing operations used for secret-erasure have no effect on the
result of the program and can therefore be optimized away by the
dead-store-elimination pass of the
compiler~\cite{DBLP:conf/uss/YangJOLL17,DBLP:conf/csfw/BessonDJ19,DBLP:conf/sp/DSilvaPS15},
as detailed in CWE-14~\cite{CWE14CompilerRemoval}. Moreover, these scrubbing
operations do not erase secrets that have been copied on the stack by compilers,
e.g.\ from register spilling.

Several CT-analysis tools have been proposed to analyze source
code~\cite{bacelaralmeidaFormalVerificationSidechannel2013,DBLP:conf/esorics/BlazyPT17},
or LLVM
code~\cite{DBLP:conf/uss/AlmeidaBBDE16,DBLP:conf/sp/BrotzmanLZTK19},
but leave the gap opened for violations introduced in the executable
code either by the compiler~\cite{DBLP:conf/eurosp/SimonCA18} or by
closed-source libraries~\cite{DBLP:conf/cans/KaufmannPVV16}.
Binary-level tools for constant-time using dynamic
approaches~\cite{langleyImperialVioletCheckingThat2010,DBLP:conf/memocode/ChattopadhyayBR17,DBLP:conf/uss/WangWLZW17,DBLP:conf/acsac/WichelmannMES18}
can find bugs, but otherwise miss vulnerabilities in unexplored portions of the
code, making them incomplete. Conversely, static
approaches~\cite{DBLP:conf/cav/KopfMO12,DBLP:conf/uss/DoychevFKMR13,DBLP:conf/pldi/DoychevK17}
cannot report precise counterexamples---making them of minor interest when the
implementation cannot be proven secure. %
For secret-erasure there is currently no sound automatic analyzer. Existing
approaches rely on dynamic
tainting~\cite{Secretgrind2020,DBLP:conf/eurosp/SimonCA18} or manual binary-code
analysis~\cite{DBLP:conf/uss/YangJOLL17}. While there has been some work on
security preserving
compilers~\cite{DBLP:conf/csfw/BessonDJ19,DBLP:conf/eurosp/SimonCA18},
they are not always applicable and are ineffective for detecting errors in
existing binaries.

\myparagraph{Challenges.} %
Two main challenges arise in the verification of these properties:
\begin{itemize}
\item First, common verification methods do not directly apply because
  information flow properties like constant-time and secret-erasure are
  not regular safety properties but 2-hypersafety
  properties~\cite{DBLP:conf/csfw/ClarksonS08} (i.e., relating two
  execution traces), and their standard reduction to safety on a
  transformed program,
  \emph{self-composition}~\cite{DBLP:conf/csfw/BartheDR04}, is
  inefficient~\cite{DBLP:conf/sas/TerauchiA05};

  \item Second, it is notoriously difficult to adapt formal methods to
        binary-level because of the lack of structure information (data and
        control) and the need to explicitly reason about the
        memory~\cite{DBLP:conf/fm/DjoudiBG16,DBLP:journals/toplas/BalakrishnanR10}.
\end{itemize}

\noindent A technique that scales well on binary code and that naturally comes
into play for bug-finding and bounded-verification is \emph{symbolic execution}
(SE)~\cite{DBLP:journals/cacm/GodefroidLM12,DBLP:journals/cacm/CadarS13}. While
it has %
proven very successful for standard safety
properties~\cite{DBLP:conf/icse/BounimovaGM13}, its direct adaptation to
2-hypersafety properties through (variants of) self-composition suffers from a
scalability
issue~\cite{DBLP:conf/ccs/BalliuDG14,DBLP:conf/sec/DoBH15,DBLP:conf/forte/MilushevBC12}.
Some recent approaches scale better, but at the cost of sacrificing
either
bounded-verification~\cite{DBLP:conf/uss/WangWLZW17,DBLP:conf/date/SubramanyanMKMF16}
(by doing under-approximations) %
or bug-finding~\cite{DBLP:conf/sp/BrotzmanLZTK19} (by doing
over-approximations). %

The idea of analyzing pairs of executions for the verification of
2-hypersafety is not new (e.g.\ relational Hoare
logic~\cite{DBLP:conf/popl/Benton04},
self-composition~\cite{DBLP:conf/csfw/BartheDR04}, product
programs~\cite{DBLP:conf/fm/BartheCK11}, multiple
facets~\cite{DBLP:conf/popl/AustinF12,DBLP:conf/www/NgoBFRRS18}). In the context of
symbolic execution, it has first been coined as
\emph{ShadowSE}~\cite{DBLP:conf/icse/PalikarevaKC16} for back-to-back
testing, and later as \emph{relational symbolic execution}
(RelSE)~\cite{farinaRelationalSymbolicExecution2019}.
However, because of the necessity to model the memory, RelSE cannot be trivially
adapted to binary-level analysis. In particular, representing the memory as a
large array of bytes prevents sharing between executions and precise
information-flow tracking, which generates \emph{a high number of queries} for
the constraint solver. Hence, a \emph{direct application of RelSE does not
  scale}.

\myparagraph{Proposal.} %
We restrict to a subset of information flow properties relating traces
following the same path---which includes interesting security policies
such as constant-time and secret-erasure.  \emph{We tackle the problem
  of designing an efficient symbolic verification tool for
  constant-time and secret-erasure at binary-level, that leverages the
  full power of symbolic execution without sacrificing correct
  bug-finding nor bounded-verification.}  We present \brelse{}, the
first efficient binary-level automatic tool for bug-finding and
bounded-verification of constant-time and secret-erasure at
binary-level. It is compiler-agnostic, targets x86 and ARM
architectures and does not require source code.

The technique is based on \emph{relational symbolic
  execution}~\cite{DBLP:conf/icse/PalikarevaKC16,farinaRelationalSymbolicExecution2019}:
it models two execution traces following the same path in the same symbolic
execution instance %
and \emph{maximizes sharing between them}. We show via experiments
(\cref{sec:scalability}) that RelSE alone does not scale at binary-level to
analyze constant-time on real cryptographic implementations.
Our key technical insights are (1) to complement RelSE with dedicated
optimizations offering a fine-grained information flow tracking in the memory,
improving sharing at binary-level (2) to use this sharing to track
secret-dependencies and reduce the number of queries sent to the solver.

\brelse{} can analyze about 23 million instructions in 98 min (3860
instructions per second), outperforming similar state of the art
binary-level  symbolic
 analyzers~\cite{DBLP:conf/date/SubramanyanMKMF16,DBLP:conf/uss/WangWLZW17}
(cf.~\cref{tab:comparison_se}, page~\pageref{tab:comparison_se}),
while being still correct and complete.

\myparagraph{Contributions.}  Our contributions are the following:
\begin{itemize}

  \item We design dedicated optimizations for information flow analysis at
        binary-level. First, we complement relational symbolic execution with a
        new \emph{on-the-fly} simplification for \emph{binary-level} analysis,
        to track secret-dependencies and maximize sharing in the memory
        (\cref{sec:row}). Second, we design new simplifications for
        \emph{information flow} analysis: untainting (\cref{sec:untainting}) and
        fault-packing (\cref{sec:fp}). Moreover, we formally prove that our
        analysis is correct for bug-finding and bounded-verification of
        constant-time (\cref{sec:proofs}) and discuss the adaptation of the
        guarantees to other information-flow properties \iftechreport{}
        (\cref{sec:discussion_proofs}); \else in the accompanying tech
        report~\cite{techreportbinsecrel};\fi

  \item We propose a tool named \brelse{} for constant-time and secret-erasure
        analysis. Extensive experimental evaluation (338 samples) against
        standard approaches (\cref{sec:scalability}) shows that it can find bugs
        in real-world cryptographic implementations much faster than these
        techniques (\(\times 1000\) speedup) and can achieve bounded-verification
        when they time out, with a performance close to standard SE
        (\(\times 2\) overhead);

  \item In order to prove the effectiveness of \brelse{}, we perform an
        extensive analysis of constant-time at binary-level. In particular, we
        analyze 296 cryptographic binaries previously verified at a higher-level
        (incl.\ codes from HACL*~\cite{DBLP:conf/ccs/ZinzindohoueBPB17},
        BearSSL~\cite{BearSSLConstantTimeCrypto},
        NaCL~\cite{DBLP:conf/latincrypt/BernsteinLS12}), we replay known bugs in 42
        samples (incl.\ Lucky13~\cite{DBLP:conf/sp/AlFardanP13}) %
        and automatically generate counterexamples (\cref{sec:effectiveness});

  \item Simon \emph{et al.}~\cite{DBLP:conf/eurosp/SimonCA18} have demonstrated
        that \texttt{clang}'s optimizations break constant-timeness of code. We
        extend this work in five directions---from 192
        in~\cite{DBLP:conf/eurosp/SimonCA18} to \newercontent{4148} configurations
        (\cref{sec:compilers}):
        \begin{enumerate*}
          \item we automatically analyze the code that was manually checked
          in~\cite{DBLP:conf/eurosp/SimonCA18},
          \item we add new implementations,
          \item we add the \texttt{gcc} compiler and a more recent version of
          \texttt{clang},
          \item we add \newercontent{\texttt{i686}} and ARM,
          \item \newcontent{we investigate the impact of individual
            optimizations---i.e., the \texttt{-x86-cmov-converter} of
            \texttt{clang} and the if-conversion passes of \texttt{gcc}}.
        \end{enumerate*}
        Interestingly, we discovered that \texttt{gcc -O0} and backend passes of
        \texttt{clang} introduce violations of constant-time that cannot be
        detected by LLVM verification tools like
        ct-verif~\cite{DBLP:conf/uss/AlmeidaBBDE16} \newercontent{\emph{even when the
        \texttt{-x86-cmov-converter} is disabled}}. \newcontent{On a positive
        note, we also show that, contrary to \texttt{clang}, \texttt{gcc}
        optimizations tend to help preserve constant-time.} \newcontent{This
        study is open-source and can be easily extended with new compilers and
        programs;} %

  \item Finally, we build the first framework to automatically check the
        preservation of secret-erasure by compilers. We use it to analyze 17
        scrubbing functions---including countermeasures manually analyzed in a
        prior study~\cite{DBLP:conf/uss/YangJOLL17}, compiled with 10
        compilers with different optimization levels, for a total of 1156 binaries
        (\cref{sec:expes-secret-erasure}). Our analysis: %
        \begin{enumerate*}
          \item \newcontent{confirms that the main optimization affecting the
            preservation of secret-erasure is the dead store elimination pass
            (\texttt{-dse}), but also shows that disabling it is not always
            sufficient to preserve secret-erasure},
          \item shows that, while some versions of scrubbing functions based on
          \emph{volatile data pointer} are secure, it is easy to implement this
          mechanism incorrectly, in particular by using a volatile pointer to non-volatile
          data, or passing a pointer to volatile in a function call,
          \item interestingly it also shows that scrubbing mechanisms based on
          \emph{volatile function pointers} can introduce additional register
          spilling that might break secret-erasure with \texttt{gcc -O2} and
          \texttt{gcc -O3},
          \item finally, secret-erasure mechanisms based on dedicated secure
          functions (i.e., \texttt{explicit\_bzero}, \texttt{memset\_s}), memory
          barriers, and weak symbols, are preserved in all tested setups.
        \end{enumerate*}
        This framework is open-source and can be easily extended with new
        compilers and new scrubbing functions;
\end{itemize}

\myparagraph{Discussion.} %
Our technique is shown to be highly efficient on bug-finding and
bounded-verification compared to alternative approaches, paving the way to a
systematic binary-level analysis of information-flow properties on cryptographic
implementations, while our experiments demonstrate the importance of developing
verification tools reasoning at binary-level. %
Besides constant-time and secret-erasure, the tool can be readily adapted to
other 2-hypersafety properties of interest in security (e.g., cache-side
channels, or variants of constant-time taking operand size into account)---as
long as they restrict to pairs of traces following the same path. %

\myparagraph{Availability.} We made \brelse{} open-source at
\url{https://github.com/binsec/rel}, our experiments are available at
\url{https://github.com/binsec/rel_bench}, and in particular, our studies on the
preservation of constant-time and secret-erasure by compilers are available at
\url{https://github.com/binsec/rel_bench/tree/main/properties_vs_compilers}.

\myparagraph{Extension of article~\cite{DBLP:conf/sp/DanielBR20}.} %
This paper is an extension of the article \textit{Binsec/Rel: Efficient
  Relational Symbolic Execution for Constant-Time at
  Binary-Level}~\cite{DBLP:conf/sp/DanielBR20}, with the following
additional contributions:
\begin{itemize}
  \item The leakage model considered in~\cite{DBLP:conf/sp/DanielBR20} restricts
        to constant-time while this work encompasses a more general subset of
        information flow properties. In particular, we define a new leakage
        model and property to capture the notion of \emph{secret-erasure} (cf.\
        \cref{sec:leakage_model,sec:secure_program});

  \item We extend the \brelse{} tool to verify the \emph{secret-erasure} property;

  \item We perform an experimental evaluation on the preservation of
        secret-erasure by compilers (cf. \cref{sec:expes-secret-erasure}). This
        evaluation highlights incorrect usages of volatile data pointers for
        secret erasure, and shows that scrubbing mechanisms based on
        \emph{volatile function pointers} can introduce additional violations
        from register spilling;

  \item \newcontent{Using \brelse{}, we also investigate the role of individual
        compiler optimizations in the preservation of secret-erasure and
        constant-time. For constant-time, we show that the if-conversion passes
        of \texttt{gcc} may help enforce constant-time in ARM binaries. We also
        show that disabling the \texttt{cmov-converter} is not always sufficient
        to preserve constant-time in the backend-passes of \texttt{clang}. For
        secret-erasure, we confirm the key role of the dead store elimination
        pass (\texttt{-dse}), but also show that disabling it does not always
        preserve secret-erasure.}
\end{itemize}

\iftechreport{} %
In addition, we provide full proofs of relative completeness and correctness of
the analysis---whereas simple sketches of proofs were given
in~\cite{DBLP:conf/sp/DanielBR20} (\cref{sec:proofs})---\newercontent{ we
  evaluate the scalability of \brelse{} according to the size of the input
  (\cref{app:scale_size}), and we detail the vulnerabilities introduced by
  \texttt{clang} with examples (\cref{app:ct-sort}). } %
\else %
In addition, we provide a technical report~\cite{techreportbinsecrel} which
contains full proofs of relative completeness and correctness of the analysis,
\newercontent{contains an evaluation of the scalability of \brelse{} according
  to the size of the input, and details the vulnerabilities introduced by
  \texttt{clang} with examples.}. %
\fi %

\section{Background}\label{sec:background} %
In this section, we present the basics of information flow properties
and symbolic execution.
Small examples of constant-time and standard adaptations of symbolic
execution are presented in \cref{sec:motivating}, while formal
definitions of information flow policies (including constant-time and
secret-erasure) are given in \cref{sec:concrete_semantics}.

\subsubsection*{Information flow properties}
Information flow policies regulate the transfer of information between public
and secret domains. To reason about information flow, the program input is
partitioned into two disjoint sets: \emph{low} (i.e., public) and \emph{high}
(i.e., secret).
Typical information flow properties require that the observable output of a
program does not depend on the high input
(\textit{non-interference}~\cite{DBLP:journals/cacm/DenningD77}). Constant-time
and secret-erasure can be expressed as information flow properties.
\emph{Constant-time} requires both the program control flow and the memory
accesses to be independent from high input. \newcontent{It protects against
  timing and microarchitectural attacks (exploiting cache, port contention,
  branch predictors, etc.)}. %
\emph{Secret-erasure} requires specific memory locations (typically the call
stack) to be independent from high input when returning from a critical
function. \newcontent{It ensures that secret data do not persist in memory
  longer than necessary~\cite{DBLP:conf/uss/ChowPGCR04}, protecting these secret
  data against subsequent memory exposure, e.g.\ memory disclosure
  vulnerabilities, access to persistent storage (swap memory).}

Contrary to standard \emph{safety} properties which state that nothing bad can
happen along \emph{one execution trace}, information flow properties relate
\emph{two execution traces}---they are \emph{2-hypersafety
  properties}~\cite{DBLP:conf/csfw/ClarksonS08}.
Unfortunately, the vast majority of symbolic execution
tools~\cite{DBLP:journals/cacm/GodefroidLM12,DBLP:conf/osdi/CadarDE08,DBLP:journals/ieeesp/AvgerinosBDGNRW18,DBLP:journals/tocs/ChipounovKC12,DBLP:conf/sp/Shoshitaishvili16,DBLP:conf/wcre/DavidBTMFPM16}
is designed for safety verification and cannot directly be applied to
2-hypersafety properties.
In principle, 2-hypersafety properties can be reduced to standard
safety properties of a \emph{self-composed
  program}~\cite{DBLP:conf/csfw/BartheDR04} but this reduction alone does not
scale~\cite{DBLP:conf/sas/TerauchiA05}.

\subsubsection*{Symbolic execution}
 Symbolic Execution
(SE)~\cite{DBLP:journals/cacm/King76,DBLP:journals/cacm/CadarS13,DBLP:journals/cacm/GodefroidLM12}
consists in executing a program on \emph{symbolic inputs} instead of
concrete inputs. Variables and expressions of the program are
represented as terms over these symbolic inputs and the current path is
modeled by a \emph{path predicate} (a logical formula), which is the
conjunction of conditional expressions encountered along the
execution. 
 SE is built upon two main steps. %
(1) \emph{Path search}: at each conditional statement the symbolic
execution \emph{forks}, the expression of the condition is added to the
first branch and its negation to the second branch, then the symbolic
execution continues along satisfiable branches; %
(2) \emph{Constraint solving}: the path predicate can be solved with
an off-the-shelf \emph{automated constraint solver}, typically
SMT~\cite{DBLP:conf/woot/VanegueH12}, to generate a concrete input
exercising the path.
 Combining these two steps, SE can explore different program paths
and generate test inputs exercising these paths. It can also check
local assertions in order to \emph{find bugs} or perform
\emph{bounded-verification} (i.e., verification up to a certain
depth).
Dramatic progress in program analysis and constraint solving over
the last two decades have made SE a tool of choice for intensive
testing~\cite{DBLP:conf/icse/BounimovaGM13,DBLP:journals/cacm/CadarS13},
vulnerability
analysis~\cite{DBLP:journals/ieeesp/AvgerinosBDGNRW18,DBLP:conf/ndss/AvgerinosCHB11,DBLP:conf/uss/SchwartzAB11}
and other security-related
analysis~\cite{DBLP:conf/sp/YadegariJWD15,DBLP:conf/sp/BardinDM17}.

\subsubsection*{Binary-level symbolic execution}\label{sec:bl-se}
 Low-level code operates on a set of registers and a single (large)
untyped memory.  During the execution, %
a call stack contains information about the active functions such as
their arguments and local variables. %
A special register \texttt{esp} (stack pointer) indicates the top
address of the call stack and local variables of a function can be
referenced as offsets from the initial
\texttt{esp}\footnote{\texttt{esp} is specific to x86, but this is
  generalizable, e.g.~\texttt{sp} for ARMv7.}.

Binary-level code analysis is notoriously more challenging than source code
analysis~\cite{DBLP:journals/toplas/BalakrishnanR10,DBLP:conf/fm/DjoudiBG16}.
First, evaluation and assignments of source code variables become memory load
and store operations, requiring to reason explicitly about the memory in a very
precise way. Second, the high level control flow structure (e.g. \texttt{for}
loops) is not preserved, and there are indirect jumps to handle (e.g.~instruction
of the form \lstinline{jmp eax}). Fortunately, it turns out that SE is less
difficult to adapt from source code to binary code than other semantic
analysis---due to both the efficiency of SMT solvers and concretization (i.e.,
simplifying a formula by constraining some variables to be equal to their
observed runtime values).
Hence, strong binary-level SE tools do exist and have yielded several highly
promising case
studies~\cite{DBLP:journals/cacm/GodefroidLM12,DBLP:journals/ieeesp/AvgerinosBDGNRW18,DBLP:journals/tocs/ChipounovKC12,DBLP:conf/sp/Shoshitaishvili16,DBLP:conf/wcre/DavidBTMFPM16,DBLP:conf/sp/BardinDM17,DBLP:conf/dimva/SalwanBP18}.
In this paper, we build on top of the binary-analysis platform
\textsc{Binsec}~\cite{DBLP:conf/tacas/DjoudiB15} and in particular its symbolic
execution engine \textsc{Binsec/SE}~\cite{DBLP:conf/wcre/DavidBTMFPM16}.

\newcontent{One of the key components of binary-level symbolic execution is the
  representation of the memory. A first solution, adopted in
  \textsc{Binsec/SE}~\cite{DBLP:conf/wcre/DavidBTMFPM16} and
  \textsc{Bap}~\cite{DBLP:conf/cav/BrumleyJAS11}, is to use a \emph{fully
    symbolic memory model} in which the memory is represented as a symbolic
  array of bytes. Other solutions consist in concretizing (parts of) the memory.
  For instance, angr~\cite{DBLP:conf/sp/Shoshitaishvili16} uses a partially
  symbolic memory model~\cite{DBLP:journals/ieeesp/AvgerinosBDGNRW18} in which
  write addresses are concretized and symbolic loads are encoded as symbolic
  if-the-else expressions. %
  Fully symbolic memory models incur a performance overhead compared to
  partially symbolic (or concrete) memory models. However, they can model all
  possible values that load/write addresses can take---instead of considering
  only a subset of the possible addresses. Hence, they offer better soundness
  guarantees and are better suited for bounded-verification.}

\myparagraph{Logical notations.} \textsc{Binsec/SE} relies on the theory of
bitvectors and arrays, \abv{}~\cite{barrettSMTLIBStandardVersion2017}. %
Values (e.g.~registers, memory addresses, memory content) are modeled with
fixed-size bitvectors~\cite{FixedSizeBitVectorsTheorySMTLIB}. %
We use the type \(\bvtype{m}\), where \(m\) is a constant number, to represent
symbolic bitvector expressions of size \(m\). %
The memory is modeled with a logical array~\cite{ArraysExTheorySMTLIB} of type
\(\memtype{}\) (assuming a $32$ bit architecture). %
A logical array is a function \((Array~\mathcal{I}~\mathcal{V})\) that maps each
index \(i \in \mathcal{I}\) to a value \(v \in \mathcal{V}\).
Operations over arrays are:
\begin{itemize}
\item
  \(select : (Array~\mathcal{I}~\mathcal{V}) \times \mathcal{I}
  \rightarrow \mathcal{V}\) takes an array \(a\) and an index \(i\)
  and returns the value \(v\) stored at index \(i\) in \(a\),
\item
  \(store: (Array~\mathcal{I}~\mathcal{V}) \times \mathcal{I} \times
  \mathcal{V} \rightarrow (Array\ \mathcal{I}\ \mathcal{V})\) takes an
  array \(a\), an index \(i\), and a value \(v\), and returns the
  array \(a\) modified so that \(i\) maps to \(v\).
\end{itemize}
These functions satisfy the following constraints for all
\({a \in(Array~\mathcal{I}~\mathcal{V})}\), \({i \in \mathcal{I}}\),
\({j \in \mathcal{I}}\), \({v \in \mathcal{V}}\):
\begin{itemize}
  \item \(select~(store~a~i~v)~i = v\): a store of a value \(v\) at index \(i\)
        followed by a \(select\) at the same index returns the value \(v\);
  \item \(i \neq j \implies select~(store~a~i~v)~j = select~a~j\): a store at
        index \(i\) does not affect values stored at other indexes \(j\).
\end{itemize}

\section{Motivating Example: Constant-Time Analysis}\label{sec:motivating}
Consider the constant-time policy applied to the toy program in
\Cref{list:motivating}. The outcome of the conditional instruction at line~\ref{line:motivating:cf} and the
memory access at line~\ref{line:motivating:mem} are \emph{leaked}. We say that a \emph{leak is
  insecure} if it depends on the secret input. Conversely, a
\emph{leak is secure} if it does not depend on the secret
input. Constant-time holds for a program if there is no insecure leak.

\begin{figure}[htbp]
    \centering
    \begin{subfigure}[t]{.45\textwidth}
      \input{./ressources/listings/motivating}  
    \end{subfigure}
    \hfill %
    \begin{subfigure}[t]{.5\textwidth}
        \begin{figure}[!tbp]
  \centering
  
\begin{tabular}{@{}rl@{~~}r@{~~}l@{~~}l@{~~}l@{}}
\((init)\) & \multicolumn{3}{l@{}}{\dbainline{mem} \(\mapsto \simple{\mu_0}\)
             and \dbainline{ebp} \(\mapsto \simple{ebp}\)}\\
   \((1)\) & \dbainline{mem} \(\mapsto \pair{\mu_1}{\mu_1'}\)
             & where & \(\mu_1 \mydef store(\mu_0,ebp-8,\beta)\)
              &  and & \(\mu_1' \mydef store(\mu_0,ebp-8,\beta')\)\\
   \((2)\) & \dbainline{mem} \(\mapsto \pair{\mu_2}{\mu_2'}\)
           &  where & \(\mu_2 \mydef store(\mu_1,ebp-4,\lambda)\)
              & and & \(\mu_2' \mydef store(\mu_1',ebp-4,\lambda)\)\\
   \((3)\) & \dbainline{eax} \(\mapsto \pair{\alpha}{\alpha'}\)
           &  where & \(\alpha \mydef select(\mu_2,ebp-4)\)
              & and & \(\alpha' \mydef select(\mu_2',ebp-4)\)\\
   \((4)\) & \multicolumn{3}{l@{}}{\(leak~\pair{\alpha \neq 0}{\alpha' \neq 0}\)}
\end{tabular}
\caption{RelSE of program in \Cref{lst:limitation_std_relse} where
  \dbainline{mem} is the memory variable, \dbainline{ebp} and \dbainline{eax}
  are registers, \(\mu_0, \mu_1, \mu_1', \mu_2, \mu_2'\) are symbolic array
  variables, and \(ebp, \beta, \beta', \lambda, \alpha, \alpha'\) are symbolic
  bitvector variables}\label{fig:relse_motivating2}
\end{figure}

    \end{subfigure}
\end{figure}

\textbf{Example.} Consider two executions of this program with the same public
input: \((x,y)\) and \((x',y')\) where \(y = y'\). Intuitively, we can see that
the leakages produced at line~\ref{line:motivating:cf}, \(y = 0\) and \(y' = 0\), are necessarily equal
in both executions because \(y = y'\); hence this leak does not depend on the
secret input and is secure. On the contrary, the leakages \(x\) and \(x'\) at
line~\ref{line:motivating:mem} can differ in both executions (e.g.\ with \(x = 0\) and \(x' = 1\));
hence this leak depends on the secret input and is insecure.

\emph{The goal of an automatic analysis is to prove that the leak at
line~\ref{line:motivating:cf} is secure and to return  concrete input showing that the leak
at line~\ref{line:motivating:mem} is insecure.}

\subsection{Symbolic Execution and Self-Composition (SC)} %
Symbolic execution can be adapted to the case of constant-time, following the
self-composition principle. Instead of self-composing the program, we rather
self-compose the formula with a renamed version of itself plus a precondition
stating that the low inputs are
equal~\cite{DBLP:conf/iccsw/Phan13}.
Basically, this amounts to model \emph{two different executions
  following the same path and sharing the same low input} in a single
formula.
At each conditional statement, \emph{exploration queries} are sent to
the solver to determine satisfiable branches. %
Additionally, \emph{insecurity queries} specific to constant-time are sent
before each control-flow instruction and memory access to determine whether they
depend on the secret---if an insecurity query is satisfiable then a
constant-time violation is found.

As an illustration, let us consider the program in \Cref{list:motivating}.
First, we assign symbolic values to \lstinline{x} and \lstinline{y} and use
symbolic execution to generate a formula of the program until the first
conditional jump (line~\ref{line:motivating:cf}), resulting in the formula:
\(x = \beta ~\wedge~ y = \lambda ~\wedge~ c = (\lambda \neq 0)\). Second,
self-composition is applied on the formula with precondition
\(\lambda = \lambda'\) to constrain the low inputs to be equal in both
executions. Finally, a postcondition \(c \neq c'\) asks whether the value of the
condition can differ, resulting in the following insecurity query:
\begin{equation*}
  \lambda = \lambda' ~\wedge~
  \left(\begin{aligned}
      x = \beta ~\wedge~ y = \lambda ~\wedge~ c = (\lambda \neq 0) ~\wedge~ \\
      x' = \beta' ~\wedge~ y' = \lambda' ~\wedge~ c' = (\lambda' \neq 0) \\
    \end{aligned}\right)
  ~\wedge~ c \neq c'
\end{equation*}

This formula is sent to an SMT-solver. If the solver returns
\textsc{unsat}, meaning that the query is not satisfiable, then the condition
does not differ in both executions and thus is secure. Otherwise, it means that
the outcome of the condition depends on the secret and the solver returns a
counterexample satisfying the insecurity query. %
Here, the SMT-solver Z3~\cite{DBLP:conf/tacas/MouraB08}
answers that the query is \textsc{unsat} and we can conclude that the leak is
secure. %
With the same method, the analysis finds that the leak at line~\ref{line:motivating:mem} is insecure,
and returns two inputs (0,0) and (1,0), respectively leaking 0 and 1, as a
counterexample.

\myparagraph{Limits.} Basic self-composition suffers from two
weaknesses:
\begin{itemize}
  \item It generates insecurity queries at each control-flow instruction and
        memory access. Yet, as seen in the previous example, insecurity queries
        could be spared when expressions do not depend on
        secrets. %
  \item The whole original formula is duplicated so the size of the
        self-composed formula is twice the size of the original formula. Yet,
        because the parts of the program which only depend on public input are
        equal in both executions, the self-composed formula contains
        redundancies that are not exploited.
\end{itemize}

\subsection{Relational Symbolic Execution (RelSE)}  %
RelSE improves over self-composition by maximizing \emph{sharing} between the
pairs of
executions~\cite{DBLP:conf/icse/PalikarevaKC16,farinaRelationalSymbolicExecution2019}.
RelSE models two executions of a program \(P\) in the same symbolic execution
instance, let us call them \(p\) and \(p'\). During RelSE, variables of \(P\)
are mapped to \emph{relational expressions} which are either \emph{pairs} of
expressions or \emph{simple} expressions. Variables that \emph{must be equal} in
\(p\) and \(p'\) (i.e., the low inputs) are represented as \emph{simple}
expressions whereas those that \emph{may be different} (i.e., the secret input)
are represented as \emph{pairs} of expressions. %
Secret-dependencies are propagated (in a conservative way) through symbolic
execution using these relational expressions: if the evaluation of an expression
only involves simple operands, its result will be a simple expression, meaning
that it does not depend on secret, whereas if it involves a pair of
expressions, its result will be a pair of expressions.
This representation offers two main advantages. %
First, this enables sharing redundant parts of \(p\) and \(p'\), reducing the
size of the self-composed formula. Second, variables mapping to simple
expressions cannot depend on secret, which makes it possible to spare
insecurity queries. %

As an example, let us perform RelSE of the toy program in
\Cref{list:motivating}. Variable \lstinline{x} is assigned a pair of
expressions ${\pair{\beta}{\beta'}}$ and \lstinline{y} is assigned a
simple expression %
$\simple{\lambda}$. Note that the precondition that public variables
are equal is now implicit since we use the same symbolic variable in
both executions. At line~\ref{line:motivating:cf}, the conditional expression is evaluated to
$c = \simple{\lambda \neq 0}$ and we need to check that the leakage of
$c$ is secure. Since $c$ maps to a simple expression, we know by
definition that it does not depend on the secret, hence we can spare
the insecurity query.

\newercontent{Finally, when a control-flow instruction depends on a pair of
  expressions ${\pair{\varphi}{\varphi'}}$, an insecurity query
  \(\varphi \neq \varphi'\) is sent to the solver. If it is satisfiable, a
  vulnerability is reported and RelSE continues with the constraint
  ${\varphi = \varphi'}$ so the same vulnerability is not reported twice;
  otherwise the insecurity query is unsatisfiable, meaning that
  ${\varphi = \varphi'}$. %
  In both cases, the value of the control-flow instruction is the same in both
  executions and RelSE only needs to model pairs of executions following the
  same path.}

\emph{RelSE maximizes sharing between both executions and tracks
  secret-dependencies enabling to spare insecurity queries and reduce
  the size of the formula. }

\subsection{Challenge of binary-level analysis} %
Recall that,\bsse{} represents the memory as a special variable of type
\(\memtype\). Consequently, it is not possible to directly store relational
expressions in it. In order to store high inputs at the beginning of the
execution, we have to duplicate it. In other words the \emph{memory is always
  duplicated}. %
Consequently, every \(select\) operation will evaluate to a duplicated
expression, preventing to spare queries in many situations.

As an illustration, consider the compiled version of the previous program, given
in~\Cref{lst:limitation_std_relse}. The steps of RelSE on this program are given
in \cref{fig:relse_motivating2}. %
When the secret input is stored in memory at line~\ref{line:limitation_relse:init_high}, the array
representing the memory is duplicated. This propagates to the load expression in
\dbainline{eax} at line~\ref{line:limitation_relse:eax} and to the conditional expression at line~\ref{line:limitation_relse:cf_leak}. %
Intuitively, at line~\ref{line:limitation_relse:cf_leak}, \dbainline{eax} should be equal to the simple
expression \(\simple{\lambda}\) in which case we could spare the insecurity
query like in the previous example. %
However, because dependencies cannot be tracked in the array representing the
memory, \dbainline{eax} evaluates to a pair of \(select\) expression and we have to
send the insecurity query to the solver.

\myparagraph{Practical impact.} \Cref{tab:motivating} reports the
performance of constant-time analysis on an implementation of elliptic curve
Curve25519-donna~\cite{DBLP:conf/pkc/Bernstein06}. %
\textit{Both SC and RelSE fail to prove the program secure in less
  than 1h.}  %
\textit{RelSE} does reduce the number of queries compared to
\textit{SC}, but it is not sufficient.

\myparagraph{Our solution.} To mitigate this issue, we propose
dedicated simplifications for binary-level relational symbolic
execution that allow a precise tracking of secret-dependencies \emph{in
  the memory} (details in~\cref{sec:optims}). In the particular
example of~\cref{tab:motivating}, our prototype \brelse{} \emph{proves that the code is secure} in less than 20 minutes. Our
simplifications simplify all the queries, resulting in a
\(\times 2000\) speedup compared to standard RelSE and SC in terms of
number of instructions explored per second.

\section{Concrete Semantics and Leakage
  Model}\label{sec:concrete_semantics}
We present the leakage models  in an intermediate language called
Dynamic Bitvectors Automatas (DBA)~\cite{bardinBINCOAFrameworkBinary2011}.

\subsection{Dynamic Bitvectors Automatas} %
DBA~\cite{bardinBINCOAFrameworkBinary2011}, shown in \cref{table:dba_lang}, is the  representation
used in \binsec{}~\cite{DBLP:conf/wcre/DavidBTMFPM16} to model
programs and perform its analysis. 

\begin{figure}[htbp]
  \centering
      \begin{empheq}[box=\widefbox]{gather*}
      \begin{array}{@{}r@{~~}r@{~~}l@{\qquad}r@{~~}r@{~~}l@{}}
        prog  & ::= & \varepsilon ~~|~~ stmt~prog &
        stmt  & ::= & <\locvar,inst>\\
      \end{array}\\
      inst ~~::=~~ \assign{\var{v}}{expr}
             ~~|~~ \store{expr}{expr}
             ~~|~~ \dbaite{e}{\locvar_1}{\locvar_2}
             ~~|~~ \goto{expr}
             ~~|~~ \goto{\locvar}
             ~~|~~ \halt\\
      expr ~~::=~~ \var{v}
             ~~|~~ \bv{}
             ~~|~~ \unop~expr
             ~~|~~ expr~\binop~expr
             ~~|~~ \load{expr}\\
      \begin{array}{@{}r@{~~}r@{~~}l@{\qquad}r@{~~}r@{~~}l@{}}
        \unop & ::= & \neg ~~|~~ - &
        \binop & ::= & + ~~|~~ \times ~~|~~ \leq ~~|~~ \dots \\
      \end{array}
    \end{empheq}
    \protect\caption{The syntax of DBA programs, where \(\locvar, \locvar_{1}\)
      and \(\locvar_{2}\) are program locations, \(\var{v}\) is a variable
      and \(\bv{}\) is a value.}\label{table:dba_lang}
\end{figure}

Let \(\instrset\) denote the set of instructions and \(\locset\) the
set of program locations. %
A program \(\prog{} : \locset \rightarrow \instrset\) is a map from
locations to instructions. %
Values \texttt{bv} and variables \texttt{v} range over the set of
fixed-size bitvectors \(\bvset{n} := {\{0,1\}}^n\) (set of \(n\)-bit
words). %
A concrete configuration is a tuple
\(\cconf{\locvar}{\cregmap}{\cmem}\) where: %
\begin{itemize}
\item \(\locvar \in \locset\) is the current location, and
  \(\locmap{l}\) returns the current instruction,
\item \(\cregmap : \varset{} \to \bvset{n} \) is a register map that maps
  variables to their bitvector value,
\item \(\cmem : \bvset{32} \to \bvset{8}\) is the memory, mapping 32-bit
  addresses to bytes and accessed by operators \load{} and \store{}.
\end{itemize}
The initial configuration is given by
\(\cconfvar_0 \mydef \cconf{\locvar_0}{\cregmap_0}{\cmem_0}\) with
\(\locvar_0\) the address of the entrypoint of the program, \(r_0\) an
arbitrary register map, and \(m_0\) an arbitrary memory. Let
\(\haltlocset \subseteq \locset\) the set of halting program locations
such that
\(\locvar \in \haltlocset \iff \locmap{\locvar} = \texttt{halt}\).
For the evaluation of indirect jumps, we define a partial one-to-one
correspondence from bitvectors to program locations,
\(\toloc : \bvset{32} \rightharpoonup \locset\). If a bitvector
corresponds to an illegal location (e.g.\ non-executable address),
\(\toloc\) is undefined.

\subsection{Leakage Model}\label{sec:leakage_model}
The behavior of programs is modeled with an instrumented operational semantics
in which each transition is labeled with an explicit notion of leakage. Building
on \citeauthor{DBLP:conf/csfw/BartheGL18}'s
framework~\cite{DBLP:conf/csfw/BartheGL18}, the semantics is
parameterized with leakage functions, which permits to consider several leakage
models.

The set of program leakages, denoted \(\leakset\), is defined according to the
leakage model.
A transition from a configuration \(c\) to a configuration \(c'\)
produces a leakage \(\leakvar \in \leakset\), denoted
\(c \cleval{\leakvar} c'\). %
Analogously, the evaluation of an expression \(e\) in a configuration
\(\cconf{\locvar}{\cregmap}{\cmem}\), produces a leakage
\(\leakvar \in \leakset\), denoted
\(\ceconf{\cregmap}{\cmem}{e} \ceeval{\leakvar} bv\). %
The leakage of a multistep execution is the concatenation of leakages,
denoted \(\concat\), produced by individual steps. We use
\(\cleval{\leakvar}^k\) with \(k\) a natural number to denote \(k\)
steps in the concrete semantics.

\begin{figure*}[htbp]
\centering

\begin{mybox}[]{Expr}
  \begin{mathpar}

    \inferrule*[left={cst}]{ }{\ceconf{\cregmap}{\cmem}{\bv{}} \eeval{\epsilon} \bv{}}
    \quad
    
    \inferrule*[left={var}]{ }{\ceconf{\cregmap}{\cmem}{\texttt{v}}
      \eeval{\epsilon} \cregmap\ \texttt{v}} \and

    \inferrule*[left={unop}]{
      \ceconf{\cregmap}{\cmem}{e} \eeval{\leakvar} \bv{}%
    }{
      \ceconf{\cregmap}{\cmem}{\unop\ e} \eeval{\leakvar \concat \leakfunc_{\unop}(\bv{})} \unop\ \bv{}
    }

    \inferrule*[left={binop}]{
      \ceconf{\cregmap}{\cmem}{e_1} \eeval{\leakvar_1} \bv{}_{\texttt{1}}\\
      \ceconf{\cregmap}{\cmem}{e_2} \eeval{\leakvar_2} \bv{}_{\texttt{2}}
    }{
      \ceconf{\cregmap}{\cmem}{e_1\ \binop\ e_2}
      \eeval{\leakvar_1 \concat \leakvar_2 \concat \leakfunc_{\binop}(\bv{}_1,\bv{}_2)}
      \bv{}_{\texttt{1}}\ \binop\ \bv{}_{\texttt{2}}
    }

    \inferrule*[left={load}]
    {
      \ceconf{\cregmap}{\cmem}{e} \ceeval{\leakvar} \bv{} \\
    }
    {
      \ceconf{\cregmap}{\cmem}{\load{}\ e} %
      \ceeval{\leakvar \concat \leakfunc_{@}(\bv{})} \cmem~\bv{}
    }\and
  \end{mathpar}
\end{mybox}

\begin{mybox}[]{Instr}
  \begin{mathpar}

    \inferrule*[lab={halt}]{
      \locmap{\locvar} = \halt{}
    }{
      \cconf{\locvar}{\cregmap}{\cmem} \cleval{\leakfunc_{\bot}(\locvar)}
      \cconf{\locvar}{\cregmap}{\cmem}
    }\and
    
    \inferrule*[lab={s\_jump}]{
      \locmap{\locvar} = \goto{\locvar'}
    }{
      \cconf{\locvar}{\cregmap}{\cmem} \cleval{\leakfunc_{pc}(\locvar')}
      \cconf{\locvar'}{\cregmap}{\cmem}
    }\and
    
    \inferrule*[lab={d\_jump}]{
      \locmap{l} = \goto{e} \\
      \ceconf{\cregmap}{\cmem}{e} \ceeval{\leakvar} \bv{} \\
      \locvar' \mydef \toloc(\bv{})
    }
    {
      \cconf{\locvar}{\cregmap}{\cmem} %
      \cleval{\leakvar \concat \leakfunc_{pc}(\locvar')} %
      \cconf{\locvar'}{\cregmap}{\cmem}
    }\and

    \inferrule*[lab={ite-true}]{
      \locmap{l} = \dbaite{e}{\locvar_1}{\locvar_2}\\
      \ceconf{\cregmap}{\cmem}{e} \ceeval{\leakvar} \bv{} \\
      \bv{} \neq 0
    }
    {
      \cconf{\locvar}{\cregmap}{\cmem} %
      \cleval{\leakvar \concat \leakfunc_{pc}(\locvar{}_1)} %
      \cconf{\locvar{}_1}{\cregmap}{\cmem}
    } \mkern-15mu

    \inferrule*[lab={ite-false}]{
      \locmap{l} = \dbaite{e}{\locvar_1}{\locvar_2}\\
      \ceconf{\cregmap}{\cmem}{e} \ceeval{\leakvar} \bv{} \\
      \bv{} = 0
    }
    {
      \cconf{\locvar}{\cregmap}{\cmem} %
      \cleval{\leakvar \concat \leakfunc_{pc}(\locvar{}_2)} %
      \cconf{\locvar{}_2}{\cregmap}{\cmem}
    }\and

    \inferrule*[lab={assign}]{
      \locmap{l} = \assign{\var{v}}{e}\\
      \cconf{\locvar}{\cregmap}{\cmem}{e} \ceeval{t} \bv{}
    }{
      \cconf{\locvar}{\cregmap}{\cmem} \cleval{t}
      \cconf{\locvar+1}{\cregmap[\var{v} \mapsto{} \bv{}]}{\cmem}
    }\mkern-15mu
    
    \inferrule*[lab={store}]{
      \locmap{l} = \store{e}{e'}\\
      \ceconf{\cregmap}{\cmem}{e} \ceeval{\leakvar} \bv{} \\
      \ceconf{\cregmap}{\cmem}{e'} \ceeval{\leakvar'} \bv{}' \\
    }
    { %
      \cconf{\locvar}{\cregmap}{\cmem} %
      \cleval{\leakvar' \concat \leakvar \concat \leakfunc_{@}(\bv{}) \concat \leakfunc_{\mu}(\bv{},\texttt{bv'})} %
      \cconf{\locvar+1}{\cregmap}{\cmem[\bv{} \mapsto \bv']} %
    }
  \end{mathpar}
\end{mybox}

\caption{Concrete evaluation of DBA instructions and
  expressions.}\label{fig:dba_semantics_full}
\end{figure*}

The concrete semantics is given in \cref{fig:dba_semantics_full} and
is parameterized with leakage functions %
\(\leakfunc_{\unop}: \bvset{} \to \leakset\), %
\(\leakfunc_{\binop}: \bvset{} \times \bvset{} \to \leakset\),
\(\leakfunc_{@}: \bvset{32} \to \leakset\), %
\(\leakfunc_{pc}: \locset \to \leakset\), %
\(\leakfunc_{\bot}: \locset \to \leakset\), %
\(\leakfunc_{\mu}: \bvset{32} \times \bvset{8} \to \leakset\). %
A \emph{leakage model} is an instantiation of the leakage functions. We consider
the \emph{program counter}, \emph{memory obliviousness}, \emph{size
  noninterference} and \emph{constant-time}, leakage models defined
in~\cite{DBLP:conf/csfw/BartheGL18}. In addition, we define the
\emph{operand noninterference} and \emph{secret-erasure} leakage models.

\myparagraph{Program counter~\cite{DBLP:conf/csfw/BartheGL18}.} The programs
counter leakage model leaks the control flow of the program. %
The leakage of a program is a list of program location:
\(\leakset \mydef List(\locset)\). %
The outcome of conditional jumps and the address of indirect jumps %
is leaked: \(\leakfunc_{pc}(\locvar) = [\locvar]\). %
Other instructions produce an empty leakage.

\myparagraph{Memory
  obliviousness~\cite{DBLP:conf/csfw/BartheGL18}.} The memory
obliviousness leakage model leaks the sequence of memory addresses accessed
along the execution. %
The leakage of a program is a list of 32-bit bitvectors representing addresses
of memory accesses: \(\leakset \mydef List(\bvset{32})\). %
The addresses of memory load and stores are leaked: \(\leakfunc_{@}(e) =
[e]\). %
Other instructions produce an empty leakage.

\myparagraph{Operand noninterference.} The operand noninterference leakage model
leaks the value of operands (or part of it) for specific operators that execute
in non constant-time. %
The leakage of a program is a list of bitvector values:
\(\leakset \mydef List(\bvset{})\). Functions \(\leakfunc_{\unop}\) and
\(\leakfunc_{\binop}\) are defined according to architecture specifics. For
instance, in some architectures, the execution time of shift or rotation
instructions depends on the shift or rotation count\footnote{See
  \url{https://bearssl.org/constanttime.html}}. In this case, we can define
\(\leakfunc_{<<}(bv_1,bv_2) = [bv_2]\). %
Other instructions produce an empty leakage.

\myparagraph{Size
  noninterference~\cite{DBLP:conf/csfw/BartheGL18}.} The size
noninterference leakage model is a special case of operand noninterference where
the size of the operand is leaked. For instance, knowing that the execution time
of the division depends on the size of its operands, we can define
\(\leakfunc_{\div}(bv_1,bv_2) = [size(bv_1),size(bv_2)]\).
  
\myparagraph{Constant-time~\cite{DBLP:conf/csfw/BartheGL18}.} The
constant-time leakage model combines the program counter and the memory
obliviousness security policies. The set of leakage is defined as
\(\leakset \mydef List(\locset~\cup~\bvset{32})\). %
The control flow is leaked %
\(\leakfunc_{pc}(\locvar) = [\locvar]\), %
as well as the memory accesses %
\(\leakfunc_{@}(e) = [e]\). %
Other instructions produce an empty leakage.
Note that some definitions of constant-time also include size
noninterference~\cite{DBLP:conf/csfw/BartheGL18} or operand
noninterference~\cite{BearSSLConstantTimeCrypto}.

\myparagraph{Secret-erasure.} %
The secret-erasure leakage model leaks the index and value of every store
operation---values that are overwritten are filtered-out from the leakage trace
\newcontent{(as we formalize later in \cref{def:secret-erasure})}. %
With regard to secret dependent control-flow, we define a conservative notion of
secret-erasure forbidding to branch on secrets---thus including the program
counter policy.
The leakage of a program is a list of locations and pairs of bitvector values:
\(\leakset \mydef List(\locset~\cup~(\bvset{32} \times \bvset{8}))\).
The control flow is leaked %
\(\leakfunc_{pc}(\locvar) = [\locvar]\), %
as well as the end of the program %
\(\leakfunc_{\bot}(\locvar) = [\locvar]\), %
and the list of store operations %
\(\leakfunc_{\mu}(bv, bv') = [(bv, bv')]\). %
Other instructions produce an empty leakage.

\subsection{Secure program}\label{sec:secure_program} %
Let \(\highvarset \subseteq \varset\) be the set of high (secret)
variables and \(\lowvarset = \varset \setminus \highvarset\) be the
set of low (public) variables. Analogously, we define
\(\highmemset \subseteq \bvset{32}\) (resp.
\(\lowmemset = \bvset{32} \setminus \highmemset\)) as the addresses
containing high (resp.\ low) input in the initial memory. %
The \emph{low-equivalence relation} over concrete configurations
\(\cconfvar\) and \(\cconfvar'\), denoted
\(\cconfvar \loweq \cconfvar'\), is defined as the equality of low
variables and low parts of the memory. %
Formally, two configurations %
\(\cconfvar \mydef \cconf{\locvar}{\cregmap}{\cmem}\) and %
\(\cconfvar' \mydef \cconf{\locvar'}{\cregmap'}{\cmem'}\) are
low-equivalent if and only if %
for all variable \(v \in \lowvarset\),
\(\cregmap\ v = \cregmap'\ v\) and for all address
\(a \in \lowmemset\),
\(\cmem\ a = \cmem'\ a\).

\newcontent{Security is expressed as a form of observational noninterference
  that is parameterized by the leakage model. Intuitively it guarantees that
  low-equivalent configurations produce the same observations, according to the
  leakage model:}
\begin{definition}[Observational noninterference (ONI)]
  A program is observationally noninterferent if and only if for all
  low-equivalent initial configurations \(\cconfvar_0\) and
  \(\cconfvar'_0\), and for all \(k \in
  \mathbb{N}\), %
  \begin{equation*}
    \cconfvar_0 \loweq \cconfvar_0'\ %
    ~\wedge~ \cconfvar_0 \cleval{\leakvar}^k \cconfvar_k %
    ~\wedge~ \cconfvar'_0 \cleval{\leakvar'}^k \cconfvar'_k%
    \implies \filter(\leakvar) = \filter(\leakvar') %
  \end{equation*}
  The property is parameterized by a function,
  \(\filter : \leakset \to \leakset\), that further restricts the
  leakage.
\end{definition}

\begin{definition}[Constant-time]\label{def:ct} A program is
  constant-time (CT) if it is ONI in the constant-time leakage model with
  \(\filter\) set to the identity function.
\end{definition}

\begin{definition}[Secret-erasure]\label{def:secret-erasure}
  A program enforces secret-erasure if it is ONI in the secret-erasure leakage
  model with \(\filter\) set to the identity function for control-flow leakages
  and only leaking store values at the end of the program
  (\(\locvar \in \haltlocset\)), restricting to values that have not been
  overwritten by a more recent store.
  Formally, \(\filter(\leakvar) = \filter'(\leakvar, m_{\varepsilon})\) where
  \(m_{\varepsilon}\) is the empty partial function from \(\bvset{32}\) to
  \(\bvset{8}\) and \(\filter'(\leakvar, m_{acc})\) is defined as:
  \newercontent{\begin{mathpar}
    \inferrule*[left=filter-empty]{}{\filter'(\varepsilon, m_{acc}) = \varepsilon}\and
    \inferrule*[left=filter-store]{}{\filter'((\mathtt{a}, \mathtt{v}) \concat \leakvar, m_{acc}) = filter'(\leakvar, m_{acc}[\mathtt{a} \mapsto \mathtt{v}])}\and
    \inferrule*[left=filter-cf]{\locvar \not\in \haltlocset}{\filter'(\locvar \concat \leakvar, m_{acc}) = \locvar \concat filter'(\leakvar, m_{acc})}\and
    \inferrule*[left=filter-halt]{\mathtt{a_i} \in dom(m_{acc}) \\ \locvar \in \haltlocset}{\filter'(\locvar \concat \leakvar, m_{acc}) = m_{acc}(\mathtt{a_0}) \concat \dots \concat m_{acc}(\mathtt{a_n})}
  \end{mathpar}}
  \newercontent{Intuitively, \(m_{acc}\) is a function used to accumulate values written to
  the memory and leak them at the end of a program. %
  The \textsc{filter-store} rule accumulates a store operation \((a, c)\) from
  the leakage trace into the function \(m_{acc}\). Notice that because
  \(m_{acc}\) is a function, if \(m_{acc}(\mathtt{a})\) is already defined, its
  value will be replaced by \(\mathtt{v}\) after
  \(m_{acc}[\mathtt{a} \mapsto \mathtt{v}]\). The \textsc{filter-cf} rule adds
  control-flow label to the final leakage trace. Finally, the
  \textsc{filter-halt} rule is evaluated when a final location is reached and
  leaks all the store values accumulated in \(m_{acc}\). %
  For example,
  \(\filter((\mathtt{a}, \mathtt{x}) \concat (\mathtt{b}, \mathtt{y}) \concat (\mathtt{a}, \mathtt{z}) \concat \locvar_\bot)\)
  where \(\locvar_\bot \in \haltlocset\) will return the leakage
  \(\mathtt{y} \cdot \mathtt{z}\).}
\end{definition}

\section{Binary-level Relational Symbolic
  Execution}\label{sec:std_relse}
Binary-level symbolic execution relies on the quantifier-free theory
of fixed-size bitvectors and arrays
(\abv{}~\cite{barrettSMTLIBStandardVersion2017}). %
We let \(\beta\), \(\beta'\), \(\lambda\), \(\varphi\), range over the
set of formulas $\formulaset$ in the \abv{} logic.  A
\emph{relational} formula \(\rel{\varphi}\) is either a \abv{} formula
\(\simple{\varphi}\) or a pair \(\pair{\varphi_l}{\varphi_r}\) of two
\abv{} formulas.  We denote \(\lproj{\rel{\varphi}}\) (resp.\
\(\rproj{\rel{\varphi}}\)), the projection on the left (resp.\ right)
value of \(\rel{\varphi}\). If \(\rel{\varphi} = \simple{\varphi}\),
then \(\lproj{\rel{\varphi}}\) and \(\rproj{\rel{\varphi}}\) are both
defined as \(\varphi\).  Let \(\rlift{\formulaset}\) be the set of
relational formulas and \(\rlift{\bvtype{n}}\) be the set of
relational symbolic bitvectors of size $n$.

\myparagraph{Symbolic configuration.}\label{sec:symbolic-configuration1} %
Our symbolic evaluation restricts to pairs of traces following the same
path---which is sufficient for constant-time and our definition of
secret-erasure. Therefore, a symbolic configuration only needs to consider a single
program location \(l \in Loc\) at any point of the execution. %
A \emph{symbolic configuration} is of the form
\(\iconfold{l}{\regmap}{\smem}{\pc{}}\) where:
\begin{itemize}
\item \(l \in Loc\) is the current program point,
\item \(\regmap{} : \varset{} \rightarrow \rlift{\formulaset}\) is a
  symbolic register map, mapping variables from a set \(\varset{}\) to
  their symbolic representation as a relational formula in
  \(\rlift{\formulaset}\),
\item \(\smem : \memtype \times \memtype\) is the symbolic memory---a
  pair of arrays of values in \(\bvtype{8}\) indexed by addresses in
  \(\bvtype{32}\),
\item \(\pc{} \in \formulaset\) is the path predicate---a conjunction
  of conditional statements and assignments encountered along a path.
\end{itemize}

\myparagraph{Symbolic evaluation} of instructions, denoted
\(\sconfvar \ieval{} \sconfvar'\) where $\sconfvar$ and $\sconfvar'$
are symbolic configurations, is given in
\Cref{fig:eval_instr_sha_full}.
The evaluation of an expression \(expr\) to a relational formula
\(\rel{\varphi}\), is denoted
\(\econfold{\regmap}{\smem}{\pc}{expr} \eeval{} \rel{\varphi}\). %
A model \(M\) assigns concrete values to symbolic variables. %
The satisfiability of a formula \(\pi\) with a model \(M\) is denoted
$M \sat{\pi}$. In the implementation, an SMT-solver is used to determine
satisfiability of a formula and obtain satisfying model, denoted
$M \solver{\pi}$. Whenever the model is not needed for our purposes, we leave it
implicit and simply write $\sat{\pi}$ or $\solver{\pi}$ for satisfiability.

The symbolic evaluation is parameterized by \emph{symbolic leakage predicates} %
\(\sleakfunc_{\unop}, \sleakfunc_{\binop}, \sleakfunc_{@}, \sleakfunc_{dj}, \sleakfunc_{ite}\)
and %
\(\sleakfunc_{\bot}\) %
which are instantiated according to the leakage model (details on the
instantiation will be given in \cref{sec:leakage_predicates}). %
\newcontent{Symbolic leakage predicates take as input a path predicate and
  expressions that can be leaked, and return \(true\) if and only if no secret
  data can leak.} The rules of the symbolic evaluation are guarded by these
symbolic leakage predicates: a rule can only be evaluated if the associated
leakage predicate evaluates to \(true\), \newcontent{meaning that no secret can
  leak}. If a symbolic leakage predicate evaluates to \(false\) then a secret
leak is detected and \newcontent{the analysis is stuck}. Detailed explanations
of (some of) the symbolic evaluation rules follow:

\newcontent{\rulename{cst} is the evaluation of a constant \texttt{bv} and
  returns the corresponding symbolic bitvector as a simple expression
  \(\simple{bv}\).}

\rulename{load} is the evaluation of a load expression. %
It returns a pair of logical \(select\) formulas from the pair of
symbolic memories \(\smem\) (the box in the hypotheses should be
ignored for now, it will be explained in~\cref{sec:optims}). Note that
the returned expression is \emph{always duplicated} as the \(select\)
must be performed in the left and right memories independently.

\rulename{d\_jump} is the evaluation of an indirect jump. %
It finds a concrete value $l'$ for the jump target, and updates the path
predicate and the next location. Note that this rule is nondeterministic as
\(l'\) can be any concrete value satisfying the path constraint. %

\rulename{ite-true} is the evaluation of a conditional jump when the
expression evaluates to \(true\) (the \(false\) case is
analogous). %
If the condition guarding the \(true\)-branch is satisfiable, the rule
updates the path predicate and the next location to explore it.

\rulename{assign} is the evaluation of an assignment. It allocates a
fresh symbolic variable to avoid term-size explosion, and updates the
register map and the path predicate. %
The content of the box in the hypothesis and the rule
\rulename{canonical-assign} should be ignored for now and will be
explained in~\cref{sec:optims}.

\rulename{store} is the evaluation of a store instruction. %
It evaluates the index and value of the store and updates the symbolic
memories and the path predicate with a logical \(store\) operation.

\begin{figure*}[htbp]
\begin{mybox}[]{Expr}
  \begin{mathpar}

    \inferrule*[left={cst}]{ }{\econfold{\regmap}{\smem}{\pc}{\var{bv}} \eeval{} \simple{bv}}
    \quad
    \inferrule*[left={var}]{ }{\econfold{\regmap}{\smem}{\pc}{\var{v}}
      \eeval{} \regmap\ \texttt{v}} \and

    \inferrule*[left={unop}]{
      \econfold{\regmap}{\smem}{\pc}{e} \eeval{} \rel{\phi} \\
      \rel{\varphi} \mydef{} \sunop{}\ \rel{\phi}\\
      \sleakfunc_{\unop}(\pc, \rel\phi)
    }{
      \econfold{\regmap}{\smem}{\pc}{\unop{}\ e} \eeval{} \rel{\varphi}
    }

    \inferrule*[left={binop}]{
      \econfold{\regmap}{\smem}{\pc}{e_1} \eeval{} \rel{\phi}\\
      \econfold{\regmap}{\smem}{\pc}{e_2} \eeval{} \rel{\psi}\\
      \rel{\varphi} \mydef{} \rel{\phi{}}\ \sbinop{}\ \rel{\psi}\\
      \sleakfunc_{\binop}(\pc, \rel\phi, \rel\psi)
    }{
      \econfold{\regmap}{\smem}{\pc}{e_1\ \binop{}\ e_2} \eeval{} \rel{\varphi}
    }

    \inferrule*[left={load}]{
      \econfold{\regmap}{\smem}{\pc}{e_{idx}} \eeval{} \rel{\iota}\\
      \rulebox{\rel{\varphi} \mydef{} \pair{select(\lproj{\smem}, \lproj{\rel{\iota}}) }{
          select(\rproj{\smem}, \rproj{\rel{\iota}})}} \\
      \sleakfunc_{@}(\pc, \rel{\iota})
    }{
      \econfold{\regmap}{\smem}{\pc}{\load{}\ e_{idx}} \eeval{} \rel{\varphi}
    }\and
  \end{mathpar}
\end{mybox}

\begin{mybox}[]{Instr}
  \begin{mathpar}

    \inferrule*[left={halt}]{
      \locmap{l} = \halt{}\\
      \sleakfunc_{\bot}(\pc, \smem)
    }{
      \iconfold{l}{\regmap}{\smem}{\pc{}} \ieval{}
      \iconfold{l}{\regmap}{\smem}{\pc{}}
    }\and

    \inferrule*[left={s\_jump}]{
      \locmap{l} = \goto{l'}
    }{
      \iconfold{l}{\regmap}{\smem}{\pc{}} \ieval{}
      \iconfold{l'}{\regmap}{\smem}{\pc{}}
    }\and

    \inferrule*[left={d\_jump}]{
      \locmap{l} = \goto{e}\\
      \econfold{\regmap}{\smem}{\pc}{e}  \eeval{} \rel{\varphi}  \\
      M\solver{\pi{} \wedge \lproj{\rel{\varphi{}}} = \rproj{\rel{\varphi{}}}}  \\ %
      l' \mydef \toloc(M(\lproj{\rel{\varphi{}}})) \\
      \pi{}' \mydef{} \pi{} \wedge{} (\lproj{\rel{\varphi{}}} = \rproj{\rel{\varphi{}}} = M(\lproj{\rel{\varphi{}}}))\\
      \sleakfunc_{dj}(\pc, \rel{\varphi})\\
    }{
      \iconfold{l}{\regmap}{\smem}{\pc} \ieval{}
      \iconfold{l'}{\regmap}{\smem}{%
       \pi{}'}
    }\and

    \inferrule*[lab={ite-true}]{
      \locmap{l} = \dbaite{e}{l_{true}}{l_{false}}\\
      l' \mydef  l_{true} \\
      \econfold{\regmap}{\smem}{\pc}{e } \eeval{} \rel{\varphi} \\
      \pc{}' \mydef{} \pc{} \wedge{} %
      (\lproj{\rel{\varphi{}}} \neq 0) \wedge (\rproj{\rel{\varphi{}}} \neq 0)\\
      \solver{\pc'} \\
      \sleakfunc_{ite}(\pc, \rel{\varphi})
   }{
      \iconfold{l}{\regmap}{\smem}{\pc{}} \ieval{}
      \iconfold{l'}{\regmap}{\smem}{\pc{}'}
    }\and

    \inferrule*[lab={ite-false}]{
      \locmap{l} = \dbaite{e}{l_{true}}{l_{false}}\\
      l' \mydef  l_{false} \\
      \econfold{\regmap}{\smem}{\pc}{e } \eeval{} \rel{\varphi} \\
      \pc{}' \mydef{} \pc{} \wedge{} %
      (\lproj{\rel{\varphi{}}} = 0) \wedge (\rproj{\rel{\varphi{}}} = 0)\\
      \solver{\pc'} \\
      \sleakfunc_{ite}(\pc, \rel{\varphi})
   }{
      \iconfold{l}{\regmap}{\smem}{\pc{}} \ieval{}
      \iconfold{l'}{\regmap}{\smem}{\pc{}'}
    }\and

    \inferrule*[left={assign}]{
      \locmap{l} = \assign{\var{v}}{e}\\
      \econfold{\regmap}{\smem}{\pc}{e} \eeval{} \rel{\varphi} \\
      \rulebox{\neg \canonical(\rel{\varphi})} \\
      \rel{\varphi}' \mydef \fresh\\
      \regmap' \mydef{} \regmap[v \mapsto{} \rel{\varphi}'] \\
      \pc' \mydef{} \pc{} \wedge{} (\lproj{\rel{\varphi}}' = \lproj{\rel{\varphi}})
      \wedge{} (\rproj{\rel{\varphi}}' = \rproj{\rel{\varphi}})
    }{
      \iconfold{l}{\regmap}{\smem}{\pc{}} \ieval{}
      \iconfold{l+1}{\regmap'}{\smem}{\pc'}
    }\and

    \inferrule*[left={canonical-assign}]{
      \locmap{l} = \assign{\var{v}}{e}\\
      \econfold{\regmap}{\smem}{\pc}{e} \eeval{} \rel{\varphi} \\
      \canonical(\rel{\varphi}) \\
      \regmap' \mydef{} \regmap[v \mapsto{} \rel{\varphi}]
    }{
      \iconfold{l}{\regmap}{\smem}{\pc{}} \ieval{}
      \iconfold{l+1}{\regmap'}{\smem}{\pc}
    }\and
    
    \inferrule*[lab={store}]{
      \locmap{l} = \store{e_{idx}}{e_{val}} \\
      \econfold{\regmap}{\smem}{\pc}{e_{idx}}  \eeval{} \rel{\iota} \\
      \econfold{\regmap}{\smem}{\pc}{e_{val}} \eeval{} \rel{\nu} \\
      \smem' \mydef{}  \pair{ store(\lproj{\smem},\lproj{\rel{\iota}},\lproj{\rel{\nu}})}{store(\rproj{\smem},\rproj{\rel{\iota}},\rproj{\rel{\nu}})} \\
      \pc' \mydef{} \pc{} \wedge \lproj{\smem}' = store(\lproj{\smem},\lproj{\rel{\iota}},\lproj{\rel{\nu}})
        \wedge \rproj{\smem}' = store(\rproj{\smem},\rproj{\rel{\iota}},\rproj{\rel{\nu}})\\
      \sleakfunc_{@}(\pc, \rel{\iota})
    }{
      \iconfold{l}{\regmap}{\smem}{\pc{}} \ieval{}
      \iconfold{l + 1}{\regmap}{\smem'}{\pc{}'}
    }
  \end{mathpar}
\end{mybox}
\caption{Symbolic evaluation of DBA instructions and expressions where
  \(\fresh\) returns a fresh symbolic variable,
  \(\canonical(\protect\rel{\varphi})\) is \(true\) if \(\protect\rel{\varphi}\)
  is in canonical form; and \(\sunop\) (resp. \(\sbinop\)) is the logical
  counterpart of the concrete operator \(\unop\) (resp. \(\binop\)), lifted to
  relational expressions.}
\label{fig:eval_instr_sha_full}
\end{figure*}

\subsection{Security evaluation}\label{sec:security_eval} %
\newcontent{For the security evaluation, we start by defining a general
  predicate, $\secleak$, which takes as an input a path predicate and a
  relational expression that is leaked, and returns \(true\) if and only if no
  secret data can leak (cf.\ \cref{sec:secleak}). Then, we use this $\secleak$
  predicate to instantiate symbolic leakage predicates
  \(\sleakfunc_{\unop}, \sleakfunc_{\binop}, \sleakfunc_{@}, \sleakfunc_{dj}, \sleakfunc_{ite}\)
  and %
  \(\sleakfunc_{\bot}\) according to the leakage model (cf.\
  \cref{sec:leakage_predicates}).}

\subsubsection{Predicate \(\secleak\)}\label{sec:secleak}
We define a predicate
$\secleak : \rlift{\formulaset} \times \formulaset \to Bool$ which ensures that
a relational formula does not differ in its right and left components, meaning
that it can be leaked securely:
\begin{alignat*}{2}
  \secleak(\rel{\varphi}, \pc) &
  &=
  \begin{cases}
    true & {\sf if~} \rel{\varphi} = \simple{\varphi}  \\
    true & {\sf if~} \rel{\varphi} = \pair{\varphi_l}{\varphi_r}
    \wedge \unsatsolver{\pi \wedge \varphi_l \neq \varphi_r}  \\
    false & {\sf otherwise}
  \end{cases} 
\end{alignat*}

By definition, a simple expression \(\simple{\varphi}\) does not depend on
secrets and can be leaked securely. Thus it \emph{spares an insecurity query} to
the solver. %
However, a duplicated expression \(\pair{\varphi_l}{\varphi_r}\) \emph{may}
depend on secrets. Hence \emph{an insecurity query must be sent to the solver}
to ensure that the leak is secure.

\subsubsection{Instantiation of leakage predicates}\label{sec:leakage_predicates}
Symbolic leakage predicates are instantiated according to the concrete leakage
models defined in \cref{sec:leakage_model}. %
Note that the analysis can be easily be extended to other leakage models by
defining symbolic leakage predicates accordingly.

\myparagraph{Program counter.} Symbolic leakage predicates ensure that the
outcome of control-flow instructions and the addresses of indirect jumps are the
same in both executions: %
\(\sleakfunc_{dj}(\pc, \rel{\varphi}) = \secleak(\rel{\varphi}, \pc)\) and
\(\sleakfunc_{ite}(\pc, \rel{\varphi}) = \secleak(\rlift{eq_0}\ \rel{\varphi}, \pc)\)
where \(eq_0\ x\) returns \(true\) if \(x = 0\) and \(false\) otherwise, and
\(\rlift{eq_0}\) is the lifting of \(eq_0\) to relational formulas. Other
symbolic leakage predicates evaluate to true.

\myparagraph{Memory obliviousness.} Symbolic leakage predicates ensure that store and
load indexes are the same in both executions: %
\(\sleakfunc_{@}(\pc, \rel{\varphi}) = \secleak(\rel{\varphi}, \pc)\). %
Other symbolic leakage predicates evaluate to true.

\myparagraph{Operand noninterference.} Symbolic leakage predicates ensure that
operands (or part of them) are the same in both executions for specific
operators that execute in non constant-time. %
For instance, for architectures in which the execution time of shift depends on
the shift count,
\(\sleakfunc_{<<}(\pc, \rel{\varphi},\rel{\phi}) =
\secleak(\rel{\varphi}, \pc)\).
Other symbolic leakage predicates evaluate to true.

\myparagraph{Size noninterference} (special case of operand noninterference).
Symbolic leakage predicates ensure that the size of operands is the same in both
executions for specific operators that execute in non constant-time. For
instance for the division, we have
\(\sleakfunc_{\div}(\pc, \rel{\varphi}, \rel{\psi}) = \secleak(\rlift{size}\ \rel{\varphi}, \pc)\),
where \(size : \bvtype{} \to \bvtype{}\) is a function that returns the size of
a symbolic bitvector and \(\rlift{size}\) its lifting to relational expressions.
Other symbolic leakage predicates evaluate to true.
  
\myparagraph{Constant-time.} This policy is a combination of the program counter
and the memory obliviousness policies. Symbolic leakage predicates
\(\sleakfunc_{dj}\) and \(\sleakfunc_{ite}\) are defined like in the program
counter policy, while \(\sleakfunc_{@}\) is defined like in the memory
obliviousness policy. Other symbolic leakage predicates evaluate to true.

\myparagraph{Secret-erasure.} %
At the end of the program, a symbolic leakage predicate ensures that the parts
of memory that have been written by the program are the same in both executions:
\begin{equation*}
  \sleakfunc_{\bot}(\pc, \smem) =
  \bigwedge\limits_{\iota \in addr(\smem)} \secleak(\pair{select(\lproj{\smem},\iota)}{select(\rproj{\smem},\iota)}, \pc)
\end{equation*}
where \(addr(\smem)\) is the list of store indexes in \(\smem\).

\subsubsection{Specification of high and low input.} %
By default, the content of the memory and registers is low so the user has to
specify memory addresses that initially contain secret inputs. Addresses of high
variables can be specified as offsets from the initial stack pointer
\texttt{esp} (which requires manual reverse engineering), or using dummy
functions to annotate secret variables at source level (which is easier but only
applies to libraries or requires access to source code).

\subsubsection{Bug-finding.} A vulnerability is found when the function
\(\secleak(\rel{\varphi}, \pc)\) evaluates to \emph{false}. In this case, the
insecurity query is satisfiable and the solver returns a model \(M\) such that %
\(M \solver{\pi \wedge (\lproj{\rel{\varphi}} \neq \rproj{\rel{\varphi}})}\).
The model $M$ assigns concrete values to variables that satisfy the insecurity
query. Therefore it can be returned as a concrete counterexample that triggers
the vulnerability, along with the current location of the vulnerability.

\subsection{Optimizations for binary-level symbolic
  execution}\label{sec:optims}

Relational symbolic execution does not scale in the context of
binary-level analysis (see \textit{RelSE} in
\Cref{tab:scale_total_summary}). In order to achieve better
scalability, we enrich our analysis with an optimization, called
\emph{on-the-fly-read-over-write} (\textit{FlyRow} in
\cref{tab:scale_optims}), based on
\emph{read-over-write}~\cite{DBLP:conf/lpar/FarinierDBL18}.  This
optimization simplifies expressions and resolves load operations ahead
of the solver, often avoiding to resort to the duplicated memory and
allowing to spare insecurity queries. %
We also enrich our analysis with two further optimizations, called
\emph{untainting} and \emph{fault-packing} (\textit{Unt} and
\textit{FP} in \cref{tab:scale_optims}), specifically targeting RelSE
for information flow analysis.

\subsubsection{On-the-fly read-over-write}\label{sec:row}

Solver calls are the main bottleneck of symbolic execution, and reasoning about
\(store\) and \(select\) operations in arrays is particularly
challenging~\cite{DBLP:conf/lpar/FarinierDBL18}. Read-over-write
(Row)~\cite{DBLP:conf/lpar/FarinierDBL18} is a simplification for the theory of
arrays that efficiently resolves \(select\) operations. It is particularly
efficient in the context of binary-level analysis where the memory is
represented as an array and formulas contain many \(store\) and \(select\)
operations.
The standard read-over-write optimization~\cite{DBLP:conf/lpar/FarinierDBL18}
has been implemented as a solver-pre-processing, simplifying a formula
before sending it to the solver. While it has proven to be very
efficient to simplify individual formulas of a single
execution~\cite{DBLP:conf/lpar/FarinierDBL18}, we show in \cref{sec:comparison_se}
that it does not scale in the context of relational reasoning, where
formulas model two executions and a lot of queries are sent to the
solver. %

Thereby, we introduce \emph{on-the-fly read-over-write} (\textit{FlyRow}) to
track secret-dependencies in the memory and spare insecurity queries in the
context of information flow analysis. By keeping track of \emph{relational
  \(store\) expressions} along the execution, it can resolve \(select\)
operations---often avoiding to resort to the duplicated memory---and drastically
reduces the number of queries sent to the
solver, %
improving the performance of the analysis.

\myparagraph{Memory Lookup.} %
The symbolic memory can be seen as the history of the successive \(store\)
operations beginning with the initial memory \(\mu_0\). %
Therefore, a memory \(select\) can be resolved by going back up the history and
comparing the index to load, with indexes previously stored. %
Our FlyRow optimization consists in replacing selection in the memory
(\Cref{fig:eval_instr_sha_full}, \rulename{load} rule, boxed hypothesis) by a
new function %
\(\lookup : (\memtype \times \memtype) \times \rlift{\bvtype{32}} \to \rlift{\bvtype{8}}\)
which takes a relational memory and a relational index, and returns the
relational bitvector value stored at that index. %
\iftechreport%
\newercontent{For simplicity, we
  define the function for simple indexes and detail the lifting to relational
  indexes in \cref{app:rel-lookup}}:
\else
\newercontent{For simplicity we
  define the function for simple indexes and detail the lifting to relational
  indexes in the companion technical report~\cite{techreportbinsecrel}:}
\fi
\begin{alignat*}{2}
  \lookup(\smem_0, \iota) =& ~~\pair{select(\lproj{{\smem_0}}, \iota)}{select(\rproj{{\smem_0}}, \iota)}\\
  \lookup(\smem_n, \iota) =&
  \begin{cases}
    \simple{\varphi_l}  & {\sf if~} \compare(\iota,\kappa)  \wedge \compare(\varphi_l,\varphi_r) \\
    \pair{\varphi_l}{\varphi_r}  & {\sf if~} \compare(\iota,\kappa) \wedge \neg\compare(\varphi_l,\varphi_r)\\
    \lookup(\smem_{n-1}, \iota)& {\sf if~} \neg\compare(\iota,\kappa)\\ 
    \pair{select(\lproj{{\smem_n}}, \iota)}{select(\rproj{{\smem_n}}, \iota)}& {\sf if~} \compare(\iota,\kappa) = \bot
  \end{cases} \\
  \text{~where~} \smem_n \mydef& \pair{store(\lproj{{\smem_{n-1}}},\kappa,\varphi_l)}{store(\rproj{{\smem_{n-1}}},\kappa,\varphi_r)} \\
\end{alignat*}
where \(\compare(\iota,\kappa)\) is a comparison function relying on
\emph{syntactic term equality}, which returns true (resp.\ false) only
if \(\iota\) and \(\kappa\) are equal (resp.\ different) in any
interpretation. If the terms are not comparable, it is undefined,
denoted \(\bot\).

\begin{example}[Lookup]\label{ex:lookup}
  Let us consider the memory: \tikzstyle{box1}=[draw, fill=gray!10, text centered, minimum width = 2em, minimum height=1.8em,scale=0.8]
\tikzstyle{box2}=[draw, text centered, minimum width = 1.5em, minimum height=1.8em,scale=0.8]
\begin{center}
  \begin{tikzpicture}[node distance=0cm,outer sep = 0pt]
    \node (I) [] {};
    \node (L) [left=0pt of I] {\(\smem ~~=\)};
    \node (S)  [right=0pt of L] {};
    \node (C)  [box1,right = 0pt of S] {\(ebp-4\)};
    \node (C1) [box2,anchor=north west] at (C.north east) {$\simple{\lambda}$};
    \node (B)  [box1,right = 1.8em of C1] {\(ebp-8\)};
    \node (B1) [box2,anchor=north west] at (B.north east) {$\pair{\beta}{\beta'}$};
    \node (D)  [box1,right = 1.8em of B1] {\(esp\)};
    \node (D1) [box2,anchor=north west] at (D.north east) {$\simple{ebp}$};
    \node (A)  [right = 1.8em of D1] {\([~]\)};
    \draw[->] (C1) -- (B);
    \draw[->] (B1) -- (D);
    \draw[->] (D1) -- (A);
  \end{tikzpicture}
\end{center}

  \begin{itemize}
  \item A call to \(\lookup(\rel{\mu}, ebp - 4)\) returns \(\lambda\).
    \item A call to \(\lookup(\rel{\mu}, ebp - 8)\) first compares the indexes
          \([ebp-4]\) and \([ebp-8]\). Because it can determine that these
          indexes are \emph{syntactically distinct}, the function moves to the
          second element, determines the syntactic equality of indexes and
          returns \(\pair{\beta}{\beta'}\).
    \item A call to \(\lookup(\rel{\mu}, esp)\) tries to compare the indexes
          \([ebp-4]\) and \([esp]\). Without further information, the equality
          or disequality of \(ebp\) and \(esp\) cannot be determined, therefore
          the lookup is aborted and the \(select\) operation cannot be
          simplified.
  \end{itemize}
\end{example}

\myparagraph{Term rewriting.} %
To improve the conclusiveness of syntactic equality checks for the
read-over-write, the terms are assumed to be in \emph{normalized} form
\(\beta + o\) where \(\beta\) is a base (i.e., an expression on
symbolic variables) and \(o\) is a constant offset. %
The comparison of two terms \(\beta + o\) and \(\beta' + o'\) in
normalized form can be efficiently computed as follows: if the bases
\(\beta\) and \(\beta'\) are syntactically equal, then return
\(o = o'\), otherwise the terms are not comparable. %
In order to apply \textit{FlyRow}, we normalize all the formulas
created during the symbolic execution using \emph{rewriting rules}
similar as those defined in~\cite{DBLP:conf/lpar/FarinierDBL18}.  An excerpt
of these rules is given in \cref{fig:normalize}. %
Intuitively, these rewriting rules put symbolic variables at the
beginning of the term and the constants at the end
(see~\cref{ex:normalize}).

\begin{figure}[ht]
  \begin{equation*}    
    \begin{aligned}
      \normalize\ (c + t) &= t + c \\
      \normalize\ ((t + c) + c') &= t + (c + c')\\
      \normalize\ ((t + c) + (t' + c')) &= (t + t') + (c + c') \\     
    \end{aligned}\qquad
    \begin{aligned}
      \normalize\ (-(t + c)) &= (-t) - c \\
      \normalize\ ((t + c) + t') &= (t + t') + c \\
      \normalize\ ((t + c) - (t' + c')) &= (t - t') + (c - c')
    \end{aligned}
  \end{equation*}  
  \caption{Rewriting rules for normalization (non exhaustive).  All
    expressions belong to the set \(\bvtype{}\) where \(c, c'\) are
    bitvector constants and \(t, t'\) are arbitrary bitvector
    terms. Note that \((c + c')\) is a constant value, not a
    term.}\label{fig:normalize}
\end{figure}

\begin{example}[Normalized formula]\label{ex:normalize}
  \(\normalize\ ((eax + 4) + (ebx + 4)) = (eax + ebx) + 8 \)
\end{example}

In order to increase the conclusiveness of \textit{FlyRow}, we also need
\emph{variable inlining}. However, inlining all variables is not a viable option
as it would lead to an exponential term size growth. %
Instead, we define a \emph{canonical form} \(x + o\) where \(x\) is a bitvector
variable, and \(o\) is a constant bitvector offset, and we only inline formulas
that are in canonical form (see rule \rulename{canonical-assign} in
\cref{fig:eval_instr_sha_full}). It enables rewriting of most of the memory
accesses on the stack which, are of the form \lstinline{ebp + bv}, while
avoiding term-size explosion.

\subsubsection{Untainting}\label{sec:untainting} %
After the evaluation of a rule with the predicate $\secleak$ for a
duplicated expression \(\pair{\varphi_l}{\varphi_r} \), we know that
the equality \(\varphi_l = \varphi_r\) holds in the current
configuration. From this equality, we can deduce useful information
about variables that must be equal in both executions. We can then
propagate this information to the register map and memory in order to
spare subsequent insecurity queries concerning these variables. %
For instance, consider the leak of the duplicated expression
\(\pair{x_l + 1}{x_r + 1}\), where \(x_l\) and \(x_r\) are symbolic
variables. If the leak is secure, we can deduce that \(x_l = x_r\) and
replace all occurrences of \(x_r\) by \(x_l\) in the rest of the
symbolic execution.

We define in~\cref{fig:untainting_rules} a function
\(\untaint(\regmap,\smem, \rel{\varphi})\) which takes a register map
\(\regmap\), a memory \(\smem\), and a duplicated expression \(\rel{\varphi}\).
It deduces variable equalities from \(\rel{\varphi}\), propagate them in
\(\regmap\) and \(\smem\), and returns a pair of updated register map and memory
\((\regmap', \smem')\). %
Intuitively, if the equality of variables \(x_l\) and \(x_r\) can be deduced
from \(\secleak(\rel{\varphi}, \pc)\), the \(untaint\) function replaces
occurrences of \(x_r\) by \(x_l\) in the memory and the register map. As a
result, a duplicated expression \(\pair{x_l}{x_r}\) would be replaced by the
simple expression \(\simple{x_l}\) in the rest of the execution\footnote{We
  implement untainting with a cache of ``untainted variables'' that are
  substituted in the program copy during symbolic evaluation of expressions.}.%

\begin{figure}[ht]
  \begin{equation*}    
  \begin{aligned}
    untaint(\regmap,\smem,\pair{x_l}{x_r}) &= (\regmap[x_r \backslash x_l],\smem[x_r \backslash x_l])\\
    \begin{rcases}
      untaint(\regmap,\smem,\pair{\neg t_l}{\neg t_r})\\
      untaint(\regmap,\smem,\pair{-t_l}{-t_r})\\
    \end{rcases} &= untaint(\regmap,\smem,\pair{t_l}{t_r})
  \end{aligned}\qquad
  \begin{aligned}
    \begin{rcases}
      untaint(\regmap,\smem,\pair{t_l + k}{t_r + k})\\
      untaint(\regmap,\smem,\pair{t_l - k}{t_r - k})\\
      untaint(\regmap,\smem,\pair{t_l :: l}{t_r :: l})\\
    \end{rcases} &= untaint(\regmap,\smem,\pair{t_l}{t_r})
  \end{aligned}
\end{equation*}  
\caption{Untainting rules where \(x_l, x_r\) are bitvector variables of the same
  size, \(t_l, t_r, k, l\) are bitvector terms such that \(t_l, t_r, k\) have
  the same size, \(::\) indicates the concatenation of bitvectors, and
  $f[x_r \backslash x_l]$ indicates that the variable \(x_r\) is substituted
  with \(x_l\) in $f$.}\label{fig:untainting_rules}
\end{figure}

\subsubsection{Fault-packing}\label{sec:fp} %
Symbolic evaluation generates a large number of insecurity checks for
some leakage models (e.g.\ memory obliviousness, constant-time). The
fault-packing (\textit{FP}) optimization gathers these insecurity
checks along a path and postpones their resolution to the end of the
basic block.

\begin{example}[Fault-packing]
  For example, let us consider a basic-block with a path predicate
  \(\pc\). If there are two memory accesses along the basic block that
  evaluate to \(\pair{\lproj{\varphi}}{\rproj{\varphi}}\) and
  \(\pair{\lproj{\phi}}{\rproj{\phi}}\), we would normally generate
  two insecurity queries
  \((\pc \wedge \lproj{\varphi} \neq \rproj{\varphi})\) and
  \((\pc \wedge \lproj{\phi} \neq \rproj{\phi})\)---one for each
  memory access. Fault-packing regroups these checks into a single
  query
  \(\big(\pc \wedge ((\lproj{\varphi} \neq \rproj{\varphi}) \lor
  (\lproj{\phi} \neq \rproj{\phi}))\big)\) sent to the solver at the
  end of the basic block.
\end{example}

This optimization reduces the number of insecurity queries sent to the solver
and thus helps improving performance. However it degrades the precision of the
counterexample: while checking each instruction individually precisely points to
vulnerable instructions, fault-packing reduces accuracy to vulnerable basic
blocks only. Note that even though disjunctive constraints are usually harder to
solve than pure conjunctive constraints, those introduced by \textit{FP} are
very simple---they are all evaluated under the same path predicate and are not
nested. Therefore, they never end up in a performance degradation
(see~\cref{tab:scale_optims}).

\subsection{Theorems}~\label{sec:proofs} %
\iftechreport%
Theorems and proof are given for the constant-time property. Adaptation of the
theorems and proofs for other leakage models are discussed in
\cref{sec:discussion_proofs}. %
\else%
Theorems and proof sketches are given for the constant-time property. In the
companion technical report~\cite{techreportbinsecrel}, we detail the full proofs
and discuss how the theorems and proofs can be adapted to other leakage models.
\fi

In order to define properties of our symbolic execution, we use
$\cleval{}^k$ (resp.\ $\ieval{}^k$), with $k$ a natural number, to
denote $k$ steps in the concrete (resp.\ symbolic) evaluation.
\begin{proposition}\label{hyp:ct_upk}
  If a program \(\prog{}\) is constant-time up to \(k\) then for all
  \(i \leq k\), \(\prog{}\) is constant-time up to \(i\).
\end{proposition}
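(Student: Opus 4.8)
The plan is to unfold the hypothesis and reduce the bounded property at depth \(i\) to the one at depth \(k\) by \emph{extending} runs and then projecting onto prefixes. Recall that for constant-time \(\filter\) is the identity, so ``\(\prog{}\) is constant-time up to \(k\)'' means: for all low-equivalent initial configurations \(\cconfvar_0 \loweq \cconfvar_0'\), whenever \(\cconfvar_0 \cleval{\leakvar}^k \cconfvar_k\) and \(\cconfvar_0' \cleval{\leakvar'}^k \cconfvar_k'\), one has \(\leakvar = \leakvar'\). Fixing \(i \le k\), I would start from two arbitrary low-equivalent runs \(\cconfvar_0 \cleval{\leakvar}^i \cconfvar_i\) and \(\cconfvar_0' \cleval{\leakvar'}^i \cconfvar_i'\) and aim to establish \(\leakvar = \leakvar'\). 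Since the leakage of a multistep execution is the concatenation of per-step leakages, extending each run by \(k-i\) further steps only \emph{appends} leakage: if \(\cconfvar_i \cleval{a}^{\,k-i} \cconfvar_k\) and \(\cconfvar_i' \cleval{a'}^{\,k-i} \cconfvar_k'\), the full \(k\)-step leakages are \(\leakvar \concat a\) and \(\leakvar' \concat a'\), with \(\leakvar\) and \(\leakvar'\) being exactly the length-\(i\)-step prefixes. Applying the hypothesis to the extended runs then gives \(\leakvar \concat a = \leakvar' \concat a'\), and the whole problem reduces to peeling the common suffix off this equality.

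To recover \(\leakvar = \leakvar'\) from \(\leakvar \concat a = \leakvar' \concat a'\), I only need that the two prefixes have the same length, since equal-length prefixes of one and the same sequence are equal. I would prove this by induction on \(j \le i\), maintaining the invariant that after \(j\) steps both runs sit at the same location (\(\locvar_j = \locvar_j'\)) and their first-\(j\) leakages have equal length. The base case \(j=0\) is immediate: both runs begin at the entrypoint and have emitted nothing. The inductive step rests on the observation that, in the constant-time leakage model, the \emph{number} of items a single step contributes is fixed by the instruction at the current location and never by the concrete data — a control-flow step contributes one location through \(\leakfunc_{pc}\), an expression evaluation contributes one address per \(\load\) through \(\leakfunc_{@}\), a store contributes its address, and \(\leakfunc_{\bot}\) contributes nothing. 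Thus \(\locvar_j = \locvar_j'\) forces step \(j{+}1\) to contribute equally many items on both sides, preserving equal prefix length; combined with \(\leakvar \concat a = \leakvar' \concat a'\), the length-matched prefixes coincide, which in turn forces the locations emitted by \(\leakfunc_{pc}\) (and the deterministic fall-through for non-branching instructions) to agree, yielding \(\locvar_{j+1} = \locvar_{j+1}'\) and closing the induction. Taking \(j=i\) gives \(|\leakvar| = |\leakvar'|\), hence \(\leakvar = \leakvar'\), i.e.\ constant-time up to \(i\).

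The step I expect to be the main obstacle is the \emph{extension} itself: the argument needs both \(i\)-step runs to reach depth \(k\), i.e.\ the concrete semantics must not get stuck before step \(k\). Halting locations self-loop via rule \rulename{halt} (emitting empty leakage under constant-time), and every other instruction has an applicable rule, so the only delicate case is an indirect jump \(\goto{e}\) whose target \(\toloc(\bv{})\) is undefined. I would handle this either by assuming reachable configurations always admit a step (treating illegal targets as halting), or by restricting the statement to runs that genuinely reach depth \(k\); the inductive length argument above is otherwise robust and uses low-equivalence only to invoke the depth-\(k\) hypothesis, never any intermediate configuration equality.
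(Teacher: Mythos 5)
Your proof is correct, but note that the paper itself offers no argument for this statement: it is given as a bare proposition, treated as immediate from the definitions. What your write-up adds is a proof of the reading under which the proposition has genuine content, namely that equality of the exactly-\(k\)-step leakages already forces equality of the \(i\)-step leakages for every \(i \leq k\) (under the trivial alternative reading --- ``for all \(j \leq k\) the \(j\)-step leakages agree'' --- the proposition is just restriction of a universal quantifier, which is presumably why the paper omits a proof). Your two ingredients are both sound. For the extension step, you do not need to hedge or add a new assumption: the paper already assumes exactly what you need in \cref{hyp:concrete-stuck} (restriction to safe programs, so concrete execution never gets stuck, with \rulename{halt} self-looping and leaking nothing in the constant-time model), so your proposed fallback of restricting the statement to runs that reach depth \(k\) can be dropped. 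For the suffix-peeling step, your induction is the real work and it goes through: a program is a map from locations to instructions, and in the constant-time model the number of items a step leaks is fixed by the syntax of the instruction being executed (one item per \load{}, one per \(\leakfunc_{pc}\), nothing else, with \(\leakfunc_{\bot}\) and \(\leakfunc_{\mu}\) empty), so location agreement after \(j\) steps forces equal-length step-\((j+1)\) contributions; equal-length prefixes of the identical depth-\(k\) leakage \(\leakvar \concat a = \leakvar' \concat a'\) are then equal, and the \(\leakfunc_{pc}\) item inside the compared segment (for \texttt{ite} and \texttt{goto}, with deterministic fall-through otherwise) pins down agreement of the next locations, closing the induction. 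This is also the fact the paper uses implicitly, e.g.\ in the step-indexed induction in the proof of \cref{thm:bv}, so making it explicit is a genuine improvement rather than a redundancy.
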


\begin{hypothesis}\label{hyp:abvconc}
  Through this section we assume that theory \abv{} is correct and complete
  w.r.t.\ our concrete evaluation.
\end{hypothesis}

The satisfiability problem for the theory \abv{} is
decidable~\cite{DBLP:journals/toplas/NelsonO79}. Therefore we make the
following hypothesis on the solver:
\begin{hypothesis}\label{hyp:solver}
  We suppose that the SMT solver for \abv{} is correct, complete and
  always terminates. Therefore for a \abv{} formula \(\varphi\),
  \(M \sat \pc \iff M \solver \pc\).
\end{hypothesis}

\begin{hypothesis}\label{hyp:stuck}
  We assume that the program \(\prog{}\) is defined on all locations computed
  during the symbolic execution---notably by the function \(\toloc\) in rule
  \rulename{d\_jump}. Under this hypothesis, and because the solver always
  terminates (\cref{hyp:solver}), symbolic execution is stuck if and only if a
  leakage predicate evaluates to false. In this case, an expression
  \(\rel{\varphi}\) is leaked such that \(\secleak(\rel{\varphi}, \pc)\)
  evaluates to \(false\) and the solver returns a model \(M\) such that
  \({M \sat \pc \wedge (\lproj{\rel{\varphi}} \neq \rproj{\rel{\varphi}})}\)
  (from \cref{hyp:solver}). %
\end{hypothesis}

\begin{proposition}\label{hyp:deterministic}
  Concrete semantics is deterministic, c.f.\ rules of the concrete
  semantics in \cref{fig:dba_semantics_full}.
\end{proposition}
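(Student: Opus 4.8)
The plan is to establish determinism in two layers, mirroring the two-level structure of the semantics in \cref{fig:dba_semantics_full}: first for expression evaluation, then for the single-step transition relation. Throughout, \emph{determinism} means that from a fixed configuration at most one transition is derivable, and moreover that both the target configuration and the emitted leakage are uniquely determined.

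First I would prove the auxiliary claim that expression evaluation is deterministic: if $\ceconf{\cregmap}{\cmem}{e} \eeval{\leakvar} bv$ and $\ceconf{\cregmap}{\cmem}{e} \eeval{\leakvar'} bv'$, then $bv = bv'$ and $\leakvar = \leakvar'$. This goes by structural induction on $e$. The expression rules are syntax-directed---each syntactic form of $e$ matches exactly one of \rulename{cst}, \rulename{var}, \rulename{unop}, \rulename{binop}, \rulename{load}---so no two distinct rules ever apply to the same expression. The base cases \rulename{cst} and \rulename{var} are immediate, since the returned value (resp.\ $bv$ and $\cregmap\ v$) and leakage $\epsilon$ are fixed; here we use that $\cregmap$ is a function. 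In the inductive cases the value is obtained by applying a concrete operator (resp.\ the memory $\cmem$, itself a function) to the values of the immediate sub-expressions, and the leakage is a concatenation of the sub-leakages with the image of a leakage function $\leakfunc_{\unop}$, $\leakfunc_{\binop}$, or $\leakfunc_{@}$; since these operators and leakage functions are functions, the induction hypothesis on the sub-expressions yields uniqueness of both the value and the leakage.

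Then I would prove step determinism by case analysis on the current instruction $\locmap{\locvar}$. The transition rules are again syntax-directed on the instruction: \rulename{halt}, \rulename{s\_jump}, \rulename{assign}, and \rulename{store} each have a unique applicable rule, and in each case the target configuration and the leakage are determined by the starting configuration together with (for \rulename{assign} and \rulename{store}) the uniquely-determined values of the evaluated expressions, invoking the auxiliary claim. For a conditional $\dbaite{e}{\locvar_1}{\locvar_2}$, the two rules \rulename{ite-true} and \rulename{ite-false} are guarded by the mutually exclusive and exhaustive side conditions $bv \neq 0$ and $bv = 0$ on the (uniquely determined) value $bv$ of $e$; hence exactly one fires, fixing both the successor location and the leakage $\leakvar \concat \leakfunc_{pc}(\cdot)$.

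The only point requiring care---and the single step I would flag as the main obstacle, though it is a mild one---is the indirect-jump rule \rulename{d\_jump}, whose informal phrasing (``it finds a concrete value for the jump target'') might suggest nondeterminism. Determinism here rests on $\toloc$ being a \emph{partial one-to-one correspondence}, i.e.\ a function, as stipulated in the definition of the concrete model: the value $bv$ of $e$ is unique by the auxiliary claim, so $\locvar' \mydef \toloc(bv)$ is unique, and the leakage $\leakvar \concat \leakfunc_{pc}(\locvar')$ follows. Note the contrast with the \emph{symbolic} rule \rulename{d\_jump}, which is genuinely nondeterministic and is explicitly noted as such in \cref{sec:std_relse}: the nondeterminism lives only at the symbolic level and does not contradict the present proposition. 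Collecting the cases closes the proof.
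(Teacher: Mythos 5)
Your proof is correct and takes essentially the same route as the paper, which offers no explicit argument beyond pointing at the rules of \cref{fig:dba_semantics_full}: determinism there is meant to be read off by inspection, and your structural induction on expressions plus case analysis on instructions is exactly that inspection spelled out. Your treatment of the one delicate case is also right---concrete \rulename{d\_jump} is deterministic because \(\toloc\) is a partial one-to-one map, in contrast with the genuinely nondeterministic symbolic \rulename{d\_jump}.
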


\begin{hypothesis}\label{hyp:concrete-stuck}
  We restrict our analysis to safe programs \newcontent{(e.g.\ no division by 0,
    illegal indirect jump, segmentation fault)}. Under this hypothesis, concrete
  execution never gets stuck. %
\end{hypothesis}

\begin{definition}[\(\concsym{p}{M}\)]\label{def:concsym}
  We define a concretization relation $\concsym{p}{M}$ between
  concrete and symbolic configurations, where \(M\) is a model and
  \(p \in \{l,r\}\) is a projection on the left or right side of a
  symbolic configuration.  Intuitively, the relation
  $c\! \concsym{p}{M}\! s$ is the concretization of the \(p\)-side of
  the symbolic state \(s\) with the model \(M\).
  Let \(c \mydef \cconf{\locvar_1}{\cregmap}{\cmem}\) and
  \(s \mydef \iconfold{\locvar_2}{\regmap}{\smem}{\pc}\). Formally
  $c \concsym{p}{M} s$ holds iff \(M \sat \pc\),
  \(\locvar_1 = \locvar_2\) and for all expression \(e\), either the
  symbolic evaluation of \(e\) gets stuck or we have
  \begin{equation*}
    \econfold{\regmap}{\smem}{\pc}{e} \eeval{} \rel{\varphi} ~\wedge~ %
    (M(\proj{\rel{\varphi}}) = \mathtt{bv} \iff c~e \ceeval{} \mathtt{bv}) %
  \end{equation*}
\end{definition}

Notice that because both sides of an initial configuration \(s_0\) are
low-equivalent, the following proposition holds:
\begin{proposition}\label{prop:loweq}
  For all concrete configurations \(\cconfvar_0\) and \(\cconfvar_0'\)
  such that
  \(\cconfvar_0 \concsym{l}{M} s_0 ~\wedge~ \cconfvar'_0
  \concsym{r}{M} s_0\), then \(\cconfvar_0 \loweq \cconfvar_0'\).
\end{proposition}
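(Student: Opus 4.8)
The plan is to exploit the way the initial symbolic configuration $s_0 \mydef \iconfold{\locvar_0}{\regmap_0}{\smem_0}{\pc_0}$ is built (cf.\ the specification of high and low input): every low variable $v \in \lowvarset$ and the initial content of every low address $a \in \lowmemset$ are represented as \emph{simple} relational formulas, i.e.\ a single \abv{} term shared by both sides, while only high inputs are mapped to genuine pairs. Writing $\cconfvar_0 = \cconf{\locvar_0}{\cregmap}{\cmem}$ and $\cconfvar_0' = \cconf{\locvar_0'}{\cregmap'}{\cmem'}$, I would first record the two consequences of this construction that drive the whole argument: for every $v \in \lowvarset$ we have $\lproj{\regmap_0(v)} = \rproj{\regmap_0(v)}$, and the two initial memory arrays satisfy $select(\lproj{\smem_0}, a) = select(\rproj{\smem_0}, a)$ \emph{syntactically} for every $a \in \lowmemset$. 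The crux is then that the two concretization hypotheses use the \emph{same} model $M$ and the \emph{same} symbolic state $s_0$, so $M$ assigns equal bitvectors to syntactically equal terms.

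For the register component, I would fix $v \in \lowvarset$ and evaluate the expression $v$: by rule \rulename{var} we get $\econfold{\regmap_0}{\smem_0}{\pc_0}{v} \eeval{} \regmap_0(v)$, which never gets stuck. Unfolding $\cconfvar_0 \concsym{l}{M} s_0$ on the expression $v$, together with the concrete \rulename{var} rule $\cconfvar_0\ v \ceeval{} \cregmap\, v$, gives $\cregmap\, v = M(\lproj{\regmap_0(v)})$; symmetrically, $\cconfvar_0' \concsym{r}{M} s_0$ gives $\cregmap'\, v = M(\rproj{\regmap_0(v)})$. Since $\lproj{\regmap_0(v)} = \rproj{\regmap_0(v)}$ for low $v$, these two values coincide, so $\cregmap\, v = \cregmap'\, v$.

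For the memory component, I would fix $a \in \lowmemset$ and evaluate the load expression $\load{}\ a$: by rule \rulename{load} it evaluates to the pair $\pair{select(\lproj{\smem_0}, a)}{select(\rproj{\smem_0}, a)}$. Applying $\cconfvar_0 \concsym{l}{M} s_0$ (resp.\ $\cconfvar_0' \concsym{r}{M} s_0$) to this expression yields $\cmem\ a = M(select(\lproj{\smem_0}, a))$ (resp.\ $\cmem'\ a = M(select(\rproj{\smem_0}, a))$). Because the two $select$ terms are syntactically identical for low $a$, the model assigns them the same value, hence $\cmem\ a = \cmem'\ a$. Having equated all low registers and all low memory cells, the definition of $\loweq$ gives $\cconfvar_0 \loweq \cconfvar_0'$.

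I expect the only delicate point to be the first step: the statement holds purely \emph{by construction} of $s_0$, so the real content is to state precisely and justify the invariant that low inputs are shared (simple) across the two projections while high inputs are duplicated. Everything afterwards is a mechanical unfolding of \cref{def:concsym} combined with the elementary fact that one model evaluates equal terms to equal bitvectors; no solver reasoning and no case analysis on the program are needed.
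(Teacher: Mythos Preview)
Your proposal is correct and is essentially a detailed unpacking of what the paper leaves as a one-line remark (``because both sides of an initial configuration $s_0$ are low-equivalent''); the paper gives no further proof. One small imprecision worth flagging: depending on how $s_0$ is actually built, the claim that $select(\lproj{\smem_0}, a)$ and $select(\rproj{\smem_0}, a)$ are \emph{syntactically} identical for low $a$ may be too strong---if high inputs are already stored into $\smem_0$ (as the paper's specification of high input suggests), then $\lproj{\smem_0}$ and $\rproj{\smem_0}$ are distinct store-chains and the two $select$ terms differ syntactically even at low addresses. What you actually need (and what does hold, by the array axioms, since both chains sit on the same base array and a low $a$ avoids every high index) is that the two terms are \emph{provably equal} in \abv{}; that is still enough for the single model $M$ to assign them the same bitvector, so your conclusion is unaffected. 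You correctly anticipated that the only real content lies in stating this invariant on $s_0$.
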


The following lemma expresses that when the symbolic evaluation is
stuck on a state \(s_k\), there exist concrete configurations derived
from \(s_k\) which produce distinct leakages.
\begin{restatable}{lemma}{stuckinsecure}\label{lemma:stuckinsecure}
  Let \(s_k\) be a symbolic configuration obtained after \(k\) steps. If \(s_k\)
  is stuck, then there exists a model \(M\) such that for each concrete
  configurations \(c_k \concsym{l}{M} s_k\) and \(c_k' \concsym{r}{M} s_k\), the
  executions from \(c_k\) and \(c_k'\) produce distinct leakages.
\end{restatable}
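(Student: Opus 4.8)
The plan is to extract from the stuckness of \(s_k\) a single witnessing model \(M\) and then show that this \(M\) forces the two concrete runs to disagree on exactly the leak that blocked the symbolic step. First I would invoke \cref{hyp:stuck}: since \(s_k\) is stuck and, by \cref{hyp:solver}, the solver always terminates, some leakage predicate must have evaluated to \(false\). This yields a leaked relational expression \(\rel{\varphi}\) with \(\secleak(\rel{\varphi}, \pc) = false\) together with a model \(M\) such that \(M \sat \pc \wedge (\lproj{\rel{\varphi}} \neq \rproj{\rel{\varphi}})\). By the definition of \(\secleak\), a simple expression is always leaked securely, so \(\rel{\varphi}\) must be a pair \(\pair{\varphi_l}{\varphi_r}\), and \(M\) satisfies \(M(\varphi_l) \neq M(\varphi_r)\). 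This \(M\) is the witness claimed by the lemma.

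Next I would trace the leaked expression back to the DBA expression \(e\) whose evaluation the blocked instruction leaks. For the constant-time leakage model the failing predicate is \(\sleakfunc_{@}\) (a \rulename{load}/\rulename{store} index), \(\sleakfunc_{dj}\) (a jump target), or \(\sleakfunc_{ite}\) (a branch condition); writing \(\rel{\psi}\) for the relational value to which \(e\) evaluates in \(s_k\), we have \(\rel{\varphi} = \rel{\psi}\) in the first two cases and \(\rel{\varphi} = \rlift{eq_0}\ \rel{\psi}\) in the last. Crucially, \(e\) itself evaluates successfully—only the enclosing leakage guard fails—so the symbolic evaluation of \(e\) is not stuck. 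Fixing arbitrary \(c_k \concsym{l}{M} s_k\) and \(c_k' \concsym{r}{M} s_k\), both reside at \(s_k\)'s location and, by \cref{hyp:concrete-stuck}, can take a concrete step. Applying \cref{def:concsym} to \(e\) then yields that the concrete value of \(e\) is \(M(\lproj{\rel{\psi}})\) in \(c_k\) and \(M(\rproj{\rel{\psi}})\) in \(c_k'\).

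Finally I would read off the leakage. In the memory and jump cases \(\rel{\varphi} = \rel{\psi}\), so \(M(\lproj{\rel{\psi}}) = M(\varphi_l) \neq M(\varphi_r) = M(\rproj{\rel{\psi}})\), and the leaked address \(\leakfunc_{@}(\cdot)\) (resp.\ target \(\leakfunc_{pc}(\cdot)\)) differs between the two runs. In the conditional case, \(M(\varphi_l) \neq M(\varphi_r)\) unfolds to \(eq_0(M(\lproj{\rel{\psi}})) \neq eq_0(M(\rproj{\rel{\psi}}))\), so exactly one of the two concrete conditions is zero; the runs branch to different successors and thus leak different locations via \(\leakfunc_{pc}\). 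In all cases the step-\(k\) leakages of the executions from \(c_k\) and \(c_k'\) differ, hence so do their full leakage traces. I expect the main obstacle to be the bookkeeping of the second paragraph: matching the abstractly-named leaked expression of \cref{hyp:stuck} to the concrete observation produced by the instruction (in particular undoing the \(\rlift{eq_0}\) wrapper in the conditional case), and justifying that \cref{def:concsym} genuinely applies to \(e\) because only the outer leakage guard—not the evaluation of \(e\)—is responsible for the stuckness.
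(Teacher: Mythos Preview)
Your proposal is correct and follows essentially the same route as the paper: extract the witnessing model from \cref{hyp:stuck}, then perform a case analysis on which leakage predicate failed (\rulename{load}/\rulename{store}, \rulename{d\_jump}, \rulename{ite}), using \cref{def:concsym} to transport the symbolic disagreement to the concrete leakages. One small point the paper makes explicit that you glossed over: in the \rulename{d\_jump} case the concrete leak is \(\leakfunc_{pc}(\toloc(\mathtt{bv}))\), not \(\leakfunc_{pc}(\mathtt{bv})\), so concluding that the leaked locations differ requires that \(\toloc\) is injective (it is declared a one-to-one correspondence in the paper).
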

\begin{proofoverview}
  \iftechreport{}(Full proof in \cref{app:stuckinsecure}) \fi %
  The proof goes by case analysis on the symbolic evaluation of \(s_{k}\).
  \newcontent{Let \(s_{k}\) be a symbolic configuration that is stuck (i.e., a
    symbolic leakage predicate evaluates to \(false\) with a model \(M\)), then
    \(s_{k}\) can be concretized using the model \(M\), producing concrete
    states \(c_k\) and \(c_k'\) such that \(c_{k} \cleval{\leakvar} c_{k+1}\)
    and \(c_{k}' \cleval{\leakvar'} c_{k+1}'\). Finally, because the symbolic
    leakage model does not over-approximate the concrete leakage, i.e., each
    symbolic leak corresponds to a concrete leak, we have
    \(\leakvar \neq \leakvar'\).}
\end{proofoverview}

The following lemma expresses that when symbolic evaluation does not
get stuck up to \(k\), then for each pair of concrete executions
following the same path up to \(k\), there exists a corresponding
symbolic execution.
\begin{restatable}[]{lemma}{completenesslemma}\label{lemma:completeness}
  Let $s_0$ be a symbolic initial configuration for a program $P$ that does not
  get stuck up to \(k\). For every concrete states $c_0$, $c_k$, $c_0'$, $c_k'$
  and model $M$ such that
  ${c_0 \concsym{l}{M} s_0} ~\wedge~ {c_0' \concsym{r}{M} s_0}$, %
  if $c_0 \cleval{\leakvar}^k c_k$ and $c_0' \cleval{\leakvar'}^k c_k'$ follow
  the same path, then there exists a symbolic configuration \(s_k\) and a model
  \(M'\) such that: %
  \[s_0 \ieval{}^k s_k ~\wedge~ %
    c_k \concsym{l}{M'} s_k ~\wedge~ c_k' \concsym{r}{M'} s_k\]
\end{restatable}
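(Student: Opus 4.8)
The plan is to prove \cref{lemma:completeness} by induction on the number of steps $k$, establishing a one-step simulation between the pair of concrete executions and the relational symbolic execution. For the base case $k=0$ we take $s_k = s_0$ and $M' = M$: then $s_0 \ieval{}^0 s_0$ holds trivially, and the two required concretizations $c_0 \concsym{l}{M} s_0$ and $c_0' \concsym{r}{M} s_0$ are exactly the hypotheses. For the inductive step, assuming the claim at length $k$, I would decompose the concrete runs as $c_0 \cleval{}^k c_k \cleval{} c_{k+1}$ and $c_0' \cleval{}^k c_k' \cleval{} c_{k+1}'$. Their length-$k$ prefixes still follow the same path, so the induction hypothesis provides a configuration $s_k$ and a model $M'$ with $s_0 \ieval{}^k s_k$, $c_k \concsym{l}{M'} s_k$ and $c_k' \concsym{r}{M'} s_k$. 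Since $s_0$ does not get stuck up to $k+1$, no leakage predicate fails at step $k+1$, so a symbolic step $s_k \ieval{} s_{k+1}$ is available; it then remains to choose $s_{k+1}$ and an updated model $M''$ validating $c_{k+1} \concsym{l}{M''} s_{k+1}$ and $c_{k+1}' \concsym{r}{M''} s_{k+1}$, by case analysis on $\locmap{l}$.

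Most cases are routine. For \rulename{halt} and \rulename{s\_jump} only the location changes, the register map and memory are untouched, and $M'' = M'$ preserves concretization immediately. For \rulename{ite-true}/\rulename{ite-false}, both concrete executions take the \emph{same} branch because they follow the same path; applying the concretization of $c_k$ (resp.\ $c_k'$) to the guard $e$ shows that $M'$ agrees with the concrete condition on each side, so $M'$ satisfies the strengthened path predicate $\pc'$ and we keep $M'' = M'$. For \rulename{d\_jump}, both concrete runs jump to a common target $l'$; by concretization the left and right concrete jump values are $M'(\lproj{\rel{\varphi}})$ and $M'(\rproj{\rel{\varphi}})$, and since $\toloc$ is one-to-one and the two targets coincide, these bitvectors are equal, so $M' \sat \pi \wedge \lproj{\rel{\varphi}} = \rproj{\rel{\varphi}}$ and I may select the nondeterministic derivation whose chosen target is precisely $\toloc(M'(\lproj{\rel{\varphi}})) = l'$, with $M''=M'$ still satisfying the updated predicate.

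The two delicate cases are \rulename{assign} and \rulename{store}. For an assignment where the value is not canonical, the rule introduces a fresh relational variable $\rel{\varphi}'$; here I would build $M''$ by extending $M'$ so that $\lproj{\rel{\varphi}'}$ and $\rproj{\rel{\varphi}'}$ map to the concrete values computed by $c_k$ and $c_k'$ on $e$. Because the fresh variables occur nowhere in $s_k$, this extension leaves the concretization of every other expression unchanged, and the conjuncts added to $\pc'$ hold by construction; the \rulename{canonical-assign} case is simpler and takes $M'' = M'$. For \rulename{store}, the symbolic memories are updated by logical $store$ operations, and I would invoke \cref{hyp:abvconc} together with the select-over-store axioms to show that the updated symbolic memory, read off under $M'$, coincides with the concrete memory of $c_{k+1}$ (resp.\ $c_{k+1}'$) at every address, so concretization is restored for all load—and hence all—expressions. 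In every case the guarding $\sleakfunc$ predicate holds (otherwise $s_0$ would be stuck up to $k+1$), so the chosen rule indeed fires.

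The main obstacle is re-establishing the concretization relation of \cref{def:concsym} \emph{for all expressions} after each update: this requires a subsidiary structural induction on expressions showing that evaluation of the symbolic term under the model commutes with concrete evaluation, which rests on the symbolic operators $\sunop$ and $\sbinop$ being the faithful logical counterparts of $\unop$ and $\binop$ (\cref{hyp:abvconc}). The two places where this is genuinely subtle are the fresh-variable model extension in \rulename{assign}—where I must argue the extension does not disturb already-validated expressions—and the memory update in \rulename{store}, where the array-theory reasoning is what links the logical $store$/$select$ to the concrete memory. The $\rulename{d\_jump}$ equal-target argument, relying on injectivity of $\toloc$, is the other point deserving care.
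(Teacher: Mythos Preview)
Your proposal is correct and follows essentially the same approach as the paper: induction on $k$, case analysis on the instruction at $s_{k-1}$, and re-establishing the concretization relation via a structural induction on expressions (reduced to the \rulename{load} case after a memory update). The only notable presentational difference is that for \rulename{store} the paper explicitly extends the model by binding the fresh memory variable $\smem_k$ to the concretely-updated arrays, whereas you keep $M'$ and appeal directly to the select-over-store axioms; these are equivalent views, and your \rulename{d\_jump} argument (using injectivity of $\toloc$ to conclude $M'(\lproj{\rel{\varphi}}) = M'(\rproj{\rel{\varphi}})$) is in fact more explicit than the paper's.
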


\begin{proofoverview}
  \iftechreport{}(Full proof in \cref{app:completenesslemma}) \fi %
  The proof goes by induction on the number of steps \(k\). For each concrete
  step \(c_{k-1} \ceval{} c_{k}\) and \(c_{k-1}' \ceval{} c_{k}'\), we show
  that, as long as they follow the same path, there is a symbolic step from
  \(s_{k-1}\) to a state \(s_{k}'\) that models \(c_{k}\) and \(c_{k}'\). This
  follows from the fact that our symbolic execution does not make
  under-approximations.
\end{proofoverview}

\subsubsection{Correctness of RelSE}
The following theorem claims the correctness of our symbolic
execution, stating that for each symbolic execution and model \(M\)
satisfying the path predicate, the concretization of the symbolic
execution with \(M\) corresponds to a valid concrete execution (no
over-approximation).

\begin{restatable}[Correctness of RelSE]{theorem}{correctness}\label{thm:correctness}
  For every symbolic configurations $s_0$, $s_k$ such that
  \(s_0 \ieval{}^k s_k\) and for every concrete configurations
  \(c_0\), \(c_k\) and model \(M\), such that
  \(c_0 \concsym{p}{M} s_0\) and \(c_k \concsym{p}{M} s_k\),
  there exists a concrete execution \(c_0 \cleval{}^k c_k\).
\end{restatable}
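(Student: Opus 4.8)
The plan is to prove \cref{thm:correctness} by induction on the number of steps $k$, mirroring in the opposite direction the argument behind \cref{lemma:completeness}: whereas completeness builds a symbolic run out of a pair of concrete runs, here I reconstruct a single concrete run (on side $p$) from a symbolic run together with the model $M$. The engine of the induction is a single-step correctness lemma stating that whenever $s \ieval{} s'$, $M \sat \pc'$ (with $\pc'$ the path predicate of $s'$), $c \concsym{p}{M} s$, and $c' \concsym{p}{M} s'$, there is a concrete transition $c \cleval{} c'$. The second ingredient I need is that $\concsym{p}{M}$ is \emph{functional} in $(s, M, p)$: by \cref{def:concsym}, the location of $c$ is fixed by $s$, each register value is $M(\proj{\regmap\ v})$ (the expression $v$ never gets stuck, as \rulename{var} has no leakage guard), and each byte at a constant address $a$ is fixed by $M$ applied to the symbolic load of $\simple{a}$ (whose index is simple, so the guarding leakage predicate is vacuously $true$ and the load is never stuck). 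Hence any two concretizations of the same $s$ with the same $M$ and $p$ coincide. This settles the base case $k = 0$, where $s_0 = s_k$ forces $c_0 = c_k$ and thus $c_0 \cleval{}^0 c_k$.

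For the inductive step I would split $s_0 \ieval{}^{k} s_k$ as $s_0 \ieval{}^{k-1} s_{k-1} \ieval{} s_k$. Since the symbolic rules only grow the path predicate, $\pc_k$ syntactically contains $\pc_{k-1}$, so from $M \sat \pc_k$ (implied by $c_k \concsym{p}{M} s_k$) I obtain $M \sat \pc_{k-1}$; defining $c_{k-1}$ component-wise from $M$ and $s_{k-1}$ (location, registers, and memory as above) yields a configuration with $c_{k-1} \concsym{p}{M} s_{k-1}$, the well-formedness check being a routine structural induction on expressions using \cref{hyp:abvconc} and the array axioms. The induction hypothesis then gives $c_0 \cleval{}^{k-1} c_{k-1}$, and the single-step lemma applied to $s_{k-1} \ieval{} s_k$ produces a concrete step $c_{k-1} \cleval{} \tilde{c}_k$ with $\tilde{c}_k \concsym{p}{M} s_k$; by functionality $\tilde{c}_k = c_k$, so $c_0 \cleval{}^{k-1} c_{k-1} \cleval{} c_k$, i.e.\ $c_0 \cleval{}^{k} c_k$, as required.

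The bulk of the work, and the main obstacle, is the single-step lemma, proved by case analysis on the symbolic rule realizing $s_{k-1} \ieval{} s_k$. I would first note that the leakage predicates play no role: they only govern whether symbolic execution is \emph{stuck}, and since the step exists they all evaluated to $true$, whereas correctness concerns only the transition relation. The \rulename{s\_jump}, \rulename{halt}, \rulename{assign} and \rulename{canonical-assign} cases are routine, since the concretized components update exactly as the matching concrete rule prescribes, using that the fresh variable introduced by \rulename{assign} is constrained in $\pc_k$ to equal the evaluated expression, so its $M$-image is unchanged. The delicate cases are those where the concrete rule must \emph{select the same successor} as the symbolic one, and where memory is mutated. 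For \rulename{ite-true}/\rulename{ite-false} and \rulename{d\_jump}, the concrete value of the guard or target in $c_{k-1}$ equals $M(\proj{\rel{\varphi}})$ by \cref{def:concsym}, and the constraints added to $\pc_k$ (e.g.\ $\proj{\rel{\varphi}} \neq 0$, or $\proj{\rel{\varphi}} = M(\proj{\rel{\varphi}})$ together with $l' \mydef \toloc(M(\proj{\rel{\varphi}}))$) force the concrete rule to branch to the same location as $s_k$. For \rulename{store}, I would show that the concretized memory after the step agrees, at every constant address, with the concrete store: this is where \cref{hyp:abvconc} and the $select$/$store$ array axioms are essential, and where the soundness of the \textit{FlyRow} $\lookup$ function must be invoked, since $\lookup$ rewrites a $select$ to a stored value only when its syntactic $\compare$ proves index equality, which holds in every interpretation and in particular under $M$. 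Establishing this memory correspondence cleanly is the crux; the remaining cases reduce to matching one concrete rule against its symbolic counterpart once the expression-level agreement of \cref{def:concsym} is in hand.
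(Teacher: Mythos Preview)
Your proposal is correct and follows essentially the same approach as the paper: induction on $k$, using monotonicity of the path predicate to build the intermediate concretization $c_{k-1}$, applying the induction hypothesis for the prefix, then a case analysis on the symbolic rule for the last step, and finally using functionality of $\concsym{p}{M}$ (which the paper simply calls a ``tight relation'') to identify the constructed successor with $c_k$. The only notable differences are cosmetic: the paper inlines the single-step argument rather than packaging it as a lemma, and it obtains the concrete successor via \cref{hyp:concrete-stuck} (concrete execution never gets stuck) together with determinism (\cref{hyp:deterministic}) before showing it concretizes $s_k$, whereas you argue each case directly; your explicit mention of \textit{FlyRow} $\lookup$ soundness is a welcome addition that the paper's proof glosses over.
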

\begin{proofoverview} %
  \iftechreport{}(Full proof in \cref{app:correctness}) \fi %
  The proof goes by induction on the number of steps \(k\). For each symbolic step
  \(s_{k-1} \ieval{} s_{k}\) and model \(M_{k}\) such that
  \(c_{k-1} \concsym{p}{M_{k}} s_{k-1}\) and \(c_{k} \concsym{p}{M_{k}} s_{k}\),
  there exists a step \(c_{k-1} \ceval{} c_{k}\) in concrete execution. For each
  rule, we show that there exists a unique step from \(c_{k-1}\) to a state
  \(c_{k}'\) (from \cref{hyp:concrete-stuck,hyp:deterministic}), and, because
  there is no over-approximation in symbolic execution, \(c_{k}'\) satisfies
  \(c_{k}' \concsym{p}{M_{k}} s_{k}\).
\end{proofoverview}

\subsubsection{Correct bug-finding for CT}
The following theorem expresses that when the symbolic execution gets
stuck, then the program is not constant-time.
\begin{restatable}[Bug-Finding for CT]{theorem}{bugfinding}\label{thm:bf}
  Let $s_0$ be an initial symbolic configuration for a program $\prog$. If
  symbolic evaluation gets stuck in a configuration \(s_k\) then $\prog$
  is not constant-time at step \(k\). Formally, if there is a symbolic
  evaluation \(s_0 \ieval{}^k s_k\) such that \(s_k\) is stuck, then
  there exists a model \(M\) and concrete configurations
  \(\cconfvar_0 \concsym{l}{M} s_0\), %
  \(\cconfvar_0' \concsym{r}{M} s_0 \), %
  \(\cconfvar_k \concsym{l}{M} s_k \) and %
  \(\cconfvar_k' \concsym{r}{M} s_k\) such that, %
  \begin{equation*}%
    \cconfvar_0 \loweq \cconfvar_0' ~\wedge~%
    \cconfvar_0  \cleval{\leakvar}^k \cconfvar_k \cleval{\leakvar_{k}} \cconfvar_{k+1} ~\wedge~ %
    \cconfvar_0' \cleval{\leakvar'}^k \cconfvar'_k \cleval{\leakvar_{k}'} \cconfvar_{k+1} %
    \wedge \leakvar_{k} \neq \leakvar_{k}' %
  \end{equation*}
\end{restatable}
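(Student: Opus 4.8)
The plan is to obtain \cref{thm:bf} by composing the two substantive lemmas already in hand: \cref{lemma:stuckinsecure} ("stuck implies insecure") will supply the diverging final leakage, and \cref{thm:correctness} (correctness of RelSE) will reconstruct, from the symbolic trace $s_0 \ieval{}^k s_k$, two genuine concrete executions reaching the offending states. The whole argument is stitched together by a \emph{single} model $M$, so the real work is bookkeeping: making sure all four concretizations are taken with respect to that same $M$.

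First I would unfold "stuck". By \cref{hyp:stuck}, since $s_k$ is stuck there is a leaked relational expression $\rel{\varphi}$ with $\secleak(\rel{\varphi}, \pc) = false$ and a model $M$ with $M \sat \pc \wedge (\lproj{\rel{\varphi}} \neq \rproj{\rel{\varphi}})$; in particular $M \sat \pc$. Using this $M$ I would build the four configurations the statement demands by concretizing each symbolic state: $\cconfvar_k \concsym{l}{M} s_k$ and $\cconfvar_k' \concsym{r}{M} s_k$ from $s_k$, and $\cconfvar_0 \concsym{l}{M} s_0$ and $\cconfvar_0' \concsym{r}{M} s_0$ from $s_0$. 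These concretizations are well defined because the path predicate only grows along a symbolic trace, so $M$ satisfying the path predicate at $s_k$ entails that $M$ also satisfies the (weaker) path predicate attached to $s_0$.

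Next I would assemble the three parts of the conclusion. (i) Low-equivalence: apply \cref{prop:loweq} to $\cconfvar_0 \concsym{l}{M} s_0$ and $\cconfvar_0' \concsym{r}{M} s_0$ to get $\cconfvar_0 \loweq \cconfvar_0'$. (ii) The $k$-step prefixes: apply \cref{thm:correctness} with $p = l$ to $\cconfvar_0 \concsym{l}{M} s_0$ and $\cconfvar_k \concsym{l}{M} s_k$, yielding a concrete execution which I name $\cconfvar_0 \cleval{\leakvar}^k \cconfvar_k$, and symmetrically with $p = r$ to obtain $\cconfvar_0' \cleval{\leakvar'}^k \cconfvar_k'$. (iii) The diverging step at $k$: since the program is safe (\cref{hyp:concrete-stuck}) both $\cconfvar_k$ and $\cconfvar_k'$ take one further step, and \cref{lemma:stuckinsecure}, applied to exactly the pair $\cconfvar_k \concsym{l}{M} s_k$ and $\cconfvar_k' \concsym{r}{M} s_k$, guarantees that these steps $\cconfvar_k \cleval{\leakvar_k} \cconfvar_{k+1}$ and $\cconfvar_k' \cleval{\leakvar_k'} \cconfvar_{k+1}'$ produce distinct leakages $\leakvar_k \neq \leakvar_k'$. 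Chaining (ii) with (iii) gives two complete executions from low-equivalent initial states whose leakages first diverge at step $k$, which is precisely a constant-time violation at step $k$.

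Each step is light once the lemmas are available, so the main obstacle is consistency rather than depth: I must use the same $M$ throughout so that the prefix produced by \cref{thm:correctness} actually ends in the very state $\cconfvar_k$ (resp.\ $\cconfvar_k'$) fed into \cref{lemma:stuckinsecure}. Determinism of the concrete semantics (\cref{hyp:deterministic}) makes the forward run from $\cconfvar_0$ unique, and the two-endpoint formulation of \cref{thm:correctness}—which connects a specified $\cconfvar_0$ to a specified $\cconfvar_k$ when both concretize under $M$—is exactly what closes the loop. A final point of care is the reading of the shared post-state $\cconfvar_{k+1}$ in the statement: what the argument genuinely delivers is two one-step extensions with differing leakages, and it is this leakage inequality $\leakvar_k \neq \leakvar_k'$, rather than any coincidence of the resulting states, that constitutes the violation of observational noninterference.
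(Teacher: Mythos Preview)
Your proposal is correct and follows essentially the same approach as the paper: both proofs obtain the witnessing model $M$ from the stuck state, instantiate the four concretizations $\cconfvar_0,\cconfvar_0',\cconfvar_k,\cconfvar_k'$ with that single $M$, then invoke \cref{prop:loweq} for low-equivalence, \cref{thm:correctness} for the two $k$-step prefixes, and \cref{lemma:stuckinsecure} for the diverging final step. Your additional remarks---that $M\sat\pc_k$ entails $M\sat\pc_0$ because the path predicate only grows, and that the shared target $\cconfvar_{k+1}$ in the statement is a typographical artifact (the proof actually produces possibly distinct $\cconfvar_{k+1}$ and $\cconfvar_{k+1}'$)---are accurate observations that the paper leaves implicit.
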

\begin{proof} Let us consider symbolic configurations \(s_0\) and \(s_k\) such
  that \(s_0 \ieval{}^k s_k\) and \(s_k\) is stuck. %
  From \cref{lemma:stuckinsecure}, there is a model \(M\) and concrete
  configurations \(c_k\) and \(c_k'\) such that \(c_{k} \concsym{l}{M} s_{k}\)
  and \(c_{k}' \concsym{r}{M} s_{k}\), and \(c_{k} \cleval{\leakvar_k} c_{k+1}\)
  and \(c_{k}' \cleval{\leakvar_k'} c_{k+1}'\) with
  \(\leakvar_k \neq \leakvar_k'\). %
  Additionally, let \(c_0, c_0'\) be concrete configurations such that
  \(c_0 \concsym{l}{M} s_0\) and \(c_0' \concsym{r}{M} s_0\). %
  From \cref{prop:loweq}, we have \(c_0 \loweq c_0'\), and from \cref{thm:correctness},
  there are concrete executions \(c_0 \cleval{\leakvar}^{k} c_{k}\) and
  \(c_0' \cleval{\leakvar'}^{k} c_{k}'\). %
  Therefore, we have
  \(\cconfvar_0 \cleval{\leakvar}^k \cconfvar_k \cleval{\leakvar_{k}} \cconfvar_{k+1}\)
  and %
  \(\cconfvar_0' \cleval{\leakvar'}^k \cconfvar'_k \cleval{\leakvar_{k}'} \cconfvar_{k+1}'\)
  with \(c_0 \loweq c_0'\) and \(\leakvar_k \neq \leakvar_k'\), meaning that
  \(\prog\) is not constant-time at step \(k\).
\end{proof}

\subsubsection{Relative completeness of RelSE}
The following theorem claims the completeness of our
symbolic execution relatively to an initial symbolic state. If the
program is constant-time up to \(k\), then for each pair of concrete
executions up to \(k\), there exists a corresponding symbolic
execution (no under-approximation).
Notice that our definition of completeness differs from standard definitions of
completeness in SE~\cite{DBLP:journals/cacm/CadarS13}. Here, completeness up to
\(k\) only applies to programs that are constant-time up to \(k\). This directly
follows from the fact that our symbolic evaluation blocks on errors while
concrete execution continues.

\begin{restatable}[Relative Completeness of
  RelSE]{theorem}{completeness}\label{thm:completeness}
  Let \(P\) be a program constant-time up to \(k\) and $s_{0}$ be a
  symbolic initial configuration for $P$. For every concrete states
  $c_0$, $c_k$, $c_0'$, $c_k'$, and model $M$ such that
  ${c_0 \concsym{l}{M} s_0} ~\wedge~ {c_0' \concsym{r}{M} s_0}$, %
  if $c_0 \cleval{\leakvar}^k c_k$ and $c_0' \cleval{\leakvar}^k c_k'$
  then there exists a symbolic configuration \(s_k\) and a model
  \(M'\) such that: %
  \[s_0 \ieval{}^k s_k ~\wedge~ %
    c_k \concsym{l}{M'} s_k ~\wedge~ c_k' \concsym{r}{M'} s_k\]
\end{restatable}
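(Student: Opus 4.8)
The plan is to derive the theorem from the Completeness Lemma (\cref{lemma:completeness}), whose conclusion is already exactly the one we want, by discharging the two hypotheses that separate that lemma from the present statement: that the two concrete runs \emph{follow the same path}, and that the symbolic execution \emph{does not get stuck up to} \(k\). The extra power of the theorem's hypothesis — that both concrete runs emit the \emph{same} leakage \(\leakvar\) (the statement assumes \(c_0 \cleval{\leakvar}^k c_k\) and \(c_0' \cleval{\leakvar}^k c_k'\), not merely some \(\leakvar, \leakvar'\)) — is exactly what makes the first hypothesis free, while constant-timeness up to \(k\) is what I will use for the second.

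\emph{Same path.} First I would observe that in the constant-time leakage model every control-flow instruction contributes its target to the trace via \(\leakfunc_{pc}(\locvar) = [\locvar]\). Since both concrete executions emit the common leakage \(\leakvar\), their sequences of visited locations coincide step by step; hence \(c_0 \cleval{\leakvar}^k c_k\) and \(c_0' \cleval{\leakvar}^k c_k'\) follow the same path, which is precisely the side condition of \cref{lemma:completeness}.

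\emph{No stuck state.} Next I would show that constant-timeness up to \(k\) forbids the symbolic execution following this common path from getting stuck at any step \(i \le k\), arguing by contradiction from the \emph{earliest} stuck step \(i\). Since no step before \(i\) is stuck, \cref{lemma:completeness} applied up to \(i-1\) yields a symbolic configuration \(s_{i-1}\) with \(s_0 \ieval{}^{i-1} s_{i-1}\) concretizing both \(c_{i-1}\) and \(c_{i-1}'\), and \(s_{i-1}\) is stuck by assumption. The Bug-Finding theorem (\cref{thm:bf}, itself resting on \cref{lemma:stuckinsecure}) then exhibits low-equivalent initial states whose leakages agree for \(i-1\) steps but diverge on the next leak — a constant-time violation witnessed by executions of length \(i \le k\). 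By \cref{hyp:ct_upk}, constant-timeness up to \(k\) entails constant-timeness up to \(i\), contradicting this violation. Hence the symbolic execution does not get stuck up to \(k\), and I would then invoke \cref{lemma:completeness} directly to obtain the required \(s_k\) and \(M'\) with \(s_0 \ieval{}^k s_k\), \(c_k \concsym{l}{M'} s_k\), and \(c_k' \concsym{r}{M'} s_k\).

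The step I expect to be the main obstacle is the bookkeeping inside the no-stuck argument. One must be careful with the off-by-one indexing relating a stuck symbolic state \(s_{i-1}\), the phrasing of \cref{thm:bf} (which locates the violation ``at step \(i-1\)'', i.e.\ on \(i\)-step executions), and the downward closure supplied by \cref{hyp:ct_upk}, so as to be sure the violation really involves executions of length at most \(k\). One must also check that the symbolic predecessor furnished by \cref{lemma:completeness} up to \(i-1\) is genuinely the stuck one; this is safe because, along a fixed concrete path, the applicable symbolic rule at each location is determined — the conditional branch taken and the indirect-jump target are pinned down by the concrete run — so the symbolic execution following that path is unique up to the renaming of fresh variables.
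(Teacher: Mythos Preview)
Your proposal is correct and follows the same structure as the paper's proof: use \cref{thm:bf} (via contrapositive, together with \cref{hyp:ct_upk}) to conclude that symbolic execution does not get stuck up to $k$, then invoke \cref{lemma:completeness}. The paper's proof is considerably terser --- it asserts the no-stuck conclusion in one line and does not spell out the same-path argument at all --- whereas you unpack both hypotheses of the lemma explicitly; your extra bookkeeping about the earliest stuck step and uniqueness of the symbolic run is sound but unnecessary, since the contrapositive of \cref{thm:bf} already rules out \emph{every} stuck symbolic execution from $s_0$ at any step $\le k$.
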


\begin{proof} First, note that from \cref{thm:bf} and the hypothesis
  that \prog{} is constant-time up to \(k\), we know that symbolic
  evaluation from \(s_0\) does not get stuck up to \(k\). Knowing
  this, we can apply \cref{lemma:completeness} which directly entails
  \cref{thm:completeness}.
\end{proof}

\subsubsection{Correct bounded-verification for CT}
Finally, we prove that if symbolic execution does not get stuck due to a
satisfiable insecurity query, then the program is constant-time.
\begin{restatable}[Bounded-Verification for
  CT]{theorem}{boundedverif}\label{thm:bv}
  Let $s_0$ be a symbolic initial configuration for a program $P$. If
  the symbolic evaluation does not get stuck, then $P$ is
  constant-time w.r.t.\ $s_0$. Formally, if for all $k$,
  $s_0 \ieval{}^k s_k$ then for all initial configurations
  \(\cconfvar_0\) and \(\cconfvar_0'\) and model \(M\) such that
  \(\cconfvar_0 \concsym{l}{M} s_0\), and
  \(\cconfvar'_0 \concsym{r}{M} s_0\),
  \begin{equation*}
    \cconfvar_0  \cleval{\leakvar}^k \cconfvar_k ~\wedge~ %
    \cconfvar_0' \cleval{\leakvar'}^k \cconfvar'_k %
    \implies \leakvar = \leakvar'
  \end{equation*}
  Additionally, if \(s_0\) is fully symbolic, then \(P\) is
  constant-time.
\end{restatable}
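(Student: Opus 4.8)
The plan is to prove the displayed implication by induction on the number of steps $k$, but with a \emph{strengthened} induction hypothesis that carries two facts simultaneously: that the prefixes of the two concrete leakages coincide, and that the two concrete executions follow a \emph{common path} up to step $k$. The second component is not optional bookkeeping: \cref{lemma:completeness} only produces a corresponding symbolic configuration when the two concrete runs follow the same path, so I must maintain the common-path invariant in order to re-invoke it at each step. For the constant-time leakage model these two pieces of information are in fact tightly coupled, since the control-flow leakage $\leakfunc_{pc}$ records exactly the sequence of visited locations; equal leakage therefore entails equal control flow, i.e.\ a common path.

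For the base case $k=0$ both sides emit the empty leakage and sit at the entrypoint $\locvar_0$, so the invariant holds trivially. For the inductive step I assume $\cconfvar_0 \cleval{\leakvar}^k \cconfvar_k$ and $\cconfvar_0' \cleval{\leakvar'}^k \cconfvar_k'$ follow the same path with $\leakvar = \leakvar'$, and consider one further concrete transition on each side. Because the two runs share a common path up to $k$ and, by the global non-stuck hypothesis, symbolic evaluation does not get stuck up to $k$, \cref{lemma:completeness} yields a symbolic configuration $s_k$ and a model $M'$ with $\cconfvar_k \concsym{l}{M'} s_k$ and $\cconfvar_k' \concsym{r}{M'} s_k$.

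The crux is then to turn ``symbolic evaluation takes the step $s_k \ieval{} s_{k+1}$ without getting stuck'' into ``the $(k{+}1)$-th concrete leakages agree''. Since the step does not get stuck, every symbolic leakage predicate guarding the applicable rule evaluates to $true$; for constant-time these are $\sleakfunc_{ite}$ or $\sleakfunc_{dj}$ on the branched expression and $\sleakfunc_{@}$ on each memory index, all defined through $\secleak$. By the definition of $\secleak$, whenever it returns $true$ on a relational expression $\rel{\varphi}$ we have $\lproj{\rel{\varphi}} = \rproj{\rel{\varphi}}$ as a consequence of the path predicate, hence $M'(\lproj{\rel{\varphi}}) = M'(\rproj{\rel{\varphi}})$. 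Concretizing through $\cconfvar_k \concsym{l}{M'} s_k$ and $\cconfvar_k' \concsym{r}{M'} s_k$ then forces the associated concrete leaked value---a branch target or a memory address---to coincide on both sides. This at once gives equality of the $(k{+}1)$-th leakage and, via its control-flow component, that both executions take the same branch and so still share a common path up to $k{+}1$, closing the induction. Thus $\leakvar = \leakvar'$ for all $k$, which is exactly constant-time w.r.t.\ $s_0$ in the sense of \cref{def:ct}.

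Finally, for the additional claim I would note that a fully symbolic $s_0$ admits, for \emph{every} pair of low-equivalent concrete initial configurations $\cconfvar_0 \loweq \cconfvar_0'$, a model $M$ with $\cconfvar_0 \concsym{l}{M} s_0$ and $\cconfvar_0' \concsym{r}{M} s_0$: this is the converse direction of \cref{prop:loweq}, and it holds precisely because no input is pinned down in $s_0$, so one may set the symbolic variables from $\cconfvar_0$ on the left and $\cconfvar_0'$ on the right while low-equivalence guarantees the shared (simple) components agree. The body of the theorem then applies to every such pair, giving unconditional constant-time. I expect the main obstacle to be the apparent circularity between ``equal leakage'' and ``common path'': the resolution is to bundle them into one strengthened induction hypothesis, so that the control-flow leakage equality established at step $k{+}1$ is exactly what re-establishes the common-path precondition required to invoke \cref{lemma:completeness} at the next step.
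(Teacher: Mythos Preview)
Your proposal is correct and follows essentially the same approach as the paper: induction on $k$, invoking \cref{lemma:completeness} at each step to obtain a symbolic configuration $s_k$ tracking the two concrete runs, and then using the truth of the symbolic leakage predicates (i.e., $\secleak$) at $s_k$ to force the concrete leakages to coincide. The paper phrases the last step as a contradiction (distinct leakage would make $\secleak$ false, hence stuck) rather than your forward argument, and leaves the ``common path'' precondition of \cref{lemma:completeness} implicit (it follows from the induction hypothesis since the constant-time leakage exposes the program counter); your choice to bundle ``equal leakage'' and ``common path'' into a single strengthened invariant makes that dependency explicit but is otherwise the same proof.
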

\begin{proofoverview} \iftechreport{}(Full proof in \cref{app:bv}) \fi %
  The proof goes by induction on the number of steps. If the program is constant-time up
  to \(k-1\) (induction hypothesis) then from \cref{lemma:completeness} there is
  a symbolic execution for any configurations \(c_{k-1}\) and \(c_{k-1}'\). If
  these configurations produce distinct leakages, then symbolic execution stuck
  at step at step \(k-1\) which is a contradiction. This relies on the fact that
  the symbolic leakage model does not under-approximate the concrete leakage.
\end{proofoverview}

\iftechreport%
\subsection{Adapting theorems and proofs for other leakage
  models}\label{sec:discussion_proofs}
Theorems and proofs in \cref{sec:proofs} are given for the
constant-time property. In this section we discuss how the theorems
and proofs given in \cref{sec:proofs} can be adapted to other leakage
models.
 
Correctness of our symbolic execution (\cref{thm:correctness}) holds
regardless of the leakage model considered. Indeed, we showed that our
symbolic execution makes no over-approximation, without using the
leakage model. Moreover, we can show that (\cref{thm:correctness})
still holds for other leakage models because symbolic leakage
predicates cannot remove constraints from the symbolic state (and
therefore cannot introduce over-approximations).

Bug-finding (\cref{thm:bf}) can also be easily adapted to other leakage models
as long as the symbolic leakage model does not over-approximate the concrete
leakage model. In particular, it still holds for secret-erasure. The adaptation
of \cref{thm:bf} to secret-erasure only requires to show that
\cref{lemma:stuckinsecure} holds for the \rulename{halt} rule. %

Completeness (\cref{thm:completeness}) follows from \cref{lemma:completeness}
and \cref{thm:bf} and thus can be adapted to other leakage models on two
conditions.
First, because our symbolic semantics is blocking on errors, it only applies to
secure programs and its proof relies on the absence of false alarm---which is
given as long as the symbolic leakage model does not over-approximate the
concrete leakage model (\cref{thm:bf}). %
Second, \cref{lemma:completeness} only applies to pairs of concrete executions
following the same path. Therefore, \cref{thm:completeness} only holds for
leakage models leaking the control-flow (i.e., that include the program counter
leakage model). Note that these two conditions are met in the case of
secret-erasure.

Bounded-verification (\cref{thm:bv}) can be adapted to other leakage models on
two conditions. First, because it builds on \cref{lemma:completeness} which only
applies to pairs of concrete executions following the same path, it only holds
for leakage models leaking the control-flow (i.e., that include the program
counter leakage model). Second, it requires to show that the symbolic leakage
model does not under-approximate the concrete leakage model: if a leakage occurs
in concrete execution then this leakage is captured in symbolic execution. These
conditions hold for our definition of secret-erasure, we must just adapt the
proof for the \rulename{halt} rule as the \(\filter\) function delays the
leakage of store values upon termination.
\fi

\section{Experimental Results}\label{sec:expes}
\myparagraph{Implementation.} We implemented our relational symbolic execution, \brelse{}, on top of
the binary-level analyzer \binsec{}~\cite{DBLP:conf/wcre/DavidBTMFPM16}.
\brelse{} takes as input an x86 or ARM executable, a specification of
high inputs and an initial memory configuration (possibly fully
symbolic). It performs bounded exploration of the
program under analysis (up to a user-given depth), and reports
identified violations together with counterexamples (i.e., initial
configurations leading to the vulnerabilities).
In case no violation is reported, if the initial configuration is
fully symbolic and the program has been explored exhaustively then the
program is \emph{proven} secure.
 \brelse{} is composed of a \emph{relational symbolic exploration}
module and an \emph{insecurity analysis} module. The symbolic
exploration module chooses the path to explore, updates the symbolic
configuration, builds the path predicate and ensure that it is
satisfiable. The insecurity analysis module builds insecurity queries
and ensures that they are not satisfiable. It can be configured
according to the leakage model.
 We explore the program in a depth-first search manner and we rely on
the Boolector SMT-solver~\cite{niemetzBoolectorSystemDescription2014},
currently the best on theory
\abv{}~\cite{SMTCOMP,DBLP:conf/lpar/FarinierDBL18}.

\myparagraph{Research questions.}
We investigate  the following research questions:
\begin{description}
  \item[RQ1.] \textbf{Effectiveness: constant-time analysis on real-world cryptographic code.} Is \brelse{} able to perform constant-time analysis on real cryptographic binaries, for both bug finding and bounded-verification?

  \item[RQ2.] \textbf{Genericity.} Is \brelse{}  generic enough to encompass several
        architectures and compilers?

  \item[RQ3.] \textbf{Comparison with standard approaches.} How does
        \brelse{} scale compared to traditional approaches based on self-composition (SC) and RelSE?

  \item[RQ4.] \textbf{Impact of simplifications.} What are the respective  impacts of
        our different simplifications?

  \item[RQ5.] \textbf{Comparison vs. SE.} What is the overhead of \brelse{} compared
        to standard symbolic execution (SE), and can our simplifications be useful for standard SE?

  \item[RQ6.] \textbf{Effectiveness: large scale analysis of scrubbing functions.}
        Is \brelse{} able to verify the secret-erasure property on a large number of binaries?
\end{description}

\myparagraph{Setup.} Experiments were performed on a laptop with an
Intel(R) Core(TM) i5-2520M CPU @ 2.50GHz processor and 32GB of RAM.
Similarly to related work (e.g.~\cite{DBLP:conf/uss/DoychevFKMR13}),
\texttt{esp} is initialized to a concrete value, we start the analysis from the
beginning of the \texttt{main} function, we statically allocate data structures
and the length of keys and buffers is fixed. When not stated otherwise, programs
are compiled for a x86 (32bit) architecture with their default compiler setup.

\myparagraph{Legend.} Throughout this section, \#\(\text{I}\) denotes the number
of static instructions of a program, \#\(\text{I}_{unr}\) is the number of
unrolled instructions explored by the analysis, P is the number of program paths
explored, Time is the execution time give in seconds and \bug{} is the number of
bugs (vulnerable instructions) found.
Status is set to \ccmark{} for secure (exhaustive exploration),
\cxmark{} for insecure, or \hourglass{} for timeout (set to 1 hour).
Additionally, for each program, we report the type of operation
performed and the length of the secret key (Key) and message (Msg)
when applicable (in bytes).

\subsection{Effectiveness of \brelse{} (RQ1)}\label{sec:effectiveness}
We carry out two experiments to assess the effectiveness of our
technique:
\begin{enumerate*}
  \item bounded-verification of secure cryptographic primitives previously
  verified at source- or LLVM-level~\cite{DBLP:conf/esorics/BlazyPT17,DBLP:conf/uss/AlmeidaBBDE16,DBLP:conf/ccs/ZinzindohoueBPB17}
  (\cref{sec:bounded-verification}),
\item automatic replay of known bug
  studies~\cite{DBLP:conf/eurosp/SimonCA18,DBLP:conf/uss/AlmeidaBBDE16,DBLP:conf/sp/AlFardanP13} (\cref{sec:bug-finding}).
\end{enumerate*}
Overall, our study encompasses 338 representative code samples for a
total of 70k machine instructions and 22M unrolled instructions (i.e.,
instructions explored by \brelse{}).

\subsubsection{Bounded-Verification.}\label{sec:bounded-verification}
We analyze a large range of \emph{secure} constant-time cryptographic
primitives (296 samples, 64k instructions), comprising:
\begin{itemize}
\item Several basic constant-time utility functions such as selection
  functions~\cite{DBLP:conf/eurosp/SimonCA18}, sort
  functions~\cite{ImdeasoftwareVerifyingconstanttime}
  and utility functions from
  HACL*~\cite{DBLP:conf/ccs/ZinzindohoueBPB17} %
  and OpenSSL~\cite{OpenSSL}, compiled with \texttt{clang} (versions 3.0, 3.9 and 7.1), and \texttt{gcc} (versions 5.4 and 8.3) and for optimizations levels \texttt{O0} and \texttt{O3};

\item A set of representative constant-time cryptographic primitives
  already studied in the literature on source
  code~\cite{DBLP:conf/esorics/BlazyPT17} or
  LLVM~\cite{DBLP:conf/uss/AlmeidaBBDE16}, including implementations of TEA~\cite{DBLP:conf/fse/WheelerN94},
  Curve25519-donna~\cite{DBLP:conf/pkc/Bernstein06}, \texttt{aes} and
  \texttt{des} encryption functions taken from
  BearSSL~\cite{porninBearSSL}, cryptographic primitives from
  libsodium~\cite{DBLP:conf/latincrypt/BernsteinLS12}, and the constant-time
  padding remove function \texttt{tls-cbc-remove-padding}, extracted from
  OpenSSL~\cite{DBLP:conf/uss/AlmeidaBBDE16};

\item A set of functions from the HACL*
  library~\cite{DBLP:conf/ccs/ZinzindohoueBPB17}.
\end{itemize}

Results are reported in~\cref{tab:bounded-verif}. For each program, \brelse{} is
able to perform an exhaustive exploration without finding any violations of
constant-time in less than 20 minutes. Note that exhaustive exploration is
possible because in cryptographic programs, fixing the input size bounds loops. %
\iftechreport%
\newercontent{Additionally, the scalability of \brelse{} according to the size
  of the input data is evaluated in \cref{app:scale_size} and unbounded loops
  are discussed in \cref{sec:discussion}}. %
\else
\newercontent{Additionally, the scalability of \brelse{} according to the size
  of the input data is evaluated in the companion technical
  report~\cite{techreportbinsecrel} and unbounded loops are discussed in
  \cref{sec:discussion}}.
\fi
These results show that \brelse{} can
perform bounded-verification of real-world cryptographic implementations at
binary-level in a reasonable time, which was impractical with previous
approaches based on self-composition or standard RelSE
(see~\cref{sec:scalability}). \emph{Moreover, this is the first automatic
  constant-time analysis of these cryptographic libraries at the binary-level.}

\begin{table}[htbp]
  \centering
  \setlength{\tabcolsep}{4pt}
  \begin{tabularx}{\textwidth}{lXrrrrrlrr}
    \toprule
    \multicolumn{2}{l}{Description and number of binaries\(^{\dagger}\)}
    & \multicolumn{1}{r}{\(\approx\text{\#I}\)}
    & \multicolumn{1}{r}{\#\(\text{I}_{unr}\)}
    & \multicolumn{1}{r}{P}
    & \multicolumn{1}{r}{Time}
    & \multicolumn{1}{r}{Status}
    & \multicolumn{1}{l}{Type}
    & \multicolumn{1}{r}{Key}
    & \multicolumn{1}{r}{Msg}\\
    \midrule
    \multirow{4}{*}{Utility}
    & ct-select (\(\times 29\))  & 1015 &  1507 & 29 &0.2 & 29 \(\times\) \ccmark{} &
    \multirow{4}{*}{Utility functions} %
    & - & \newercontent{9} \\
    & ct-sort   (\(\times 12\))  & 2400 &  1782 &  12 & 0.2 & 12 \(\times\) \ccmark{} & & - & \newercontent{12}\\
    & Hacl*     (\(\times 110\)) & 3850 & 90953 & 110 & 7.6 & 110 \(\times\) \ccmark{} & & - & \newercontent{2-200}\\
    & OpenSSL   (\(\times 130\)) & 4550 &  5113 & 130 & 0.9 & 130 \(\times\) \ccmark{} & & - & \newercontent{4-12}\\
    \midrule
    \multirow{2}{*}{Tea}
    & decrypt \texttt{-O0} & 290 & 953 & 1 & 0.1 & \ccmark{} &
    \multirow{2}{*}{Block cipher} &
    \multirow{2}{*}{16} &
    \multirow{2}{*}{8} \\
    & decrypt \texttt{-O3} & 250 & 804 & 1 & 0.1 & \ccmark{} &\\
    \midrule
    \multirow{2}{*}{Donna}
        & \texttt{-O0} & 7083 & 10.2M & 1 &  1008.5 & \ccmark{} &
    \multirow{2}{*}{Elliptic curve} &
    \multirow{2}{*}{32} &
    \multirow{2}{*}{-} \\
        & \texttt{-O3} & 4643 &  2.7M & 1 &  347.1 & \ccmark{} \\
    \midrule
    \multirow{4}{*}{Libsodium}
    & salsa20  &  1627 & 38.0k & 1 &  3.5 & \ccmark{} & Stream cipher & 32 & 256 \\
    & chacha20 &  2717 & 12.3k & 1 &  1.5 & \ccmark{} & Stream cipher & 32 & 256 \\
    & sha256   &  4879 & 48.4k & 1 &  4.5 & \ccmark{} & Secure hash & - & 256 \\
    & sha512   & 16312 & 92.0k & 1 &  8.1 & \ccmark{} & Secure hash & - & 256 \\
    \midrule
    \multirow{4}{*}{Hacl*}
    & chacha20   & 1221 &   6.7k & 1 &    4.3 & \ccmark{} & Stream cipher & 32 & 256 \\
    & sha256     & 1279 &  21.0k & 1 &    2.7 & \ccmark{} & Secure hash & - & 256 \\
    & sha512     & 2013 &  47.5k & 1 &    5.2 & \ccmark{} & Secure hash & - & 256 \\
    & curve25519 & 8522 &   9.4M & 1 &  927.8 & \ccmark{} & Elliptic curve & 32 & - \\
    \midrule
    \multirow{2}{*}{BearSSL}
    & aes-ct-cbcenc\(^{\ddagger}\)   & 357 &  3.5k & 1 &   0.5 & \ccmark{} & Block cipher & 240 & 32 \\
    & des-ct-cbcenc\(^{\ddagger}\)   & 682 & 19.9k & 1 & 12.1 & \ccmark{} & Block cipher & 384 & 16 \\
    \midrule
    \multirow{1}{*}{OpenSSL}
    & tls-remove-padding-patch & 424 & 35.7k & 520 & 438.0 & \ccmark{} & Remove padding & - & 63 \\
    \midrule
    \textbf{Total} & 296 binaries & 64114 & 22.8M & 815 & 2772.7 & 296 \(\times\) \ccmark{} & - & - & - \\
    \bottomrule
  \end{tabularx}
  \caption{Bounded verification for constant-time cryptographic implementations.
    \(^{\dagger}\) A line in which the number of binaries is not indicated
    corresponds to 1 binary. \(^{\ddagger}\) \texttt{aes} set to 2 rounds and \texttt{des} set to 2 iterations.}\label{tab:bounded-verif}
\end{table}

\subsubsection{Bug-Finding.}\label{sec:bug-finding}
We take three known bug studies from the
literature~\cite{DBLP:conf/eurosp/SimonCA18,ImdeasoftwareVerifyingconstanttime,DBLP:conf/sp/AlFardanP13}
and replay them automatically at binary-level (42 samples, 6k
instructions), including:
(1) binaries compiled from constant-time sources of a selection
function~\cite{DBLP:conf/eurosp/SimonCA18} and sort
functions~\cite{ImdeasoftwareVerifyingconstanttime},
(2) non-constant-time versions of \texttt{aes} and \texttt{des} from
BearSSL~\cite{porninBearSSL},
(3) the non-constant-time version of OpenSSL's
\texttt{tls-cbc-remove-padding}\footnote{\url{https://github.com/openssl/openssl/blob/OpenSSL_1_0_1/ssl/d1_enc.c}\label{fnote:lucky13}}
responsible for the famous Lucky13 attack~\cite{DBLP:conf/sp/AlFardanP13}.

Results are reported in~\cref{tab:bug-finding} with \emph{fault-packing
  disabled} to report vulnerabilities at the instruction level. All
bugs have been found within the timeout.
Interestingly, we found 3 \emph{unexpected binary-level vulnerabilities
  (from secure source codes) that slipped through prior analysis}:
\begin{itemize}
  \item function \texttt{ct\_select\_v1}~\cite{DBLP:conf/eurosp/SimonCA18} was
        deemed secured through binary-level manual inspection, still we confirm
        that any version of \texttt{clang} with \texttt{-O3} introduces a
        secret-dependent conditional jump which violates constant-time;
  \item functions \texttt{ct\_sort} and \texttt{ct\_sort\_mult}, verified by
        ct-verif~\cite{DBLP:conf/uss/AlmeidaBBDE16} (LLVM
        bitcode compiled with \texttt{clang}), are vulnerable when compiled
        with \texttt{gcc -O0} or \texttt{clang -O3 -m32 -march=i386} (details in
        \cref{sec:compilers}).
\end{itemize}

\begin{table}[!htbp]
  \centering
  \setlength{\tabcolsep}{3.8pt}
  \begin{tabularx}{\textwidth}{lXrrrrcrrlll}
    \toprule
    \multicolumn{2}{l}{Description and nb.\ of binaries\(^{\dagger}\)}
    & \multicolumn{1}{r}{\(\approx\text{\#I}\)}
    & \multicolumn{1}{r}{\#\(\text{I}_{unr}\)}
    & \multicolumn{1}{r}{P}
    & \multicolumn{1}{r}{Time}
    & CT
    & Status
    & \bug{}
    & \multicolumn{1}{l}{Type}
    & \multicolumn{1}{l}{Key}
    & \multicolumn{1}{l}{Msg}\\
    \midrule
    \multirow{2}{*}{Utility}
    & ct-select (\(\times 21\)) &  735 &  767 & 42 & 0.4 & Y & 21 \(\times\) \cxmark{} (1 new) & 21 & \multirow{2}{*}{Utility func.} & - & - \\
    & ct-sort  (\(\times 18\)) & 3600 & 7513 & 138 & 13.6 & Y & 18 \(\times\) \cxmark{} (2 new) & 44 &  & - & - \\
    \midrule
    \multirow{2}{*}{BearSSL}
    & aes-big-cbcenc\(^{\ddagger}\) & 375 &   876 & 1 & 1651.8 & N & \cxmark{} & 32 & Block cipher & 240 & 32 \\
    & des-tab-cbcenc\(^{\ddagger}\) & 365 &  5187 & 1 & 4.4 & N & \cxmark{} &  8 & Block cipher & 384 & 16 \\
    \midrule
    OpenSSL & tls-rem-pad-lucky13                               
       & 950 &  7866 & 375 & 700.3 & N & \cxmark & 6 & Remove pad. & - & 63 \\
    \midrule
    \textbf{Total} & 42 binaries & 6025 & 22209 & 557 & 2370.5 & - & 42 \(\times\) \cxmark & 111 & - & - & -  \\
    \bottomrule
  \end{tabularx}
  \caption{Bug-finding for constant-time in cryptographic implementations.
    \(^{\dagger}\) A line in which the number of binaries is not indicated
    corresponds to 1 binary. \(^{\ddagger}\) \texttt{aes} set to 2 rounds and \texttt{des} set to 2 iterations.}\label{tab:bug-finding}
\end{table}

\myparagraph{Conclusion (RQ1).} We perform an extensive analysis over 338
samples of representative cryptographic primitive studied in the
literature~\cite{DBLP:conf/esorics/BlazyPT17,DBLP:conf/ccs/ZinzindohoueBPB17,DBLP:conf/uss/AlmeidaBBDE16}
Overall, it demonstrates that \brelse{} does scale to realistic applications for
both bug-finding and bounded-verification. As a side-result, we also proved
CT-secure 296 binaries of interest.

\subsection{Preservation of Constant-Time by Compilers (RQ2).}\label{sec:compilers} %
\newcontent{In this section, we present an easily extensible framework, based on \brelse{},
to check constant-time for small programs under multiple compiler
setups\footnote{\url{https://github.com/binsec/rel_bench/tree/main/properties_vs_compilers/ct}}.}
Using this framework, we replay a prior \emph{manual}
study~\cite{DBLP:conf/eurosp/SimonCA18}, which analyzed whether \texttt{clang}
optimizations break the constant-time property, for 5 different versions of a
selection function (\texttt{ct-select}). %
We reproduce their analysis in an \emph{automatic} manner and extend it
significantly, adding:
29 new functions, 3 newer version of \texttt{clang} (7.1.0, 9.0.1 and
11.0.1), the \texttt{gcc} compiler, and 2 new architectures (i.e., \texttt{i686}
and \texttt{arm}, while only \texttt{i386} was considered in the initial
study)---for a total of \newercontent{4148} configurations (192 in the initial
study).

\newcontent{Additionally, we investigate the impact of individual optimizations
  on the preservation of constant-time. For \texttt{clang}, we target the
  \texttt{-x86-cmov-converter} which converts x86 \dbainline{cmov} instructions
  into branches when profitable and which is known to play a role in the
  preservation of constant-time~\cite{cmov-conversion}. In particular, we
  evaluate the impact of selectively disabling this optimization, by passing the
  flags \texttt{-O3 -mllvm -x86-cmov-converter=0} to \texttt{clang}, which we
  denote \texttt{O3\textsuperscript{-}}.
  For \texttt{gcc}, we target the if-conversion (i.e., \texttt{-fif-conversion
    -fif-conversion2 -ftree-loop-if-convert}), which transforms conditional
  jumps into branchless equivalent. In particular, we evaluate the impact of
  selectively \emph{enabling} this optimization, by passing the flags
  \texttt{-O0 -fif-conversion -fif-conversion2 -ftree-loop-if-convert} to
  \texttt{gcc}, (denoted \texttt{O0\textsuperscript{+}}); and the impact of
  selectively \emph{disabling} this optimization using \texttt{-O3
    -fno-if-conversion -fno-if-conversion2 -fno-tree-loop-if-convert} (denoted
  \texttt{O3\textsuperscript{-}}).
  \newercontent{Bear in mind that the \texttt{i386} architecture does not feature
    \dbainline{cmov} instructions but \texttt{i686} does.}
  Results are presented in \cref{tab:verif_compilers}. Results for
  \texttt{O3\textsuperscript{-}} are not applicable to \texttt{clang-3.0} and
  \texttt{clang-3.9} (denoted - in the table) as these versions do not recognize
  the \texttt{-x86-cmov-converter} argument.}

\begin{table}[!htbp]
  \footnotesize
  \setlength{\tabcolsep}{.5pt}
 \centering
  \begin{tabularx}{\textwidth}{X *{6}{*{5}{c} @{\hskip 6pt}} *{2}{*{6}{c} @{\hskip 6pt}} *{6}{c}}
    \toprule
    \multicolumn{1}{l}{\textbf{compiler}}
    & \multicolumn{5}{c}{clang-3.0}
    & \multicolumn{5}{c}{clang-3.9}
    & \multicolumn{5}{c}{clang-7.1}
    & \multicolumn{5}{c}{clang-7.1}
    & \multicolumn{5}{c}{clang-9/11}
    & \multicolumn{5}{c}{clang-9/11}
    & \multicolumn{6}{c}{gcc-all}
    & \multicolumn{6}{c}{gcc-all}
    & \multicolumn{6}{c}{gcc-10.3}\\
    \multicolumn{1}{l}{\textbf{arch}}
    & \multicolumn{5}{c}{i386/i686}
    & \multicolumn{5}{c}{i386/i686}
    & \multicolumn{5}{c}{i386}
    & \multicolumn{5}{c}{i686}
    & \multicolumn{5}{c}{i386}
    & \multicolumn{5}{c}{i686}
    & \multicolumn{6}{c}{i386}
    & \multicolumn{6}{c}{i686}
    & \multicolumn{6}{c}{arm}
    \\
    \multicolumn{1}{l}{\textbf{opt-level}}
    & \texttt{0} & \texttt{1} & \texttt{2} & \texttt{3} & \texttt{3\textsuperscript{-}}
    & \texttt{0} & \texttt{1} & \texttt{2} & \texttt{3} & \texttt{3\textsuperscript{-}}
    & \texttt{0} & \texttt{1} & \texttt{2} & \texttt{3} & \texttt{3\textsuperscript{-}}
    & \texttt{0} & \texttt{1} & \texttt{2} & \texttt{3} & \texttt{3\textsuperscript{-}}
    & \texttt{0} & \texttt{1} & \texttt{2} & \texttt{3} & \texttt{3\textsuperscript{-}}
    & \texttt{0} & \texttt{1} & \texttt{2} & \texttt{3} & \texttt{3\textsuperscript{-}}
    & \texttt{0} & \texttt{0\textsuperscript{+}} & \texttt{1} & \texttt{2} & \texttt{3} & \texttt{3\textsuperscript{-}}
    & \texttt{0} & \texttt{0\textsuperscript{+}} & \texttt{1} & \texttt{2} & \texttt{3} & \texttt{3\textsuperscript{-}}
     & \texttt{0} & \texttt{0\textsuperscript{+}} & \texttt{1} & \texttt{2} & \texttt{3} & \texttt{3\textsuperscript{-}}\\
    \midrule
    % -- CLANG-GCC
\texttt{ct\_select\_v1} & \ccmark{} & \ccmark{} & \textsc{\textcolor{red}{c}} & \textsc{\textcolor{red}{c}} & - & \ccmark{} & \ccmark{} & \textsc{\textcolor{red}{c}} & \textsc{\textcolor{red}{c}} & - & \ccmark{} & \ccmark{} & \textsc{\textcolor{red}{c}} & \textsc{\textcolor{red}{c}} & \textsc{\textcolor{red}{c}} & \ccmark{} & \ccmark{} & \textsc{\textcolor{red}{m}} & \textsc{\textcolor{red}{m}} & \textsc{\textcolor{red}{m}} & \ccmark{} & \ccmark{} & \textsc{\textcolor{red}{c}} & \textsc{\textcolor{red}{c}} & \textsc{\textcolor{red}{c}} & \ccmark{} & \ccmark{} & \textsc{\textcolor{red}{m}} & \textsc{\textcolor{red}{m}} & \textsc{\textcolor{red}{m}} & \ccmark{} & \ccmark{} & \ccmark{} & \ccmark{} & \ccmark{} & \ccmark{} & \ccmark{} & \ccmark{} & \ccmark{} & \ccmark{} & \ccmark{} & \ccmark{} & \ccmark{} & \ccmark{} & \ccmark{} & \ccmark{} & \ccmark{} & \ccmark{}\\
\texttt{ct\_select\_v2} & \ccmark{} & \textsc{\textcolor{red}{c}} & \textsc{\textcolor{red}{c}} & \textsc{\textcolor{red}{c}} & - & \ccmark{} & \textsc{\textcolor{red}{c}} & \textsc{\textcolor{red}{c}} & \textsc{\textcolor{red}{c}} & - & \ccmark{} & \textsc{\textcolor{red}{c}} & \textsc{\textcolor{red}{c}} & \textsc{\textcolor{red}{c}} & \textsc{\textcolor{red}{c}} & \ccmark{} & \textsc{\textcolor{red}{m}} & \textsc{\textcolor{red}{m}} & \textsc{\textcolor{red}{m}} & \textsc{\textcolor{red}{m}} & \ccmark{} & \textsc{\textcolor{red}{c}} & \textsc{\textcolor{red}{c}} & \textsc{\textcolor{red}{c}} & \textsc{\textcolor{red}{c}} & \ccmark{} & \textsc{\textcolor{red}{m}} & \textsc{\textcolor{red}{m}} & \textsc{\textcolor{red}{m}} & \textsc{\textcolor{red}{m}} & \ccmark{} & \ccmark{} & \ccmark{} & \ccmark{} & \ccmark{} & \ccmark{} & \ccmark{} & \ccmark{} & \ccmark{} & \ccmark{} & \ccmark{} & \ccmark{} & \ccmark{} & \ccmark{} & \ccmark{} & \ccmark{} & \ccmark{} & \ccmark{}\\
\texttt{ct\_select\_v3} & \ccmark{} & \ccmark{} & \ccmark{} & \ccmark{} & - & \ccmark{} & \textsc{\textcolor{red}{c}} & \textsc{\textcolor{red}{c}} & \textsc{\textcolor{red}{c}} & - & \ccmark{} & \textsc{\textcolor{red}{c}} & \textsc{\textcolor{red}{c}} & \textsc{\textcolor{red}{c}} & \textsc{\textcolor{red}{c}} & \ccmark{} & \textsc{\textcolor{red}{m}} & \textsc{\textcolor{red}{m}} & \textsc{\textcolor{red}{m}} & \textsc{\textcolor{red}{m}} & \ccmark{} & \textsc{\textcolor{red}{c}} & \textsc{\textcolor{red}{c}} & \textsc{\textcolor{red}{c}} & \textsc{\textcolor{red}{c}} & \ccmark{} & \textsc{\textcolor{red}{m}} & \textsc{\textcolor{red}{m}} & \textsc{\textcolor{red}{m}} & \textsc{\textcolor{red}{m}} & \ccmark{} & \ccmark{} & \ccmark{} & \ccmark{} & \ccmark{} & \ccmark{} & \ccmark{} & \ccmark{} & \ccmark{} & \ccmark{} & \ccmark{} & \ccmark{} & \ccmark{} & \ccmark{} & \ccmark{} & \ccmark{} & \ccmark{} & \ccmark{}\\
\texttt{ct\_select\_v4} & \ccmark{} & \textsc{\textcolor{red}{c}} & \textsc{\textcolor{red}{c}} & \textsc{\textcolor{red}{c}} & - & \ccmark{} & \textsc{\textcolor{red}{c}} & \textsc{\textcolor{red}{c}} & \textsc{\textcolor{red}{c}} & - & \ccmark{} & \textsc{\textcolor{red}{c}} & \textsc{\textcolor{red}{c}} & \textsc{\textcolor{red}{c}} & \textsc{\textcolor{red}{c}} & \ccmark{} & \textsc{\textcolor{red}{m}} & \textsc{\textcolor{red}{m}} & \textsc{\textcolor{red}{m}} & \textsc{\textcolor{red}{m}} & \ccmark{} & \textsc{\textcolor{red}{c}} & \textsc{\textcolor{red}{c}} & \textsc{\textcolor{red}{c}} & \textsc{\textcolor{red}{c}} & \ccmark{} & \textsc{\textcolor{red}{m}} & \textsc{\textcolor{red}{m}} & \textsc{\textcolor{red}{m}} & \textsc{\textcolor{red}{m}} & \ccmark{} & \ccmark{} & \ccmark{} & \ccmark{} & \ccmark{} & \ccmark{} & \ccmark{} & \ccmark{} & \ccmark{} & \ccmark{} & \ccmark{} & \ccmark{} & \ccmark{} & \ccmark{} & \ccmark{} & \ccmark{} & \ccmark{} & \ccmark{}\\
\texttt{naive\_select} & \textsc{\textcolor{red}{c}} & \textsc{\textcolor{red}{c}} & \textsc{\textcolor{red}{c}} & \textsc{\textcolor{red}{c}} & - & \textsc{\textcolor{red}{c}} & \textsc{\textcolor{red}{c}} & \textsc{\textcolor{red}{c}} & \textsc{\textcolor{red}{c}} & - & \textsc{\textcolor{red}{c}} & \textsc{\textcolor{red}{c}} & \textsc{\textcolor{red}{c}} & \textsc{\textcolor{red}{c}} & \textsc{\textcolor{red}{c}} & \textsc{\textcolor{red}{c}} & \textsc{\textcolor{red}{m}} & \textsc{\textcolor{red}{m}} & \textsc{\textcolor{red}{m}} & \textsc{\textcolor{red}{m}} & \textsc{\textcolor{red}{c}} & \textsc{\textcolor{red}{c}} & \textsc{\textcolor{red}{c}} & \textsc{\textcolor{red}{c}} & \textsc{\textcolor{red}{c}} & \textsc{\textcolor{red}{c}} & \textsc{\textcolor{red}{m}} & \textsc{\textcolor{red}{m}} & \textsc{\textcolor{red}{m}} & \textsc{\textcolor{red}{m}} & \textsc{\textcolor{red}{c}} & \textsc{\textcolor{red}{c}} & \textsc{\textcolor{red}{c}} & \textsc{\textcolor{red}{c}} & \textsc{\textcolor{red}{c}} & \textsc{\textcolor{red}{c}} & \textsc{\textcolor{red}{c}} & \textsc{\textcolor{red}{c}} & \ccmark{} & \ccmark{} & \ccmark{} & \textsc{\textcolor{red}{c}} & \textsc{\textcolor{red}{c}} & \textsc{\textcolor{red}{c}} & \ccmark{} & \ccmark{} & \ccmark{} & \textsc{\textcolor{red}{c}}\\
\midrule
\texttt{sort} & \ccmark{} & \textsc{\textcolor{red}{c}} & \textsc{\textcolor{red}{c}} & \textsc{\textcolor{red}{c}} & - & \ccmark{} & \textsc{\textcolor{red}{c}} & \textsc{\textcolor{red}{c}} & \textsc{\textcolor{red}{c}} & - & \ccmark{} & \ccmark{} & \ccmark{} & \ccmark{} & \ccmark{} & \ccmark{} & \ccmark{} & \ccmark{} & \ccmark{} & \ccmark{} & \ccmark{} & \textsc{\textcolor{red}{c}} & \textsc{\textcolor{red}{c}} & \textsc{\textcolor{red}{c}} & \textsc{\textcolor{red}{c}} & \ccmark{} & \textsc{\textcolor{red}{m}} & \textsc{\textcolor{red}{m}} & \textsc{\textcolor{red}{m}} & \textsc{\textcolor{red}{m}} & \textsc{\textcolor{red}{c}} & \textsc{\textcolor{red}{c}} & \ccmark{} & \ccmark{} & \ccmark{} & \ccmark{} & \textsc{\textcolor{red}{c}} & \textsc{\textcolor{red}{c}} & \ccmark{} & \ccmark{} & \ccmark{} & \ccmark{} & \textsc{\textcolor{red}{c}} & \textsc{\textcolor{red}{c}} & \ccmark{} & \ccmark{} & \ccmark{} & \ccmark{}\\
\texttt{sort\_multiplex} & \ccmark{} & \textsc{\textcolor{red}{c}} & \textsc{\textcolor{red}{c}} & \textsc{\textcolor{red}{c}} & - & \ccmark{} & \textsc{\textcolor{red}{c}} & \textsc{\textcolor{red}{c}} & \textsc{\textcolor{red}{c}} & - & \ccmark{} & \ccmark{} & \ccmark{} & \ccmark{} & \ccmark{} & \ccmark{} & \ccmark{} & \ccmark{} & \ccmark{} & \ccmark{} & \ccmark{} & \textsc{\textcolor{red}{c}} & \textsc{\textcolor{red}{c}} & \textsc{\textcolor{red}{c}} & \textsc{\textcolor{red}{c}} & \ccmark{} & \textsc{\textcolor{red}{m}} & \textsc{\textcolor{red}{m}} & \textsc{\textcolor{red}{m}} & \textsc{\textcolor{red}{m}} & \textsc{\textcolor{red}{c}} & \textsc{\textcolor{red}{c}} & \ccmark{} & \ccmark{} & \ccmark{} & \ccmark{} & \textsc{\textcolor{red}{c}} & \textsc{\textcolor{red}{c}} & \ccmark{} & \ccmark{} & \ccmark{} & \ccmark{} & \textsc{\textcolor{red}{c}} & \textsc{\textcolor{red}{c}} & \ccmark{} & \ccmark{} & \ccmark{} & \ccmark{}\\
\texttt{sort\_naive} & \textsc{\textcolor{red}{c}} & \textsc{\textcolor{red}{c}} & \textsc{\textcolor{red}{c}} & \textsc{\textcolor{red}{c}} & - & \textsc{\textcolor{red}{c}} & \textsc{\textcolor{red}{c}} & \textsc{\textcolor{red}{c}} & \textsc{\textcolor{red}{c}} & - & \textsc{\textcolor{red}{c}} & \textsc{\textcolor{red}{c}} & \textsc{\textcolor{red}{c}} & \textsc{\textcolor{red}{c}} & \textsc{\textcolor{red}{c}} & \textsc{\textcolor{red}{c}} & \textsc{\textcolor{red}{m}} & \textsc{\textcolor{red}{m}} & \textsc{\textcolor{red}{m}} & \textsc{\textcolor{red}{m}} & \textsc{\textcolor{red}{c}} & \textsc{\textcolor{red}{c}} & \textsc{\textcolor{red}{c}} & \textsc{\textcolor{red}{c}} & \textsc{\textcolor{red}{c}} & \textsc{\textcolor{red}{c}} & \textsc{\textcolor{red}{m}} & \textsc{\textcolor{red}{m}} & \textsc{\textcolor{red}{m}} & \textsc{\textcolor{red}{m}} & \textsc{\textcolor{red}{c}} & \textsc{\textcolor{red}{c}} & \textsc{\textcolor{red}{c}} & \textsc{\textcolor{red}{c}} & \textsc{\textcolor{red}{c}} & \textsc{\textcolor{red}{c}} & \textsc{\textcolor{red}{c}} & \textsc{\textcolor{red}{c}} & \textsc{\textcolor{red}{c}} & \textsc{\textcolor{red}{c}} & \textsc{\textcolor{red}{c}} & \textsc{\textcolor{red}{c}} & \textsc{\textcolor{red}{c}} & \textsc{\textcolor{red}{c}} & \ccmark{} & \ccmark{} & \ccmark{} & \textsc{\textcolor{red}{c}}\\
\midrule
\texttt{HACL*-utility \(\times 11\)} & \ccmark{} & \ccmark{} & \ccmark{} & \ccmark{} & - & \ccmark{} & \ccmark{} & \ccmark{} & \ccmark{} & - & \ccmark{} & \ccmark{} & \ccmark{} & \ccmark{} & \ccmark{} & \ccmark{} & \ccmark{} & \ccmark{} & \ccmark{} & \ccmark{} & \ccmark{} & \ccmark{} & \ccmark{} & \ccmark{} & \ccmark{} & \ccmark{} & \ccmark{} & \ccmark{} & \ccmark{} & \ccmark{} & \ccmark{} & \ccmark{} & \ccmark{} & \ccmark{} & \ccmark{} & \ccmark{} & \ccmark{} & \ccmark{} & \ccmark{} & \ccmark{} & \ccmark{} & \ccmark{} & \ccmark{} & \ccmark{} & \ccmark{} & \ccmark{} & \ccmark{} & \ccmark{}\\
\texttt{OpenSSL-utility \(\times 13\)} & \ccmark{} & \ccmark{} & \ccmark{} & \ccmark{} & - & \ccmark{} & \ccmark{} & \ccmark{} & \ccmark{} & - & \ccmark{} & \ccmark{} & \ccmark{} & \ccmark{} & \ccmark{} & \ccmark{} & \ccmark{} & \ccmark{} & \ccmark{} & \ccmark{} & \ccmark{} & \ccmark{} & \ccmark{} & \ccmark{} & \ccmark{} & \ccmark{} & \ccmark{} & \ccmark{} & \ccmark{} & \ccmark{} & \ccmark{} & \ccmark{} & \ccmark{} & \ccmark{} & \ccmark{} & \ccmark{} & \ccmark{} & \ccmark{} & \ccmark{} & \ccmark{} & \ccmark{} & \ccmark{} & \ccmark{} & \ccmark{} & \ccmark{} & \ccmark{} & \ccmark{} & \ccmark{}\\
\midrule
\texttt{tea-enc/dec \(\times 2\)} & \ccmark{} & \ccmark{} & \ccmark{} & \ccmark{} & - & \ccmark{} & \ccmark{} & \ccmark{} & \ccmark{} & - & \ccmark{} & \ccmark{} & \ccmark{} & \ccmark{} & \ccmark{} & \ccmark{} & \ccmark{} & \ccmark{} & \ccmark{} & \ccmark{} & \ccmark{} & \ccmark{} & \ccmark{} & \ccmark{} & \ccmark{} & \ccmark{} & \ccmark{} & \ccmark{} & \ccmark{} & \ccmark{} & \ccmark{} & \ccmark{} & \ccmark{} & \ccmark{} & \ccmark{} & \ccmark{} & \ccmark{} & \ccmark{} & \ccmark{} & \ccmark{} & \ccmark{} & \ccmark{} & \ccmark{} & \ccmark{} & \ccmark{} & \ccmark{} & \ccmark{} & \ccmark{} \\ \bottomrule
  \end{tabularx}
  \caption{Preservation of constant-time for several programs compiled with
    \texttt{gcc} or \texttt{clang} for \texttt{i386}, \texttt{i686} or \texttt{arm}
    architectures and optimization levels \texttt{O0},
    \texttt{O0\textsuperscript{+}} \texttt{O1}, \texttt{O2}, \texttt{O3} or
    \texttt{O3\textsuperscript{-}}. \ccmark{} indicates that a program is secure
    whereas \textsc{\textcolor{red}{c}} (resp. \textsc{\textcolor{red}{m}})
    indicates that a program is insecure due to secret-dependent control-flow
    (resp.\ memory access). \texttt{gcc-all} denotes versions 5.4.0, 6.2.0,
    7.2.0, 8.3.0 and 10.2.0 of \texttt{gcc}.}\label{tab:verif_compilers}
\end{table}

We confirm the main conclusion of Simon \emph{et
  al.}~\cite{DBLP:conf/eurosp/SimonCA18} that \texttt{clang} is more likely to
optimize away constant-time protections as the optimization level increases.
However, \emph{contrary to their work}, our experiments show that newer
versions of \texttt{clang} are not necessarily more likely than older
ones to break constant-time (e.g. \texttt{ct\_sort} is compiled to a
non-constant-time code with \texttt{clang-3.9} but not with
\texttt{clang-7.1}).

Surprisingly, in contrast with \texttt{clang}, \texttt{gcc} optimizations tend
to remove branches and thus, are less likely to introduce vulnerabilities in
constant-time code. Especially, \texttt{gcc} for ARM produces \emph{secure
  binaries from the insecure source codes}. Indeed, the compiler takes advantage
of the many ARM conditional instructions to remove conditional jumps in
\texttt{sort\_naive} and \texttt{naive\_select}. %
\newercontent{This also applies to the \texttt{i686} architecture but only for
  \texttt{naive\_select}.} %
\newcontent{We conclude that the if-conversion passes of \texttt{gcc} play a
  role here, as disabling them (\texttt{O3\textsuperscript{-}}) produces
  insecure binaries.} %
\newercontent{However, the fact that \texttt{O0\textsuperscript{+}} is still
  insecure shows that if-conversion passes must be combined with other
  optimizations (at least \texttt{O1}) to effectively remove conditional jumps.}

\newercontent{Finally, we found that constant-time sort functions, taken from
  the benchmark of the \texttt{ct-verif}~\cite{DBLP:conf/uss/AlmeidaBBDE16}
  tool, can be compiled to insecure binaries for two different reasons %
  \iftechreport%
  (both detailed in \cref{app:ct-sort}):
  \else%
  (details are provided in the technical report~\cite{techreportbinsecrel}).
  \fi
  \begin{itemize}
  \item For the \texttt{i386} architecture and old compilers, conditional
        \dbainline{select} LLVM instructions are compiled to conditional jumps
        because target architectures do not feature \dbainline{cmov}
        instructions. These violations are introduced in \emph{backend passes}
        of \texttt{clang}, making them of reach of LLVM verification tools like
        \texttt{ct-verif}\footnote{We did confirm that \texttt{ct-verif} with
        the setting \texttt{--clang-options="-O3 -m32 -march=i386"} does not
        report the vulnerability.};
    \item More interestingly, we found that for more recent architectures
          featuring \dbainline{cmov} (i.e., \texttt{i686}), the use of
          \dbainline{cmov} might \emph{introduce secret-dependent memory
          accesses}. Indeed, the compiler introduces a secret-dependent pointer
          selection, done with \dbainline{cmov}, which results in a memory-based
          leak when the pointer is dereferenced.
\end{itemize}
We also remark that disabling the \texttt{-x86-cmov-converter} does not change
anything in our settings. } %

\myparagraph{Conclusion (RQ2).} %
This study shows that \brelse{} is generic in the sense that it can be applied
with different versions and options of \texttt{clang} and \texttt{gcc}, over x86
and ARM. We also get the following interesting results:
\begin{itemize}
  \item \newcontent{We found that, contrary to \texttt{clang}, \texttt{gcc}
        optimizations tend to help enforcing constant-time---\texttt{gcc -O3}
        preserves constant-time in all our examples. \texttt{gcc} even
        sometimes produces secure binaries from insecure sources thanks to the
        if-conversion passes};
  \item We found that backend passes of \texttt{clang} can introduce
        vulnerabilities in codes that are secure at the LLVM level;
  \item \newercontent{We found that \texttt{clang} use of \dbainline{cmov}
        instructions might introduce secret-dependent memory accesses};
  \item \newcontent{Finally, this study shows that the preservation of
        constant-time by compilers depends on multiple factors and cannot simply
        rely on enabling/disabling optimizations. Instead, compiler-based
        hardening~\cite{DBLP:conf/ccs/BorrelloDQG21,DBLP:conf/uss/RaneLT15} or
        property preservation~\cite{DBLP:conf/eurosp/SimonCA18} seem promising
        directions, in which \brelse{} could be used for validation.}
\end{itemize}

\subsection{Comparison against Standard Techniques (RQ3,RQ4,RQ5)}\label{sec:scalability}
We compare \brelse{} with standard techniques based on
self-composition and relational symbolic execution (\textit{RelSE})
(\cref{sec:comparison_std}), then we analyze the preformance of our different
simplifications (\cref{sec:perfs-simpl}), and finally we investigate the overhead of
\brelse{} compared to standard SE, and whether our simplifications are
useful for SE (\cref{sec:comparison_se}).

Experiments are performed on the programs introduced in
\cref{sec:effectiveness} for bug-finding and
bounded-verification %
(338 samples, 70k instructions).
We report the following metrics: total number of unrolled instruction
\#\(\text{I}_{unr}\), number of instruction explored per seconds
(\#\(\text{I}_{unr}\)/s), total number of queries sent to the solver (\#Q),
number of exploration (resp.\ insecurity) queries (\(\text{\#Q}_{\text{e}}\)),
(resp.\ \(\text{\#Q}_{\text{i}}\)), total execution time (T), timeouts
(\hourglass), programs proven secure (\ccmark), programs proven insecure
(\cxmark), unknown status (\csmark). Timeout is set to 3600 seconds.

\subsubsection{Comparison vs. Standard Approaches (RQ3).}\label{sec:comparison_std}
We evaluate \brelse{} against \textit{SC} and
\textit{RelSE}. %
Since no implementation of these methods fits our particular use-cases,
we implement them directly in \binsec{}. \textit{RelSE} is obtained by
disabling \brelse{} optimizations (\cref{sec:optims}), while
\textit{SC} is implemented on top of \textit{RelSE} by duplicating low
inputs instead of sharing them and adding the adequate preconditions.
Results are given in \cref{tab:scale_total_summary}.

\begin{table}[ht]
  \begin{tabular}{lrrrrrrrrr}
   \toprule
   &
  \#\(\text{I}_{unr}\) &
  \#\(\text{I}_{unr}\)/s &
  \#Q &
  \begin{tabular}{@{}c@{}}\(\text{\#Q}_{\text{e}}\)\end{tabular} &
  \begin{tabular}{@{}c@{}}\(\text{\#Q}_{\text{i}}\)\end{tabular} &
  Time &
  \hourglass{} &
  \ccmark{} &
  \cxmark{} \\
    \midrule

    \textit{SC}    &           248k &     3.9 &       158k &         14k &       143k &    64296 &        16 &     280 &        42 \\
    \textit{RelSE} &           349k &     6.2 &        90k &         17k &        73k &    56428 &        13 &     283 &        42 \\
    \midrule
    \brelse{}      &          22.8M &  6238 &         3700 &        2408 &       1292 &     3657 &         0 &     296 &        42 \\
  \bottomrule
\end{tabular}
\caption{ \brelse{} vs.\ standard approaches.\label{tab:scale_total_summary}}
\end{table}

While \textit{RelSE} performs slightly better than \textit{SC}
(\newcontent{\(1.6 \times\) speedup in terms of \#\(\text{I}_{unr}/s\)}) thanks
to a noticeable reduction of the number of queries (approximately 50\%), both
techniques are not efficient enough on binary code:
\textit{RelSE} times out in 13 cases and achieves an analysis speed of only 6.2
instructions per second while \textit{SC} is worse.
\brelse{} completely outperforms both previous approaches:
\begin{itemize}
  \item The optimizations implemented in \brelse{} drastically reduce the
        number of queries sent to the solver (\(57\times\) less insecurity
        queries than \textit{RelSE});
  \item \brelse{} reports no timeout, \newcontent{is \(1000\times\) faster than
        \textit{RelSE} and \(1600\times\) faster than \textit{SC} in terms of
        \#\(\text{I}_{unr}/s\)};
  \item \brelse{} can perform bounded-verification of large programs (e.g.
        \texttt{donna}, \texttt{des-ct}, \texttt{chacha20}, etc.) that were out
        of reach of prior approaches.
\end{itemize}

\subsubsection{Performance of Simplifications
  (RQ4).}\label{sec:perfs-simpl} %
We evaluate the performance of our individual optimizations: on-the-fly
read-over-write (\textit{FlyRow}), untainting (\textit{Unt}) and fault-packing
(\textit{FP}). Results are reported in \cref{tab:scale_optims}:

\begin{table}[!htbp]
\begin{tabularx}{\textwidth}{X l rrrrrrrrr}
  \toprule
  & Version &
  \#\(\text{I}_{unr}\) &
  \#\(\text{I}_{unr}\)/s &
  \#Q &
  \begin{tabular}{c}\(\text{\#Q}_{\text{e}}\)\end{tabular} &
  \begin{tabular}{c}\(\text{\#Q}_{\text{i}}\)\end{tabular} &
  Time &                                                      
  \hourglass{} &
  \ccmark{} &
  \cxmark{} \\
  \midrule
  \multirow{3}{*}{\begin{tabular}{@{}l}
                    \textbf{Standard RelSE}\\ \textbf{(without \textit{FlyRow})}
                  \end{tabular}}
  &\textit{RelSE}       &           349k &     6.2 &        90148 &         17428 &        72720 &    56429 &        13 &     283 &        42 \\
  &+ \textit{Unt}       &           414k &     9.9 &        48648 &         20601 &        28047 &    41852 &         7 &     289 &        42 \\
  &+ \textit{FP}        &           437k &    12.7 &        35100 &         21834 &        13266 &    34471 &         7 &     289 &        42 \\
  \midrule
  \multirow{3}{*}{\begin{tabular}{@{}l}
                    \textbf{\brelse{}}\\ \textbf{(with \textit{FlyRow})}
                  \end{tabular}}
  &\textit{RelSE + FlyRow}  &         22.8M &  4450 &      3738 &          2408 &         1330 &     5127 &         0 &     296 &        42 \\
  &+ \textit{Unt}         &         22.8M &  4429 &      3738 &          2408 &         1330 &     5151 &         0 &     296 &        42 \\
  &+ \textit{FP}          &         22.8M &  6238 &      3700 &          2408 &         1292 &     3658 &         0 &     296 &        42 \\
  \bottomrule
\end{tabularx}

\caption{Performances of \brelse{} simplifications.\label{tab:scale_optims}}
\end{table}

\begin{itemize}
  \item \textit{FlyRow} is the major source of improvement in \brelse{},
        drastically reducing the number of queries sent to the solver and
        allowing a \(718\times\) speedup compared to \textit{RelSE} in terms of
        \#\(\text{I}_{unr}/s\);

  \item Untainting and fault-packing do have a positive impact on
        \textit{RelSE}---untainting alone reduces the number of queries by
        almost 50\%, the two optimizations together yield a \(2\times\) speedup;

  \item Yet, their impact is more modest once \textit{FlyRow} is activated:
        untainting leads to a very slight slowdown, while fault-packing achieves
        a \(1.4\times\) speedup.
\end{itemize}

\noindent Still, \textit{FP} can be interesting on some particular programs,
when the precision of the bug report is not the priority. Consider for instance
the non-constant-time version of \texttt{aes} in BearSSL (i.e.,
\texttt{aes-big}): \brelse{} without \textit{FP} reports 32 vulnerable
instructions in 1580 seconds, while \brelse{} with \textit{FP} reports 2
vulnerable \emph{basic blocks} (covering the 32 vulnerable instructions) in only
146 seconds (almost \(11 \times\) faster).

\subsubsection{Comparison vs. Standard SE (RQ5).}\label{sec:comparison_se} %
We investigate the overhead of \brelse{} compared to standard symbolic
execution (\textit{SE}); evaluate whether on-the-fly read-over-write
(\textit{FlyRow}) can improve performance of \textit{SE}; and also compare
\textit{FlyRow} to a recent implementation of
read-over-write~\cite{DBLP:conf/lpar/FarinierDBL18} (\textit{PostRow}),
implemented posterior to symbolic-execution as a formula pre-processing. %
Standard symbolic-execution is directly implemented in the \textsc{Rel} module
and models a single execution of the program with exploration queries but
\emph{without insecurity queries}. %

\begin{table}[htpb]
  \setlength{\tabcolsep}{4pt}
\begin{tabular}{lrrrrrrrr}
  \toprule
  \multicolumn{1}{l}{Version} &
  \multicolumn{1}{c}{\#\(\text{I}_{unr}\)} &
  \multicolumn{1}{c}{\#\(\text{I}_{unr}\)/s} &
  \multicolumn{1}{c}{\#Q} &
  \multicolumn{1}{c}{Time} &
  \multicolumn{1}{c}{\hourglass}\\
  \midrule%
  \textit{SE}           &           522k &     19.5 &        24444 &    26814 &         6 \\
  \textit{SE+PostRow}~\cite{DBLP:conf/lpar/FarinierDBL18}
                        &           628k &     29.2 &        29389 &    21475 &         4 \\
  \textit{SE+FlyRow}    &          22.8M &  12531.1 &          534 &     1817 &         0 \\
  \bottomrule
\end{tabular}
\hfill
\begin{tabular}{lrrrrrrrr}
  \toprule
  \multicolumn{1}{l}{Version} &
  \multicolumn{1}{c}{\#\(\text{I}_{unr}\)} &
  \multicolumn{1}{c}{\#\(\text{I}_{unr}\)/s} &
  \multicolumn{1}{c}{\#Q} &
  \multicolumn{1}{c}{Time} &
  \multicolumn{1}{c}{\hourglass}\\
  \midrule%
\textit{RelSE}         &           349k &      6.2 &        90148 &    56428 &        13 \\
\textit{RelSE+PostRow} &           317k &      5.3 &        65834 &    60295 &        15 \\
\brelse{}              &          22.8M &   6237.7 &         3700 &     3657 &         0 \\

  \bottomrule
\end{tabular}
\caption{Performances of RelSE compared to standard SE with/without
  binary level simplifications.\label{tab:sse_relse_overhead2}}
\end{table}

\begin{itemize}
  \item \brelse{}, compared to our best setting for symbolic execution
        (\textit{SE+FlyRow}), only has an overhead of \(2\times\) in terms of
        \#\(\text{I}_{unr}/s\). Hence constant-time comes with an acceptable
        overhead on top of standard symbolic execution. This is consistent with
        the fact that our simplifications discard most insecurity queries,
        letting only the exploration queries which are also part of
        symbolic-execution;

  \item For RelSE, \textit{FlyRow} completely outperforms \textit{PostRow}. %
        First, \textit{PostRow} is not designed for relational verification and
        duplicates the memory. Second, \textit{PostRow} simplifications are not
        propagated along the execution and must be recomputed for every query,
        producing a significant simplification overhead. On the contrary,
        \textit{FlyRow} models a single memory containing relational values and
        propagates along the symbolic execution.

  \item \textit{FlyRow} also improves the performance of standard SE by a factor
        \(643\) in our experiments, performing much better than \textit{PostRow}
        (\(430\times\) faster).
\end{itemize}

\myparagraph{Conclusion (RQ3, RQ4, RQ5).}
\brelse{} performs significantly better than previous approaches to
relational symbolic execution (\(1000\times\) speedup vs.~\textit{RelSE}). The
main source of improvement is the on-the-fly read-over-write simplification
(\textit{FlyRow}), which yields a \(718\times\) speedup vs.~\textit{RelSE} and
sends \(57 \times\) less insecurity queries to the solver.
Note that, in our context, \textit{FlyRow} outperforms state-of-the-art
binary-level simplifications, as they are not designed to efficiently cope with
relational properties and introduce a significant simplification-overhead at
every query.
Fault-packing and untainting, while effective over \textit{RelSE}, have a much
slighter impact once \textit{FlyRow} is activated; fault-packing can still be
useful on insecure programs. %
Finally, in our experiments, \textit{FlyRow} significantly improves performance
of standard symbolic-execution (\(643 \times\) speedup).

\subsection{Preservation of Secret-Erasure by Compilers (RQ6)}\label{sec:expes-secret-erasure}
Secret-erasure is usually enforced using \emph{scrubbing functions}---functions
that overwrite a given part of the memory with dummy
values. %
In this section we present a framework to automatically check the preservation
of secret-erasure for multiple scrubbing functions and compilers. This framework
is open
source\footnote{\url{https://github.com/binsec/rel_bench/tree/main/properties_vs_compilers/secret-erasure}}
and \textit{can be easily extended} with \textit{new compilers} and \textit{new
  scrubbing functions}. Using \brelse{}, we analyze 17 scrubbing functions; with
multiple versions of \texttt{clang} (3.0, 3.9, 7.1.0, 9.0.1 and 11.0.1) and
\texttt{gcc} (5.4.0, 6.2.0, 7.2.0, 8.3.0 and 10.2.0); and multiple optimization
levels (\texttt{-O0}, \texttt{-O1}, \texttt{-O2} and \texttt{-O3}). %
\newcontent{We also investigate the impact of disabling individual optimizations
  (those related to the dead-store-elimination pass) on the preservation of
  secret-erasure (cf.\ \cref{sec:compiler-optims})}. %
This accounts for a total of \newcontent{\emph{1156 binaries}} and extends a
prior \emph{manual} study on scrubbing
mechanisms~\cite{DBLP:conf/uss/YangJOLL17}.

In this section, clang-all-versions (resp.
gcc-all-versions) refer to all the aforementioned clang (resp. gcc) versions;
and in tables \ccmark{} indicates that a program is secure and \cxmark{} that it
is insecure w.r.t secret-erasure.

\subsubsection{Naive implementations}
First, we consider naive (insecure) implementations of scrubbing functions: %
\begin{itemize}
  \item \lstinline{loop}: naive scrubbing function that uses a simple \lstinline{for}
        loop to set the memory to 0,
  \item \lstinline{memset}: uses the \lstinline{memset} function from the Standard C
        Library,
  \item \lstinline{bzero}: function defined in \texttt{glibc} to set memory to 0.
\end{itemize}

\begin{minipage}[t]{.5\linewidth}
\vspace{0pt}
\bgroup{}
  \small
  \setlength{\tabcolsep}{1.2pt}
 \centering
  \begin{tabularx}{\linewidth}{X *{2}{*{4}{c} @{\hskip 10pt}} *{4}{c}}
    \toprule
    & \multicolumn{4}{l}{clang-all-versions}
    & \multicolumn{4}{l}{gcc-5.4/6.2/7.2}
    & \multicolumn{4}{l}{gcc-8.3/10.2}\\
    & \texttt{-O0} & \texttt{-O1} & \texttt{-O2} & \texttt{-O3}
    & \texttt{-O0} & \texttt{-O1} & \texttt{-O2} & \texttt{-O3}
    & \texttt{-O0} & \texttt{-O1} & \texttt{-O2} & \texttt{-O3}\\
    \midrule
    \texttt{loop} & \ccmark{} & \ccmark{} & \cxmark{} & \cxmark{} & \ccmark{} & \ccmark{} & \cxmark{} & \cxmark{} & \ccmark{} & \ccmark{} & \cxmark{} & \cxmark{} \\
    \texttt{memset} & \ccmark{} & \ccmark{} & \cxmark{} & \cxmark{} & \ccmark{} & \ccmark{} & \cxmark{} & \cxmark{} & \ccmark{} & \ccmark{} & \cxmark{} & \cxmark{} \\
    \texttt{bzero} & \ccmark{} & \ccmark{} & \cxmark{} & \cxmark{} & \ccmark{} & \circled{\cxmark{}} & \cxmark{} & \cxmark{} & \ccmark{} & \circled{\ccmark{}} & \cxmark{} & \cxmark{} \\
    \bottomrule
  \end{tabularx}
  \captionof{table}{Preservation of secret-erasure for naive scrubbing
    functions.}\label{tab:naive_implementations}
\egroup{}

\end{minipage}\hfill
\begin{minipage}[t]{.45\linewidth}
\vspace{0pt}
\noindent\textbf{Results} (cf. \cref{tab:naive_implementations}). As expected,
without appropriate countermeasures, these naive implementation of scrubbing
functions are all optimized away by all versions of \texttt{clang} and
\texttt{gcc} at optimization level \texttt{-O2} and \texttt{-O3}. Additionally,
as highlighted in \cref{tab:naive_implementations}, \lstinline{bzero} is also
optimized away at optimization level \texttt{-O1} with \texttt{gcc-7.2.0} and
older versions\footnotemark.
\end{minipage}
\footnotetext{This is because the function calls to \lstinline{scrub}
  and \lstinline{bzero} are inlined in \texttt{gcc-7.2.0} and older versions,
  making the optimization possible whereas the call to \lstinline{scrub} is not
  inlined in \texttt{gcc-8.3.0} and older versions.}

\subsubsection{Volatile function pointer}
The \lstinline{volatile} type qualifier indicates that the value of an object
may change at any time, preventing the compiler from optimizing memory accesses
to volatile objects. This mechanism can be exploited for secure secret-erasure
by using a volatile function pointer for the scrubbing function (e.g.\
eventually redirecting to \lstinline{memset}). Because the function may change,
the compiler cannot optimize it away. \Cref{lst:volatile_func} illustrates the
implementation of this mechanism in
OpenSSL~\cite{OpenSSL_cleanse}. %

\begin{center}
\begin{minipage}[t]{.55\linewidth}
\vspace{0pt}
\centering

  \setlength{\tabcolsep}{2.5pt}
 \centering
    \begin{tabularx}{\textwidth}{ *{4}{c} X *{4}{c} }
    \toprule
    \multicolumn{4}{c}{clang-all-versions} & &
    \multicolumn{4}{c}{gcc-all-versions} \\
    \texttt{-O0} & \texttt{-O1} & \texttt{-O2} & \texttt{-O3} & &
    \texttt{-O0} & \texttt{-O1} & \texttt{-O2} & \texttt{-O3}\\
    \midrule
    \ccmark{} & \ccmark{} & \ccmark{} & \ccmark{} & &
    \ccmark{} & \ccmark{} & \cxmark{} & \cxmark{} \\
    \bottomrule
  \end{tabularx}
  \captionof{table}{Preservation of secret-erasure with volatile function
    pointer as implemented in \cref{lst:volatile_func}.}\label{tab:volatile_function}

\end{minipage}%
\hfill
\begin{minipage}[t]{.4\linewidth}
\vspace{0pt}
\centering

\end{minipage}
\end{center}

\noindent\textbf{Results} (cf. \cref{tab:volatile_function}). \brelse{} reports that, for all versions of
\texttt{gcc}, the secret-erasure property is not preserved at optimization
levels \texttt{-O2} and \texttt{-O3}.
Indeed, the caller-saved register \dbainline{edx} is pushed on the stack before the
call to the volatile function. However, it contains secret data which are
spilled on the stack and not cleared afterwards.
\textit{This shows that our tool can find violations of secret erasure from
  register spilling.}
We conclude that while the volatile function pointer mechanism is effective for
preventing the scrubbing function to be optimized away, it may also
\emph{introduce unnecessary register spilling that might break secret-erasure}.

\subsubsection{Volatile data pointer} %
The volatile type qualifier can also be used for secure secret-erasure by
marking the data to scrub as volatile before erasing it. We analyze several
implementations based on this mechanism:
\begin{itemize}
  \item \texttt{ptr\_to\_volatile\_loop} casts the pointer \lstinline{buf} to a
        pointer-to-volatile \lstinline{vbuf} (cf.\ \cref{lst:volatile_ptr},
        line~\ref{line:ptr-to-volatile}) before scrubbing data from \lstinline{vbuf}
        using a simple \lstinline{for} or \lstinline{while} loop. This is a commonly
        used technique for scrubbing memory, used for instance in
        Libgcrypt~\cite{Libgcrypt_wipememory}, wolfSSL~\cite{wolfSSL_ForceZero},
        or sudo~\cite{sudo_explicit_bzero};
  \item \texttt{ptr\_to\_volatile\_memset} is similar to
        \texttt{ptr\_to\_volatile\_loop} but scrubs data from memory using
        \lstinline{memset}. Note that this implementation is insecure as the
        \lstinline{volatile} type qualifier is discarded by the function
        call---\lstinline{volatile char *} is not compatible with \lstinline{void *};
  \item \texttt{volatile\_ptr\_loop} (resp. \texttt{volatile\_ptr\_memset})
        casts the pointer \lstinline{buf} to a volatile pointer
        \lstinline{vbuf}---but pointing to non volatile data (cf.\
        \cref{lst:volatile_ptr}, line~\ref{line:volatile-ptr}) before scrubbing data
        from \lstinline{vbuf} using a simple \lstinline{for} or \lstinline{while} loop
        (resp. \lstinline{memset})\footnote{Although we did not find this
        implementation in real-world cryptographic code, we were curious about
        how the compiler would handle this case.};
  \item \texttt{vol\_ptr\_to\_vol\_loop} casts the pointer \lstinline{buf} to a
        volatile pointer-to-volatile \lstinline{vbuf} (cf.\
        \cref{lst:volatile_ptr}, line~\ref{line:volatile-ptr-to-volatile}) before
        scrubbing data from \lstinline{vbuf} using a simple \lstinline{for} or
        \lstinline{while} loop. It is the fallback scrubbing mechanism used in
        \texttt{libsodium}~\cite{libsodium_memzero} and in
        HACL*~\cite{hacl_memzero} cryptographic libraries;
  \item \texttt{vol\_ptr\_to\_vol\_memset} is similar to
        \texttt{vol\_ptr\_to\_vol\_loop} but uses \lstinline{memset} instead of a
        \lstinline{loop}.
\end{itemize}

\begin{table}[!htbp]
  \small
  \setlength{\tabcolsep}{1pt}
  \centering
  \begin{tabularx}{.48\textwidth}{X *{4}{c} @{\hskip 12pt} *{4}{c} }
    \toprule
    \multicolumn{1}{c}{}
    & \multicolumn{4}{@{}c}{clang-all-versions}
    & \multicolumn{4}{@{}c}{gcc-all-versions} \\
    \multicolumn{1}{c}{}
    & \texttt{-O0} & \texttt{-O1} & \texttt{-O2} & \texttt{-O3}
    & \texttt{-O0} & \texttt{-O1} & \texttt{-O2} & \texttt{-O3} \\
    \midrule
    \texttt{ptr\_to\_volatile\_loop} & \ccmark{} & \ccmark{} & \ccmark{} & \ccmark{}
    & \ccmark{} & \ccmark{} & \ccmark{} & \ccmark{} \\
    \texttt{volatile\_ptr\_loop} & \ccmark{} & \ccmark{} & \ccmark{} & \ccmark{}
    & \ccmark{} & \ccmark{} & \cxmark{} & \cxmark{} \\
    \texttt{vol\_ptr\_to\_vol\_loop} & \ccmark{} & \ccmark{} & \ccmark{} & \ccmark{}
    & \ccmark{} & \ccmark{} & \ccmark{} & \ccmark{} \\
    \bottomrule
  \end{tabularx}
  \hfill
  \begin{tabularx}{.495\textwidth}{X *{4}{c} @{\hskip 12pt} *{4}{c} }
    \toprule
    \multicolumn{1}{c}{}
    & \multicolumn{4}{@{}c}{clang-all-versions}
    & \multicolumn{4}{@{}c}{gcc-all-versions} \\
    \multicolumn{1}{c}{}
    & \texttt{-O0} & \texttt{-O1} & \texttt{-O2} & \texttt{-O3}
    & \texttt{-O0} & \texttt{-O1} & \texttt{-O2} & \texttt{-O3} \\
    \midrule
    \texttt{ptr\_to\_volatile\_memset} & \ccmark{} & \ccmark{} & \cxmark{} & \cxmark{}
    & \ccmark{} & \ccmark{} & \cxmark{} & \cxmark{} \\
    \texttt{volatile\_ptr\_memset} & \ccmark{} & \ccmark{} & \ccmark{} & \ccmark{}
    & \ccmark{} & \ccmark{} & \cxmark{} & \cxmark{} \\
    \texttt{vol\_ptr\_to\_vol\_memset} & \ccmark{} & \ccmark{} & \ccmark{} & \ccmark{}
    & \ccmark{} & \ccmark{} & \cxmark{} & \cxmark{} \\
    \bottomrule
  \end{tabularx}
  \caption{Preservation of secret-erasure with volatile data pointers. \ccmark{}
    indicates that a program is secure and \cxmark{} that it is
    insecure.}\label{tab:volatile_ptr}
\end{table}

\noindent\textbf{Results} (cf. \cref{tab:volatile_ptr}). First, our
experiments show that using \emph{volatile pointers to non-volatile data does
  not reliably prevent the compiler from optimizing away the scrubbing
  function}. Indeed, \texttt{gcc} optimizes away the scrubbing function at
optimization level \texttt{-O2} and \texttt{-O3} in both \texttt{volatile\_ptr}
implementations. Second, using a pointer to volatile works in the \lstinline{loop}
version (i.e., \texttt{ptr\_to\_volatile\_loop} and
\mbox{\texttt{vol\_ptr\_to\_vol\_loop}}) but not in the \lstinline{memset} versions
(i.e., \mbox{\texttt{ptr\_to\_volatile\_memset}} and
\texttt{vol\_ptr\_to\_vol\_memset}) as the function call to \lstinline{memset}
discards the volatile qualifier.

\subsubsection{Memory barriers}\label{section:memory_barriers} %
Memory barriers are inline assembly statements which indicate the compiler that
the memory could be read or written, forcing the compiler to preserve preceding
store operations. We study four different implementations of memory barriers:
three implementations from safeclib~\cite{Safeclib_memory_barriers}, plus the
approach recommended in a prior study on scrubbing
mechanisms~\cite{DBLP:conf/uss/YangJOLL17}.
\begin{itemize}
  \item \texttt{memory\_barrier\_simple} (cf.\ \cref{lst:memory_barriers},
        line~\ref{line:memory_barriers:simple}) is the implementation used in
        \texttt{explicit\_bzero} and the fallback implementation used in
        safeclib. As pointed by
        \citeauthor{DBLP:conf/uss/YangJOLL17}~\cite{DBLP:conf/uss/YangJOLL17},
        this barrier works with \texttt{gcc}~\cite{gcc_extended_asm} but might
        not work with \texttt{clang}, which might optimize away a call to
        \lstinline{memset} or a loop before this
        barrier~\cite{clang_bug15495}---although we could not reproduce the
        behavior in our experiments;
  \item \texttt{memory\_barrier\_mfence} (cf.\ \cref{lst:memory_barriers},
        line~\ref{line:memory_barriers:fence}) is similar to
        \texttt{memory\_barrier\_simple} with an additional \dbainline{mfence}
        instruction for serializing memory. It is used in safeclib when
        \dbainline{mfence} instruction is available;
  \item \texttt{memory\_barrier\_lock} (cf.\ \cref{lst:memory_barriers},
        line~\ref{line:memory_barriers:lock}) is similar to
        \texttt{memory\_barrier\_mfence} but uses a \dbainline{lock} prefix for
        serializing memory. It is used in safeclib on \texttt{i386}
        architectures;
  \item \texttt{memory\_barrier\_ptr} (cf.\ \cref{lst:memory_barriers},
        line~\ref{line:memory_barriers:ptr}) is a more resilient approach than
        \texttt{memory\_barrier\_simple}, recommended in the study of
        \citeauthor{DBLP:conf/uss/YangJOLL17}~\cite{DBLP:conf/uss/YangJOLL17},
        and used for instance in libsodium
        \texttt{memzero}~\cite{libsodium_memzero}. It makes the pointer
        \lstinline{buf} visible to the assembly code, preventing prior store
        operation to this pointer from being optimized away.
\end{itemize}

\begin{center}
  \begin{lstlisting}[numbers=left,caption={Different implementation of memory barriers.},label={lst:memory_barriers},mathescape=false]
__asm__ __volatile__("":::"memory");                          // memory_barrier_simple <@\label{line:memory_barriers:simple}@>
__asm__ __volatile__("mfence" ::: "memory");                  // memory_barrier_mfence <@\label{line:memory_barriers:fence}@>
__asm__ __volatile__("lock; addl $0,0(%%esp)" ::: "memory");  // memory_barrier_lock <@\label{line:memory_barriers:lock}@>
__asm__ __volatile__("": :"r"(buf) :"memory");                // memory_barrier_ptr <@\label{line:memory_barriers:ptr}@>
\end{lstlisting}
\end{center}

\noindent\textbf{Results.} For all the implementation of memory barriers that we
tested, we did not find any vulnerability---even with the version deemed
insecure in prior
study~\cite{DBLP:conf/uss/YangJOLL17}\footnote{\newcontent{As explained in a
    bug report~\cite{clang_bug15495}, \texttt{memory\_barrier\_simple} is not
    reliable because \texttt{clang} might consider that the inlined assembly
    code does not access the buffer (e.g.\ by fitting all of the buffer in
    registers). The fact that we were not able to reproduce this bug in our
    setup is due to differences in programs (in our program the address of the
    buffer escapes because of function calls whereas it is not the case in the
    bug report); it does not mean that this barrier is secure (it is not).}}.

\subsubsection{Weak symbols} %
Weak symbols are specially annotated symbols (with
\lstinline[]{__attribute__((weak))}) whose definition may
change at link time. An illustration of a weak function symbol is given in
\cref{lst:weak_symbols}. The compiler cannot optimize a store operation
preceding the call to \lstinline{_sodium_dummy_symbol} because its definition
may change and could access the content of the buffer. This mechanism, is used
in libsodium \lstinline{memzero}~\cite{libsodium_memzero} when weak symbols are
available.

\begin{center}
  \begin{lstlisting}[numbers=left,caption={Libsodium implementation of weak symbols for memory scrubbing.},label={lst:weak_symbols},float=ht]
__attribute__((weak)) void _sodium_dummy_symbol(void *const  pnt, const size_t len) {
  (void) pnt; (void) len;
}
void scrub(char *buf, size_t size) {
  memset(buf, 0, size);
  _sodium_dummy_symbol(buf, size);
}
\end{lstlisting}
\end{center}

\noindent\textbf{Results.} \brelse{} did not find any vulnerability with
weak-symbols.

\subsubsection{Off-the-shelf implementations}
Finally, we consider two secure implementations of scrubbing functions proposed
in external libraries, namely \texttt{explicit\_bzero} and \texttt{memset\_s}.
\texttt{explicit\_bzero} is a function defined in \texttt{glibc} to set memory
to 0, with additional protections to not be optimized away by the compiler.
Similarly, \texttt{memset\_s} is a function defined in the optional Annex K
(bound-checking interfaces) of the C11 standard, which sets a memory region to a
given value and should not be optimized away. We take the implementation of
\texttt{safeclib}~\cite{Safeclib_memset_s}, compiled with its default Makefile
for a \texttt{i386} architecture.
Both implementations both rely on a memory barrier (see
\cref{section:memory_barriers}) to prevent the compiler from optimizing
scrubbing operations.

\noindent\textbf{Results.} \brelse{} did not find any vulnerability with these functions.

\subsubsection{Impact of disabling individual optimizations}\label{sec:compiler-optims}
\newcontent{In order to understand what causes compilers to introduce violations
  of secret-erasure, we selectively disable the \texttt{-dse} (i.e., dead store
  elimination) option in \texttt{clang} and the \texttt{-dse} and
  \texttt{-tree-dse} (i.e., dead store elimination on tree) in \texttt{gcc}.}

\myparagraph{Results.} \newcontent{For \texttt{clang-3.9}, \texttt{clang-7.1.0},
  \texttt{clang-9.0.1} and \texttt{clang-11.0.1}\footnote{\texttt{clang-3.0} is
    omitted in this study because we were not able to run the LLVM
    optimizer (\texttt{opt}) for \texttt{clang-3.0} in order to disable the
    \texttt{-dse} optimization.}, disabling the \texttt{-dse} transform pass
  makes all our samples secure. This points towards the hypothesis that the
  \texttt{-dse} transform pass is often responsible for breaking secret-erasure
  and that, in some cases, disabling it might be sufficient to preserve
  secret-erasure\footnote{However, we strongly suspect that this conclusion does
    not generalize to all programs, for instance to programs that violate
    secret-erasure because of register spilling.}.}

\newcontent{The results for \texttt{gcc} are given in table
  \cref{tab:secret-erasure_no-dse_gcc}. Firstly, we observe that both
  \texttt{-dse} and \texttt{-tree-dse} play a role in the preservation of
  secret-erasure. Indeed, for \texttt{bzero}, disabling \texttt{dse} is
  sufficient for obtaining a secure binary, while for
  \texttt{volatile\_ptr\_memset} and \texttt{vol\_ptr\_to\_vol\_memset},
  \texttt{-tree-dse} must be disabled. On the contrary, for \texttt{memset} and
  \texttt{ptr\_to\_volatile\_memset}, it is necessary to disable both
  optimizations. %
  Secondly, we observe that there are other factors that affect the
  preservation of secret-erasure. Indeed, the \texttt{volatile\_func} program is
  still insecure because of register spilling. Additionally, \texttt{loop} and
  \texttt{volatile\_ptr\_loop} are also insecure because the loop is still
  optimized away.}

\begin{table}[!htbp]
  \begin{minipage}[t]{0.6\textwidth}
  \vspace{0pt}
  \small
  \setlength{\tabcolsep}{1.5pt}
\begin{tabularx}{\textwidth}{X *{4}{c}@{\hskip 10pt} *{4}{c}}
    \toprule
    \multicolumn{1}{c}{}
    & \multicolumn{4}{c}{gcc-5.4.0 6.2.0 7.2.0}
    & \multicolumn{4}{c}{gcc-8.3.0 10.2.0}\\
    \multicolumn{1}{c}{}
    & \texttt{\ -O3} & \texttt{\ dse} & \texttt{tdse} & \texttt{both}
    & \texttt{\ -O3} & \texttt{\ dse} & \texttt{tdse} & \texttt{both} \\
    \midrule
\textbf{\texttt{loop}} & \cxmark{} & \cxmark{} & \cxmark{} & \cxmark{} & \cxmark{} & \cxmark{} & \cxmark{} & \cxmark{}\\
\texttt{memset} & \cxmark{} & \cxmark{} & \cxmark{} & \circled{\ccmark{}} & \cxmark{} & \cxmark{} & \cxmark{} & \circled{\ccmark{}}\\
\texttt{bzero} & \cxmark{} & \circled{\ccmark{}} & \cxmark{} & \ccmark{} & \cxmark{} & \cxmark{} & \cxmark{} & \circled{\ccmark{}}\\
\midrule
\texttt{explicit\_bzero} & \ccmark{} & \ccmark{} & \ccmark{} & \ccmark{} & \ccmark{} & \ccmark{} & \ccmark{} & \ccmark{}\\
\texttt{memset\_s} & \ccmark{} & \ccmark{} & \ccmark{} & \ccmark{} & \ccmark{} & \ccmark{} & \ccmark{} & \ccmark{}\\
\midrule
\textbf{\texttt{volatile\_func}} & \cxmark{} & \cxmark{} & \cxmark{} & \cxmark{} & \cxmark{} & \cxmark{} & \cxmark{} & \cxmark{}\\
\midrule
\texttt{ptr\_to\_volatile\_loop} & \ccmark{} & \ccmark{} & \ccmark{} & \ccmark{} & \ccmark{} & \ccmark{} & \ccmark{} & \ccmark{}\\
\texttt{ptr\_to\_volatile\_memset} & \cxmark{} & \cxmark{} & \cxmark{} & \circled{\ccmark{}} & \cxmark{} & \cxmark{} & \cxmark{} & \circled{\ccmark{}}\\
\textbf{\texttt{volatile\_ptr\_loop}} & \cxmark{} & \cxmark{} & \cxmark{} & \cxmark{} & \cxmark{} & \cxmark{} & \cxmark{} & \cxmark{}\\
\texttt{volatile\_ptr\_memset} & \cxmark{} & \cxmark{} & \circled{\ccmark{}} & \ccmark{} & \cxmark{} & \cxmark{} & \circled{\ccmark{}} & \ccmark{}\\
\texttt{\texttt{vol\_ptr\_to\_vol\_loop}} & \ccmark{} & \ccmark{} & \ccmark{} & \ccmark{} & \ccmark{} & \ccmark{} & \ccmark{} & \ccmark{}\\
\texttt{vol\_ptr\_to\_vol\_memset} & \cxmark{} & \cxmark{} & \circled{\ccmark{}} & \ccmark{} & \cxmark{} & \cxmark{} & \circled{\ccmark{}} & \ccmark{}\\
\midrule
\texttt{memory\_barrier\_all} (\(\times 4\)) & \ccmark{} & \ccmark{} & \ccmark{} & \ccmark{} & \ccmark{} & \ccmark{} & \ccmark{} & \ccmark{}\\
\midrule
\texttt{weak\_symbols} & \ccmark{} & \ccmark{} & \ccmark{} & \ccmark{} & \ccmark{} & \ccmark{} & \ccmark{} & \ccmark{}\\
\bottomrule
\end{tabularx}
\end{minipage}\hfill
\begin{minipage}[t]{0.37\textwidth}
  \vspace{0pt}
  \caption{Preservation of secret-erasure for several scrubbing functions
    compiled by \texttt{gcc}, with selective optimization disabling.
    \texttt{-O3} is the baseline optimization level, \texttt{dse} (resp.
    \texttt{tdse}) indicates that the \texttt{-fno-dse} (resp.
    \texttt{-fno-tree-dse}) arguments have been passed to \texttt{gcc} to
    disable \texttt{dse} (resp. \texttt{tree-dse}) optimization, and
    ``\texttt{both}'' indicates that both optimizations have been disabled. Circles
    highlight the least set of options that must be disabled to make the program
    secure and \textbf{bold} highlights programs that are insecure in all our
    configurations.}\label{tab:secret-erasure_no-dse_gcc}
\end{minipage}
\end{table}

\section{Discussion}\label{sec:discussion}
\myparagraph{Limitations of the technique.} %
\newercontent{The relational symbolic execution introduced in this paper handles
  loops and recursion with unrolling. Unrolling still enables exhaustive
  exploration for programs without unbounded loops such as \texttt{tea} or
  \texttt{donna}. However, for programs with unbounded loops, such as stream
  ciphers \texttt{salsa20} or \texttt{chacha20} it leads to unexplored program
  paths, and hence might miss violations\footnote{In our experiments we fix the
    input size for these programs, but we could also keep it symbolic and
    restrict it to a given range, which would extend security guarantees for all
    input sizes in this range.}. A possible solution to enable sound analysis
  for program with unbounded loops would be to use relational loop
  invariants~\cite{DBLP:conf/ccs/BalliuDG14}---however, it would sacrifice
  bug-finding.
  Similarly, indirect jump targets are only enumerated up to a given bound,
  which might lead to unexplored program paths and consequently missed
  violations\footnote{\brelse{} detects and records incomplete jump target
    enumerations and, if it cannot find any vulnerabilities, it returns
    “unknown” instead of “secure”.}. However, we did not encounter incomplete
  enumerations in our experiments: in the cryptographic primitives that we analyzed
  indirect jumps had a single (or few) target. %
  Finally, any register or part of the memory that is concretized in the initial
  state of the symbolic execution might lead to unexplored program behaviors and
  missed violations. In \brelse{}, memory and register are symbolic by default
  and any concretization (e.g.\ setting the initial value of \lstinline{esp}, or
  which memory addresses are initialized from the binary) must be made
  \emph{explicitly} by the user.}

\newercontent{%
  The definition of secret-erasure used in this paper is conservative in the sense that it forbids
  secret-dependent branches (and hence related implicit flows). We leave for
  future work the exploration of alternative (less conservative) definitions
  that could either declassify secret-dependent conditions, or allow
  secret-secret dependent conditions as long as both branches produce the same
  observations. } %
\newcontent{Finally, \brelse{} restricts to a sequential semantics and hence
  cannot detect Spectre vulnerabilities~\cite{DBLP:conf/sp/KocherHFGGHHLM019},
  however the technique has recently been adapted to a speculative
  semantics~\cite{DBLP:conf/sp/DanielBR20}.}

\myparagraph{Implementation limitations.} %
The implementation of \brelse{} shows limitations commonly found in research
prototypes: %
it does not support dynamic libraries (binaries must be statically linked or
stubs must be provided for external function calls), \newercontent{it does not support dynamic
memory allocation (data structures must be statically allocated)}, it does not
implement predefined system call stubs, \newcontent{it does not support
  multi-threading,} and it does not support floating point instructions. These
problems are orthogonal to the core contribution of this paper.
Moreover, the prototype is already efficient on real-world case studies.

\myparagraph{Threats to validity in experimental evaluation.} %
We assessed the effectiveness of our tool on several known secure and
insecure real-world cryptographic binaries, many of them taken from
prior studies. All results have been crosschecked with the expected
output, and manually reviewed in case of deviation.

Our prototype is implemented as part of
\binsec{}~\cite{DBLP:conf/wcre/DavidBTMFPM16}, whose efficiency and robustness
have been demonstrated in prior large scale studies on both adversarial code and
managed
code~\cite{DBLP:conf/sp/BardinDM17,DBLP:conf/kbse/RecoulesBBMP19,DBLP:conf/cav/FarinierBBP18,DBLP:conf/issta/DavidBFMPTM16}.
The IR lifting part has been positively evaluated in an external
study~\cite{DBLP:conf/kbse/KimFJJOLC17} and the symbolic engine features
aggressive formula optimizations~\cite{DBLP:conf/lpar/FarinierDBL18}. All our
experiments use the same search heuristics (depth-first) and, for
bounded-verification, smarter heuristics do not change the performance. %
Regarding the solver, we also tried Z3~\cite{DBLP:conf/tacas/MouraB08} and
confirmed the better performance of Boolector.

Finally, we compare our tool to our own versions of \textit{SC} and
\textit{RelSE}, %
primarily because none of the existing tools can be easily adapted for our
setting, and also because it allows us to compare very close implementations.

\section{Related Work}\label{sec:related} %
Related work has already been lengthily discussed along the paper. We
add here only a few additional discussions, as well as an overview of
existing SE-based tools for information flow analysis
(\cref{tab:comparison_se}) partly taken
from~\cite{DBLP:conf/uss/AlmeidaBBDE16}.

\begin{table}[htbp!]
  \begin{tabularx}{\linewidth}{Xlllc@{\,}c@{\,}crr}
    \toprule
    Tool & Target & NI & Technique & P&BV&BF & \(\approx\)XP max & \(\text{I}_{u}\)/s \\
    \midrule
    RelSym~\cite{farinaRelationalSymbolicExecution2019} &
    imp-for & \ccmark & RelSE & \ccmark&\ccmark&\ccmark & 10 LoC & \textsc{na} \\
    IF-exploit~\cite{DBLP:conf/sec/DoBH15} &
    Java & \ccmark & self-composition & \ccmark&\ccmark&\ccmark & 20 LoC & \textsc{na} \\
    Type-SC-SE~\cite{DBLP:conf/forte/MilushevBC12} &
    C & \ccmark & type-based self-comp. + DSE & \cxmark&\cxmark&\ccmark & 20 LoC & \textsc{na} \\
    Casym~\cite{DBLP:conf/sp/BrotzmanLZTK19} &
    LLVM & \ccmark & self-comp. + over-approx & \ccmark&\ccmark&\cxmark & 200 LoC (C) & \textsc{na} \\
    \textsc{ENCoVer}~\cite{DBLP:conf/csfw/BalliuDG12} &
    Java bytecode & \ccmark & epistemic logic + DSE & \cxmark&\ccmark&\ccmark& 33k I\textsubscript{u} & 186 \\
    \midrule
    IF-low-level~\cite{DBLP:conf/ccs/BalliuDG14} &
    \textbf{binary} & \ccmark & self-comp. + invariants & \ccmark&\ccmark&\cxmark & 250 \(\text{I}_{s}\)& \textsc{na}\\% 0.2 \(\text{I}_s\)/s \\
    IF-firmware~\cite{DBLP:conf/date/SubramanyanMKMF16} &
    \textbf{binary} & \cxmark & self-comp. + concretizations & \cxmark&\cxmark&\ccmark & 500k \(\text{I}_{u}\) & 260 \\
    CacheD~\cite{DBLP:conf/uss/WangWLZW17} &
    \textbf{binary} & \cxmark & tainting + concretizations & \cxmark&\cxmark&\ccmark & 31M \(\text{I}_{u}\) & 2010 \\
    \midrule
    \brelse{} &
                \textbf{binary} & \cxmark & RelSE + formula simplifications &
                \cxmark&\ccmark&\ccmark & 10M \(\text{I}_{u}\) & 3861  \\
    \bottomrule
  \end{tabularx}
  \caption{Comparison of SE-based tools for information flow analysis. NI
    indicates whether the tool handles general non-interference (diverging
    paths) or not. In the column ``Technique'', DSE stands for dynamic symbolic
    execution. P stands for Proof, BV for Bounded-Verification, BF for
    Bug-Finding. %
    \(\approx\)XP max indicates the approximate size of the use cases where LoC
    stands for Lines of Code, \(\text{I}_{s}\) for static instructions, and
    \(\text{I}_{u}\) for unrolled instructions. Finally, \textsc{na} indicates
    Non-Applicable. }\label{tab:comparison_se}
\end{table}

\myparagraph{Self-composition and SE} %
has first been used by Milushev \emph{et
  al.}~\cite{DBLP:conf/forte/MilushevBC12}. They use type-directed
self-composition and dynamic symbolic execution to find bugs of
\emph{noninterference} but they do not address scalability and their
experiments are limited to toy programs. The main issues here are the
quadratic explosion of the search space (due to the necessity of
considering diverging paths) and the complexity of the underlying
formulas.
Later works~\cite{DBLP:conf/sec/DoBH15,DBLP:conf/ccs/BalliuDG14} suffer from
the same problems.

\emph{Instead of considering the general case of noninterference, we
  focus on properties that relate traces following the same path, and
  we show that it remains tractable for SE with adequate
  optimizations.}

\myparagraph{Relational symbolic execution.} %
\emph{Shadow symbolic
  execution}~\cite{DBLP:conf/icse/PalikarevaKC16,DBLP:conf/icse/CadarP14}
aims at efficiently testing evolving software by focusing on the new
behaviors introduced by a patch.
It introduces the idea of \emph{sharing formulas} across two
executions in the same SE instance. The term \emph{relational symbolic
  execution} has been coined more
recently~\cite{farinaRelationalSymbolicExecution2019} but this work is
limited to a simple toy imperative language and do not address
scalability.

\emph{We maximize sharing between pairs of executions, as ShadowSE
  does, but we also develop specific optimizations tailored to the
  case of information-flow analysis at binary-level. Experiments show
  that our optimizations are crucial in this context. }

\myparagraph{Scaling SE for information flow analysis.} %
Only three previous works in this category achieve scalability, yet at
the cost of either precision or soundness.
Wang {et al.}~\cite{DBLP:conf/uss/WangWLZW17} and
Subramanyan \emph{et al.}~\cite{DBLP:conf/date/SubramanyanMKMF16}
sacrifice soundness for scalability (no bounded-verification).  The
former performs symbolic execution on fully concrete traces and only
symbolizes secrets.
The latter concretizes memory accesses.
In both cases, they may miss feasible paths as well as
vulnerabilities.
Brotzman \emph{et al.}~\cite{DBLP:conf/sp/BrotzmanLZTK19} take the opposite
side and sacrifice precision for scalability (no bug-finding). %
Their analysis scales by over-approximating loops and resetting the
symbolic state at chosen code locations.

\emph{We adopt a different approach and scale by heavy formula optimizations,
  allowing us to keep both correct bug-finding and correct
  bounded-verification.} Interestingly, our method is faster than these
approximated ones. %
Moreover, our technique is compatible with the previous approximations for
extra-scaling.

\myparagraph{Other methods for constant-time analysis.}
\emph{Dynamic approaches} for constant-time are precise (they find real
violations) but limited to a subset of the execution traces, hence they are not
complete. These techniques include statistical
analysis~\cite{DBLP:conf/date/ReparazBV17}, dynamic binary
instrumentation~\cite{langleyImperialVioletCheckingThat2010,DBLP:conf/acsac/WichelmannMES18},
dynamic symbolic execution (DSE)~\cite{DBLP:conf/memocode/ChattopadhyayBR17},
\newcontent{or fuzzing~\cite{DBLP:conf/icst/HeEC20}}.

\emph{Static approaches} %
based on sound static
analysis~\cite{DBLP:conf/popl/Agat00,DBLP:conf/icisc/MolnarPSW05,DBLP:conf/ccs/BartheBCLP14,bacelaralmeidaFormalVerificationSidechannel2013,DBLP:conf/esorics/BlazyPT17,DBLP:conf/cav/KopfMO12,DBLP:conf/uss/DoychevFKMR13,DBLP:conf/pldi/DoychevK17,DBLP:conf/uss/AlmeidaBBDE16,DBLP:conf/cc/RodriguesPA16}
give formal guarantees that a program is free from timing-side-channels but they
cannot find bugs when a program is rejected.

\newcontent{Aside from a posteriori analysis, correct-by-design
approaches~\cite{DBLP:conf/ccs/AlmeidaBBBGLOPS17,DBLP:conf/uss/BondHKLLPRST17,DBLP:conf/secdev/CauligiSBJHJS17,DBLP:conf/ccs/ZinzindohoueBPB17}
require to reimplement cryptographic primitives from scratch.}
\newcontent{Program transformations have been proposed to automatically
  transform insecure programs into (variations of) constant-time
  programs~\cite{DBLP:journals/ijisec/KopfM07,DBLP:conf/popl/Agat00,DBLP:conf/icisc/MolnarPSW05,DBLP:journals/tcad/ChattopadhyayR18,DBLP:conf/issta/WuGS018,DBLP:conf/sp/BrotzmanLZTK19,DBLP:conf/sp/CoppensVBS09,DBLP:conf/uss/RaneLT15,DBLP:conf/issta/WuGS018,DBLP:conf/ccs/BorrelloDQG21,DBLP:conf/cgo/SoaresP21}.
  In particular, Raccoon and Constantine consider a constant-time leakage model
  and seem promising, however they operate at LLVM level and do not protect
  against violations introduced by backend compiler passes. Therefore,
  \brelse{} is complementary to these techniques, as it can be used for
  investigating code patterns and backend optimizations that may introduce
  constant-time violations in backend compiler passes.}

\myparagraph{Other methods for secret-erasure.} %
\newcontent{Compiler or OS-based secure
  deallocation~\cite{DBLP:conf/uss/ChowPGR05,DBLP:conf/sigopsE/GarfinkelPCR04}
  have been proposed but require compiler or OS-support, in contrast this work
  focuses on application-based secret-erasure.}

\citeauthor{DBLP:conf/csfw/ChongM05}~\cite{DBLP:conf/csfw/ChongM05} introduce
the first framework to specify erasure policies which has been later refined to
express richer policies using a knowledge-based
approach~\cite{DBLP:conf/iciss/TedescoHS11}, and cryptographic data
deletion~\cite{DBLP:conf/csfw/AskarovMDC15}. These works focus on expressing
complex secret-erasure policies, but are not directly applicable to concrete
languages.
\citeauthor{hansen2006non}~\cite{hansen2006non} propose the first application of
a simple secret-erasure policy for a concrete language (i.e., Java Card
Bytecode), which ensures that secrets are unavailable after program termination.
Our definition of secret erasure is close to theirs and directly applicable for
binary-level verification.

Most enforcement mechanisms for erasure are language-based and rely on type
systems to enforce information flow
control~\cite{DBLP:conf/esop/HuntS08,DBLP:conf/csfw/ChongM08,DBLP:conf/csfw/AskarovMDC15,DBLP:conf/nordsec/TedescoRS10,DBLP:conf/sp/NanevskiBG11}.
Secretgrind~\cite{Secretgrind2020}, a dynamic taint tracking tool based on
Valgrind~\cite{DBLP:conf/pldi/NethercoteS07} to track secret data in memory, is
the closest work to ours, with the main difference being that it uses dynamic
analysis and permits implicit flows, while we use static analysis and forbid
implicit flows.

The problem of (non-)preservation of secret-erasure by compilers is well
known~\cite{CWE14CompilerRemoval, DBLP:conf/sp/DSilvaPS15,
  DBLP:conf/csfw/BessonDJ19, DBLP:conf/uss/YangJOLL17,
  DBLP:conf/eurosp/SimonCA18}. To remedy it, a notion of information
flow-preserving program transformation has been
proposed~\cite{DBLP:conf/csfw/BessonDJ19} but this approach
requires to compile programs using CompCert~\cite{DBLP:journals/cacm/Leroy09}
and does not apply to already compiled binaries.
Finally, preservation of erasure functions by compilers has been studied
manually~\cite{DBLP:conf/uss/YangJOLL17}, and we further this line of work
by proposing an extensible framework for \emph{automating} the process.

\section{Conclusion}\label{sec:conclusion} %
We tackle the problem of designing an automatic and efficient binary-level
analyzer for \emph{information flow} properties, enabling both bug-finding and
bounded-verification on real-world cryptographic implementations. Our approach
is based on \emph{relational symbolic execution} together with original
\emph{dedicated optimizations} reducing the overhead of relational reasoning and
allowing for a significant speedup.
Our prototype, \brelse{}, is shown to be highly efficient compared to
alternative approaches. We used it to perform extensive binary-level
constant-time analysis and secret-erasure for a wide range of cryptographic
implementations, and to automate prior manual studies on the preservation of
constant-time and secret-erasure by compilers.
We highlight incorrect usages of volatile data pointer for secret erasure, and
show that scrubbing mechanisms based on \emph{volatile function pointers} can
introduce additional violation from register spilling.
We also found three constant-time vulnerabilities that slipped through prior
manual and automated analyses, and we discovered that \texttt{gcc -O0} and
backend passes of \texttt{clang} introduce violations of constant-time out of
reach of state-of-the-art constant-time verification tools at LLVM or source
level.

\section*{Acknowledgments} %
We would like to thank Guillaume Girol for his help with setting up Nix virtual
environments, which enable reproducible compilation in our frameworks, as well
as Frédéric Recoules for his help with the final release of the tool. We also
thank the anonymous reviewers for their valuable suggestions, which greatly
helped to improve the paper. This project has received funding from the European
Union Horizon 2020 research and innovation program under grant agreement No
101021727, from ANR grant ANR-20-CE25-0009-TAVA, and from ANR-17-CE25-0014-01
CISC project.

\printbibliography
\iftechreport
\newpage
\appendix
\section{Proofs}\label{app:proofs}
This section details the proofs of the theorems and lemmas introduced in \cref{sec:proofs}.
\begin{itemize}
  \item The proof of \cref{lemma:stuckinsecure} is given in
        \Cref{app:stuckinsecure},
  \item The proof of \cref{lemma:completeness} is given in
        \Cref{app:completenesslemma}
  \item The proof of \cref{thm:correctness}, which states the correctness of our
        RelSE, is given in \Cref{app:correctness},
  \item The proof of \cref{thm:bv}, which states that our RelSE is correct for
        bounded-verification of constant-time, is given in \Cref{app:bv}.
      \end{itemize}

\subsection{Proof of \cref{lemma:stuckinsecure}}\label{app:stuckinsecure}
\Cref{lemma:stuckinsecure} expresses that when the symbolic evaluation is stuck
on a state \(s_k\), there exist concrete configurations derived from \(s_k\)
which produce distinct leakages.

\stuckinsecure*

\begin{proof}
  Because \(s_k\) is stuck, we know from \Cref{hyp:stuck} that an expression
  \(\rel{\varphi}\) is leaked and that \(\secleak(\rel{\varphi}, \pc)\)
  evaluates to false in the symbolic evaluation of \(s_k\). We also know that
  there exists a model \(M\) such that
  \(M \sat \pc \wedge \lproj{\rel{\varphi}} \neq \rproj{\rel{\varphi}}\). %
  Let \(c_k, c_k'\) be concrete configurations such that %
  \(c_{k} \concsym{l}{M} s_{k}\), and \(c_{k}' \concsym{r}{M} s_{k}\). %
  To show that the program is not constant-time at step \(k\), that is %
  \(c_k \cleval{\leakvar} c_{k+1}\) and \(c_k' \cleval{\leakvar'} c_{k+1}'\)
  with \(\leakvar \neq \leakvar'\), %
  we proceed case by case on the symbolic evaluation, restricting to cases where
  \(\secleak(\rel{\varphi}, \pc)\) might evaluate to false. There are two main
  cases:
  \begin{enumerate}
    \item Symbolic execution is stuck on the evaluation of an expression,
    \item Symbolic evaluation is stuck on the evaluation of an instruction.
  \end{enumerate}

  \myparagraph{SE stuck on an expression.} First, we consider the case where the
  symbolic execution is stuck on the evaluation of an expression, restricting to
  the rule \rulename{load} as other cases cannot be stuck.
  \begin{itemize}
  \item \textbf{Case \rulename{load}}: In the symbolic execution, the
    expression \(\load{\ e_{idx}}\) is evaluated with
    \(\econf{\regmap,\smem,\pc}{e_{idx}} \eeval{} \rel{\iota}\) and
    the index \(\rel{\iota}\) is leaked.
    Assuming \(\secleak(\rel{\iota}, \pc)\) evaluates to false with the model
    \(M\), then
    \(M(\lproj{\rel{\iota}}) \neq M(\rproj{\rel{\iota}})\).
    Moreover, because \(c_{k} \concsym{l}{M} s_{k}\) and
    \(c_{k}' \concsym{r}{M} s_{k}\), we have from \cref{def:concsym}
    that
    \(c_{k}\ e_{idx} \ceeval{} M(\lproj{\rel{\iota}}) \text{ and } %
    c_{k}'\ e_{idx} \ceeval{} M(\rproj{\rel{\iota}})\). Because
    performing a step in the concrete execution leaks the value of
    \(e_{idx}\), we have
    \(c_k \cleval{\leakvar \cdot M(\lproj{\rel{\iota}})} c_{k+1}\) and
    \(c_k' \cleval{\leakvar' \cdot M(\rproj{\rel{\iota}})} c_{k+1}'\)
    with \({M(\lproj{\rel{\iota}}) \neq M(\rproj{\rel{\iota}})}\),
    meaning that the execution is not constant-time at step \(k\).
  \end{itemize}

  \myparagraph{SE stuck on an instruction.} Second, we consider the case where the
  symbolic evaluation is not stuck on the evaluation of an expression, but is
  stuck on the evaluation of an instruction. This can happen when evaluating
  rules \rulename{store}, \rulename{ite} and \rulename{d\_jump}.

  \begin{itemize}
  \item \textbf{Case \rulename{store}}: In the symbolic execution, the
    instruction \(\store{\ e_{idx}}{e_{val}}\) is evaluated with
    \(\econf{\regmap,\smem,\pc}{e_{idx}} \eeval{} \rel{\iota}\) and
    the index \(\rel{\iota}\) is leaked. %
    Similarly as in case \rulename{load}, we can show that we have
    \(c_k \cleval{\leakvar \cdot M(\lproj{\rel{\iota}})} c_{k+1}\) and
    \(c_k' \cleval{\leakvar' \cdot M(\rproj{\rel{\iota}})} c_{k+1}'\)
    with \(M(\lproj{\rel{\iota}}) \neq M(\rproj{\rel{\iota}})\),
    meaning that the execution is not constant-time at step \(k\).

  \item \textbf{Case \rulename{ite-true} and \rulename{ite-false}}: In
    the symbolic execution, the instruction
    \(ite~e~?~l_{true} : l_{false}\) is evaluated with
    \(\econf{\regmap,\smem,\pc}{e} \eeval{} \rel{\varphi}\) and
    \(\rlift{eq_0}\ \rel{\varphi}\) is leaked.
    Assuming \(\secleak(\rlift{eq_0}\ \rel{\varphi}, \pc)\) evaluates
    to false with the model \(M\), then
    \({M(\lproj{\rel{\varphi}} = 0)} \neq M({\rproj{\rel{\varphi}} =
      0})\).
    Moreover, because \(c_{k} \concsym{l}{M} s_{k}\) and
    \(c_{k}' \concsym{r}{M} s_{k}\), we have from \cref{def:concsym}
    that
    \(c_{k}\ e \ceeval{} M(\lproj{\rel{\varphi}}) \text{ and } %
    c_{k}'\ e \ceeval{} M(\rproj{\rel{\varphi}})\).
    Therefore in one of the concrete executions, \(e\) evaluates to 0,
    the rule \rulename{ite-false} is applied and the location
    \(l_{false}\) is leaked; whereas in the other execution, \(e\) does
    not evaluate to 0, the rule \rulename{ite-true} is applied and the
    location \(l_{true}\) is leaked.
    Finally, we have \(c_k \cleval{\leakvar \cdot l_{true}} c_{k+1}\)
    and \(c_k' \cleval{\leakvar' \cdot l_{false}} c_{k+1}'\) (or
    \(c_k \cleval{\leakvar \cdot l_{false}} c_{k+1}\) and
    \(c_k' \cleval{\leakvar' \cdot l_{true}} c_{k+1}'\)), meaning
    that the execution is not constant-time at step \(k\).

  \item \textbf{Case \rulename{d\_jump}}: In the symbolic execution,
    the instruction \(\mathtt{goto}\ e\) is evaluated with
    \(\econf{\regmap,\smem,\pc}{e} \eeval{} \rel{\varphi}\). Let
    \(M(\lproj{\rel{\varphi}}) = \mathtt{bv_l}\) and
    \(M(\rproj{\rel{\varphi}}) = \mathtt{bv_r}\).
    Assume \(\secleak(\rel{\varphi}, \pc)\) evaluates to false with the
    model \(M\), then \(\mathtt{bv_l} \neq \mathtt{bv_r}\).
    Moreover, because \(c_{k} \concsym{l}{M} s_{k}\) and
    \(c_{k}' \concsym{r}{M} s_{k}\), we have from \cref{def:concsym}
    that \(c_{k}\ e \ceeval{} \mathtt{bv_l} \text{ and } %
    c_{k}'\ e \ceeval{} \mathtt{bv_r}\). In the concrete execution,
    \(l_l \mydef \toloc(\mathtt{bv_l})\) and
    \(l_r \mydef \toloc(\mathtt{bv_r})\) are leaked---note that
    \(\toloc\) is defined for \(\mathtt{bv_l}\) and \(\mathtt{bv_r}\)
    from \cref{hyp:stuck}. Because
    \(\mathtt{bv_l} \neq \mathtt{bv_r}\) and \(\toloc\) is a
    one-to-one correspondence, we have \(l_l \neq l_r\). %
    Therefore, we have \(c_k \cleval{\leakvar \cdot l_l} c_{k+1}\) and
    \(c_k' \cleval{\leakvar' \cdot l_r} c_{k+1}'\) with
    \(l_l \neq l_r\), meaning that the execution is not constant-time
    at step \(k\).
  \end{itemize}
\end{proof}

\subsection{Proof of \cref{lemma:completeness}}\label{app:completenesslemma}
\cref{lemma:completeness} expresses that symbolic evaluation does not get stuck
up to \(k\), then for each pair of concrete executions following the same path
up to \(k\), there exists a corresponding symbolic execution.

\completenesslemma*

\begin{proof} \emph{(Induction on the number of steps k).} Case
  \(k = 0\) is trivial.

  Let \(c_{k-1}\) and \(c_{k-1}'\) be concrete configurations and
  \(s_{k-1}\) a symbolic configuration for which the inductive
  hypothesis holds. We need to show that \cref{lemma:completeness}
  still holds at step \(k\), meaning that for each concrete steps
  \(c_{k-1} \cleval{\leakvar} c_{k}\) and
  \(c_{k-1}' \cleval{\leakvar'} c_{k}'\), there exists a symbolic
  configuration \(s_{k}\) and a model \(M'\) such that
  \({c_{k} \concsym{l}{M'} s_{k}} ~\wedge~ {c_{k}' \concsym{r}{M'}
    s_{k}}\) holds. This amounts to show that:
  \begin{enumerate}[label=(\roman*)]
  \item\label{item:loc1} the location in \(s_k\) is the same as the
    location in \(c_k\) and \(c_k'\),
  \item\label{item:model_sat} there exists a model \(M_k\) such that
    \(M_k \sat \pc_k\)
  \item\label{item:equiv1} for all expressions \(e\), either symbolic
    evaluation gets stuck, or
    \(\econf{\regmap_{k}, \smem_{k}}{e} \eeval{} \rel{\varphi}\)
    and
    \(M_k(\proj{\rel{\varphi}}) = \mathtt{bv} \iff c_k'~e \ceeval{}
    \mathtt{bv}\).
  \end{enumerate}
  We can proceed case by case on the concrete evaluation of
  \(c_{k-1}\) and \(c_{k-1}'\). Note that from \cref{def:concsym},
  \(c_{k-1}\) and \(c_{k-1}'\) are at the same program location and
  therefore need to evaluate the same instruction.

  \myparagraph{Case \rulename{store}.} Consider that an instruction
  \(\store{\ e_{idx}}{e_{val}}\) is evaluated at step \(k-1\). %
  Let \(\rel{\iota}\) be the symbolic index and \(\rel{\nu}\) be the
  symbolic value, meaning that
  \(\econf{\regmap,\smem,\pc}{e_{idx}} \eeval{} \rel{\iota} \) and
  \(\econf{\regmap,\smem,\pc}{e_{val}} \eeval{} \rel{\nu}\).
  \cref{item:loc1} directly follows from the concrete and the symbolic
  evaluation rules that both just increment the location by 1.
  Next, we build the new model \(M_k\) as:
  \begin{align*}
    M_k \mydef M[\smem_k \mapsto \pair{m_l}{m_r}] \quad
    \text{ where }& m_l \mydef M(\lproj{\smem})[M(\lproj{\rel{\iota}}) \mapsto M(\lproj{\rel{\nu}})]\\
    \text{ and }&   m_r \mydef M(\rproj{\smem})[M(\rproj{\rel{\iota}}) \mapsto M(\rproj{\rel{\nu}})]
  \end{align*}
  Intuitively, \(M_k\) is equal to the old model \(M\) in which we add
  the new symbolic memory \(\smem_k\), mapping to the concrete value
  of the old memory \(M(\smem)\) where the index \(M(\rel{\iota})\)
  maps to the value \(M(\rel{\nu})\). Notice that \(M_k \sat \pc_k\)
  because \(M \sat \pc\) and we only added new definitions (thus not
  changing satisfiability) from \(M\) and \(\pc\) to \(M_k\) and
  \(\pc_k\). Thus \cref{item:model_sat} holds.
  Finally, we show \cref{item:equiv1} by induction on the structure of
  expressions: for any expression \(e\), if symbolic evaluation does
  not get stuck then
  \(M_k(\lproj{\rel{\varphi}}) = \mathtt{bv} \iff c_k~e \ceeval{}
  \mathtt{bv}\) holds (case of the right projection is analogous).
  Note that only the memory is updated from step \(k-1\) to step
  \(k\), meaning that \(c_k\), \(s_k\), and \(M_k\) only differ from
  \(c_{k-1}\), \(s_{k-1}\) and \(M\) on expressions involving the
  memory. Thus, we only need to consider the rule \rulename{load}, as
  the proof for other rules directly follows from
  \(c_{k-1}\concsym{p}{M} s_{k-1}\), \cref{def:concsym}, and the
  definition of \(M_k\).

  Assume an expression \(\load{e}\) such that \(s_k\) does not get
  stuck and let
  \(\econf{\regmap_{k}, \smem_{k}}{e} \eeval{} \rel{\iota'}\) and
  \(\econf{\regmap_{k}, \smem_{k}}{\load{e}} \eeval{} \rel{\nu'}\).
  We show that if \cref{item:equiv1} holds for the expression \(e\),
  then it holds for the expression \(\load{e}\). Formally, we must
  show that if %
  \({M_k(\lproj{\rel{\iota'}}) = \mathtt{bv_{idx}}} \iff {{c_k~{e}
      \ceeval{} \mathtt{bv_{idx}}}}\) %
  then %
  \({M_k(\lproj{\rel{\nu'}}) = \mathtt{bv_{val}}} \iff {c_k~{\load{e}}
    \ceeval{} \mathtt{bv_{val}}}\).

  \noindent
  First, we can rewrite \(M_k(\lproj{\rel{\nu'}})\) as
  \begin{align*}
    M_k(\lproj{\rel{\nu'}})
    &= M_k(select( \lproj{\smem_k},\lproj{\rel{\iota'}})) \text{ by symbolic rule
      \rulename{load}}\\
    &= M(\lproj{\smem})[M(\lproj{\rel{\iota}}) \mapsto M(\lproj{\rel{\nu}})][M_k (\lproj{\rel{\iota'}})] \text{ by def.\ of \(M_k\)}\\
  \end{align*}
  From this point, there are two cases, either \begin{enumerate*}[label=(\alph*)]
  \item the address of the load is the same as the address of the
    previous store,
  \item the address of the load is different from the address of the
    previous store.
  \end{enumerate*}

  \begin{enumerate}[label=(\alph*)]
  \item The address of the load is the same as the address of the
    previous store, i.e.,
    \(M_k (\lproj{\rel{\iota'}}) = M(\lproj{\rel{\iota}})\).
    Therefore we have \({M_k(\lproj{\rel{\nu'}}) = M(\lproj{\rel{\nu}})}\).
    From the induction hypothesis, the concrete index of the load
    evaluates to \(M_k(\lproj{\rel{\iota'}})\), that is
    \({c_k~{e} \ceeval{} M_k( \lproj{\rel{\iota'}})}\) which can be
    rewritten as \({c_k~{e} \ceeval{} M(\lproj{\rel{\iota}})}\).
    From concrete rule \rulename{store} and
    \(c_{k-1} \concsym{l}{M} s_{k-1}\), we know that the concrete
    memory from \(c_{k-1}\) to \(c_k\) is updated at index
    \(M(\lproj{\rel{\iota}})\) to map to the value
    \(M(\lproj{\rel{\nu}})\). Thus, we have
    \(c_k~\load{e} \ceeval{} M(\lproj{\rel{\nu}})\) and by rewriting,
    \({c_k~{\load{e}} \ceeval{} M_k(\lproj{\rel{\nu'}})}\).  Therefore
    we have shown that
    \({M_k(\lproj{\rel{\nu'}}) = \mathtt{bv_{val}}} \iff {c_k~{\load{e}} \ceeval{} \mathtt{bv_{val}}}\).

  \item The address of the load is different from the address of
    the previous store, i.e.,
    \(M_k (\lproj{\rel{\iota'}}) \neq M(\lproj{\rel{\iota}})\). Therefore we have
    \({M_k (\lproj{\rel{\nu'}}) = M(\lproj{\smem})[M_k (\lproj{\rel{\iota'}})]}\).
    From the induction hypothesis, the concrete index of the load
    evaluates to \(M_k(\lproj{\rel{\iota'}})\), that is
    \(c_k~{e} \ceeval{} M_k(\lproj{\rel{\iota'}})\).
    From concrete rule \rulename{store}, we know that the concrete
    memory from \(c_{k-1}\) to \(c_k\) is only updated at address
    \(M(\lproj{\rel{\iota}})\) and untouched at address
    \(M_k(\lproj{\rel{\iota'}})\). %
    Plus, we know from \(c_{k-1} \concsym{l}{M} s_{k-1}\) that address
    \(M_k(\lproj{\rel{\iota'}})\) maps to
    \(M(\lproj{\smem})[M_k(\lproj{\rel{\iota'}})]\) in \(c_{k-1}\).
    Therefore, in configuration \(c_k\), index
    \(M_k(\lproj{\rel{\iota'}})\) maps to
    \(M(\lproj{\smem})[M_k(\lproj{\rel{\iota'}})]\) which, by
    rewriting, leads to
    \({c_k~{\load{e}} \ceeval{} M_k(\lproj{\rel{\nu'}})}\).  Therefore
    we have shown that
    \({M_k(\lproj{\rel{\nu'}}) = \mathtt{bv_{val}}} \iff {c_k~{\load{e}} \ceeval{} \mathtt{bv_{val}}}\).
  \end{enumerate}

  \myparagraph{Case \rulename{d\_jump}.} Consider that an instruction
  \(\mathtt{goto}\ e\) is evaluated at step \(k-1\). %
  Notice that \(e\) evaluates to the same value in \(c_{k-1}\) and
  \(c_{k-1}'\) because both executions follow the same path, and let
  \(\mathtt{bv}\) be this concrete value. Concrete rule
  \rulename{d\_jump} just sets the next location to
  \(l_c = \toloc(\mathtt{bv})\).
  Symbolic evaluation of rule \rulename{d\_jump}, evaluates \(e\) to a
  symbolic value \(\rel{\varphi}\), computes a model
  \(M' \solver{} \pi \wedge \lproj{\rel{\varphi}} \wedge
  \rproj{\rel{\varphi}}\) and sets the next location to
  \(l_{s} = \toloc(M'(\lproj{\rel{\varphi}}))\).
  Note that from \cref{hyp:stuck} and because symbolic execution does
  not get stuck, the rule can be non-deterministically applied with
  any model \(M'\) satisfying the constraint.

  Therefore, we can apply the rule with \(M' = M\) which gives
  \(l_{s} = \toloc(M(\lproj{\rel{\varphi}}))\). %
  Moreover, from the hypothesis \(c_{k-1} \concsym{p}{M} s_{k-1}\) and
  \cref{def:concsym}, we have
  \(M(\lproj{\rel{\varphi}}) = \mathtt{bv}\) so
  \(l_{s} = \toloc(\mathtt{bv})\). Therefore we have shown how to make
  a symbolic step such that \(l_{s_{k}} = l_{c_{k}}\) and
  \Cref{item:loc1} holds.
  Finally, \cref{item:model_sat,item:equiv1} directly follow from
  \(c_{k-1} \concsym{p}{M} s_{k-1}\) and \cref{def:concsym} because
  symbolic and concrete evaluation of expressions are not modified by
  rule \rulename{d\_jump}.

  \myparagraph{Other cases.} \textbf{Case \rulename{s\_jump}} is trivial
  as only the location is updated to the same static value in both
  concrete and symbolic evaluation. %
  \textbf{Case \rulename{assign}} is similar to case \rulename{store}
  (and simpler because it only requires to reason about the value and
  not the index). %
  Finally \textbf{Cases \rulename{ite\_true} and
    \rulename{ite\_false}} are a bit similar to case
  \rulename{d\_jump} and rely on the fact that both concrete
  executions follow the same path, therefore a symbolic rule (either
  \rulename{ite\_true} or \rulename{ite\_fale}) can be applied to
  match the execution of both \(c_k\) and \(c_k'\).

  \myparagraph{Conclusion.} We have shown that we can perform a step in symbolic
  execution from \(s_{k-1}\) to a state \(s_k\), and there exists a model \(M'\)
  such that \(c_k \concsym{l}{M_k} s_k\) and \(c_k' \concsym{r}{M_k} s_k\).
\end{proof}

\subsection{Proof of \cref{thm:correctness}}\label{app:correctness}
\Cref{thm:correctness} claims the correctness of our symbolic execution, meaning
that for each symbolic execution and model \(M\) satisfying the path predicate,
the concretization of the symbolic execution with \(M\) corresponds to a valid
concrete execution.

\correctness*

\begin{proof} \emph{(Induction on the number of steps k).}  Case \(k = 0\) is trivial. %

  Consider symbolic configurations \(s_0\),
  \(s_{k-1} \mydef \iconf{\locvar}{\regmap}{\smem}{\pc}\) for which
  the induction hypothesis holds. That is, for each model \(M\) and
  configurations \(c_0\), \(c_{k-1}\) such that
  \(c_0 \concsym{p}{M} s_0\) and \(c_{k-1}\concsym{p}{M} s_{k-1}\), we
  have \(c_0 \cleval{}^{k-1} c_{k-1}\).
  Let \(s_k \mydef \iconf{\locvar_k}{\regmap_k}{\smem_k}{\pc_k}\)
  be the symbolic state such that \(s_{k-1} \ieval{} s_{k}\).  We need
  to show that for each model \(M_k\) and configurations \(c_0\) and
  \(c_k\) such that \(c_0 \concsym{p}{M_k} s_0\) and
  \(c_{k}\concsym{p}{M_k} s_{k}\), we have
  \(c_{0} \cleval{}^{k} c_k\).

  \myparagraph{Build steps \(0\) to \(k-1\).}
  First, we build the concrete execution from steps \(0\) to \(k-1\). Because
  \(M_k \sat \pc_k\) (from \cref{def:concsym}) and \(\pc\) is a sub-formula of
  \(\pc_k\) (from symbolic evaluation), we have \(M_k \sat \pc\), thus we can
  build \(c_{k-1}\) such that \(c_{k-1}\concsym{p}{M_k} s_{k-1}\). Similarly we
  can build \(c_0\) such that \(c_0 \concsym{p}{M_k} s_0\). Finally, from the
  induction hypothesis, we have \(c_0 \cleval{}^{k-1} c_{k-1}\).

  \myparagraph{Build step \(k-1\) to \(k\).}
  Second, we build the concrete execution from steps \(k-1\) to \(k\): we need
  to show that there is a step from \(c_{k-1}\) to \(c_k\) where
  \(c_k \concsym{p}{M_k} s_k\). %
  Because concrete semantics never gets stuck (\cref{hyp:concrete-stuck}) there
  is a state \(c_k' \mydef \cconf{\locvar'}{\cregmap'}{\cmem'}\) such that
  \(c_{k-1} \cleval{} c_{k}'\) and because the semantics is deterministic
  (\cref{hyp:deterministic}), this state is unique. Thus we need to show that
  \(c_{k}' = c_{k}\), that is \(c_{k}' \concsym{p}{M_k} s_{k}\). Because
  \(M_k \sat \pc_k\) (from \(c_{k}\concsym{p}{M_k} s_{k}\) and
  \cref{def:concsym}), this amounts to show that:
  \begin{enumerate}[label=(\roman*)]
  \item\label{item:loc2} the location in \(c_k'\) is the same as the location in \(s_k\),
  \item\label{item:equiv2} for all expression \(e\), either symbolic
    evaluation gets stuck, or
    \(\econf{\regmap_{k}, \smem_{k}}{e} \eeval{} \rel{\varphi}\)
    and
    \(M_k(\proj{\rel{\varphi}}) = \mathtt{bv} \iff c_k'~e \ceeval{}
    \mathtt{bv}\).
  \end{enumerate}
  We can proceed case by case on the concrete evaluation of
  \(c_{k-1}\). Note that from \cref{def:concsym}, \(c_{k-1}\) and
  \(s_{k-1}\) are at the same program location and therefore both evaluate
  the same instruction.

  \myparagraph{Case \rulename{store}.} Consider that an instruction
  \(\store{\ e_{idx}}{e_{val}}\) is evaluated at step \(k-1\). %
  Let \(\rel{\iota}\) be the symbolic index and \(\rel{\nu}\) be the
  symbolic value, meaning that
  \(\econf{\regmap,\smem,\pc}{e_{idx}} \eeval{} \rel{\iota} \) and
  \(\econf{\regmap,\smem,\pc}{e_{val}} \eeval{} \rel{\nu}\).
  \Cref{item:loc2} directly follows from the concrete and the symbolic
  evaluation rules that both just increment the location by 1.
  We prove \cref{item:equiv2} by induction on the structure of
  expressions: for any expression \(e\), if symbolic evaluation does
  not get stuck then
  \(M_k(\proj{\rel{\varphi}}) = \mathtt{bv} \iff c_k'~e \ceeval{}
  \mathtt{bv}\) holds. %
  Note that only the memory is updated from step \(k-1\) to step
  \(k\), meaning that \(c_k'\) and \(s_k\) only differ from
  \(c_{k-1}\) and \(s_{k-1}\) on expressions involving the
  memory. Thus, we only need to consider the rule \rulename{load}, as
  the proof for other rules directly follows from
  \(c_{k-1}\concsym{p}{M_k} s_{k-1}\) and \cref{def:concsym}.

  Assume an expression \(\load{e}\) such that \(s_k\) does not get
  stuck and let
  \(\econf{\regmap_{k}, \smem_{k}}{e} \eeval{} \rel{\iota'}\) and
  \(\econf{\regmap_{k}, \smem_{k}}{\load{e}} \eeval{} \rel{\nu'}\).
  We show that if \cref{item:equiv2} holds for the expression
  \(e\), then it holds for the expression \(\load{e}\). Formally, we
  must show that if %
  \({M_k(\proj{\rel{\iota'}}) = \mathtt{bv_{idx}}} \iff {{c_k'~{e}
      \ceeval{} \mathtt{bv_{idx}}}}\) %
  then %
  \({M_k(\proj{\rel{\nu'}}) = \mathtt{bv_{val}}} \iff {c_k'~{\load{e}} \ceeval{} \mathtt{bv_{val}}}\).

  \noindent
  First, we can rewrite \(M_k(\proj{\rel{\nu'}})\) as
  \begin{align*}
    M_k(\proj{\rel{\nu'}})
    &= M_k(select( \proj{\smem_k},\proj{\rel{\iota'}})) \text{ by symbolic rule
      \rulename{load}}\\
    &=  M_k(select(store(\proj{\smem_{k-1}}, \proj{\rel{\iota}}, \proj{\rel{\nu}}),\proj{\rel{\iota'}})) \text{ by symbolic rule
      \rulename{store}}\\
  \end{align*}
  From this point, there are two cases, either \begin{enumerate*}[label=(\alph*)]
  \item the address of the load is the same as the address of the
    previous store,
  \item the address of the load is different from the address of the
    previous store.
  \end{enumerate*}

  \begin{enumerate}[label=(\alph*)]
    \item The address of the load is the same as the address of the previous
          store, i.e., \(M_k (\proj{\rel{\iota'}}) = M_k(\proj{\rel{\iota}})\).
          Therefore, we have \({M_k(\proj{\rel{\nu'}}) = M_k(\proj{\rel{\nu}})}\).
    From the induction hypothesis, the concrete index of the load
    evaluates to \(M_k(\proj{\rel{\iota'}})\), that is
    \({c_k'~{e} \ceeval{} M_k( \proj{\rel{\iota'}})}\) which can be
    rewritten as \({c_k'~{e} \ceeval{} M_k(\proj{\rel{\iota}})}\).
    From concrete rule \rulename{store} and \break
    \mbox{\(c_{k-1} \concsym{p}{M_k} s_{k-1}\)}, we know that the concrete
    memory from \(c_{k-1}\) to \(c_k'\) is updated at index
    \(M_k(\proj{\rel{\iota}})\) to map to the value
    \(M_k(\proj{\rel{\nu}})\). Thus, we have
    \(c_k'~\load{e} \ceeval{} M_k(\proj{\rel{\nu}})\) and by rewriting,
    \({c_k'~{\load{e}} \ceeval{} M_k(\proj{\rel{\nu'}})}\).  Therefore
    we have shown that:
    \[{M_k(\proj{\rel{\nu'}}) = \mathtt{bv_{val}}} \iff {c_k'~{\load{e}} \ceeval{} \mathtt{bv_{val}}}\]

    \item The address of the load is different from the address of the previous
          store, i.e.,
          \(M_k (\proj{\rel{\iota'}}) \neq M_k(\proj{\rel{\iota}})\). Therefore:
          \[{M_k(\proj{\rel{\nu'}})} = M_k(select(\proj{\smem_{k-1}},\proj{\rel{\iota'}})) = M_k(\proj{\smem_{k-1}})[M_k(\proj{\rel{\iota'}})]\]
          From the induction hypothesis, the concrete index of the load
          evaluates to \(M_k(\proj{\rel{\iota'}})\), that is
          \(c_k'~{e} \ceeval{} M_k(\proj{\rel{\iota'}})\).
    From concrete rule \rulename{store}, we know that the concrete
    memory from \(c_{k-1}\) to \(c_k'\) is only updated at address
    \(M_k(\proj{\rel{\iota}})\) and untouched at address
    \(M_k(\proj{\rel{\iota'}})\). %
    Plus, we know from \(c_{k-1} \concsym{p}{M_k} s_{k-1}\) that address
    \(M_k(\proj{\rel{\iota'}})\) maps to
    \(M(\proj{\smem_{k-1}})[M_k(\proj{\rel{\iota'}})]\) in \(c_{k-1}\).
    Therefore, in configuration \(c_k'\), address
    \(M_k(\proj{\rel{\iota'}})\) also maps to
    \(M(\proj{\smem_{k-1}})[M_k(\proj{\rel{\iota'}})]\) which, by
    rewriting, leads to
    \({c_k'~{\load{e}} \ceeval{} M_k(\proj{\rel{\nu'}})}\).  Therefore
    we have shown that:
    \[{M_k(\proj{\rel{\nu'}}) = \mathtt{bv_{val}}} \iff {c_k'~{\load{e}} \ceeval{} \mathtt{bv_{val}}}\]
  \end{enumerate}

  \myparagraph{Case \rulename{d\_jump}.} Consider that an instruction
  \(\mathtt{goto}\ e\) is evaluated at step \(k-1\). %
  Let \(\rel{\varphi}\) be the symbolic value of \(e\), meaning that
  \(\econf{\regmap,\smem,\pc}{e} \eeval{} \rel{\varphi}\).
  Symbolic rule \rulename{d\_jump} sets the location to
  \(l_{s_{k}} = \toloc(M(\lproj{\rel{\varphi}}))\), which is equal to
  \(\toloc(M(\rproj{\rel{\varphi}}))\) because \(M\) satisfies the
  constraint \(\lproj{\rel{\varphi}} = \rproj{\rel{\varphi}}\). %
  Concrete rule \rulename{d\_jump} evaluates expression \(e\) to a
  concrete value \(\mathtt{bv}\) and sets the location to
  \(l_{c_{k}} = \toloc(\mathtt{bv})\). From the hypothesis
  \(c_{k-1} \concsym{p}{M_k} s_{k-1}\) and \cref{def:concsym} we have
  \(\mathtt{bv} = M_k(\proj{\rel{\varphi}})\), therefore
  \(l_{c_{k}} = l_{s_{k}}\).

  Finally, \cref{item:equiv2} directly follows from
  \(c_{k-1} \concsym{p}{M_k} s_{k-1}\) and \cref{def:concsym} because
  symbolic and concrete evaluation of expressions are not modified by
  rule \rulename{d\_jump}.

  \myparagraph{Other cases.} \textbf{Case \rulename{s\_jump}} is trivial as only
  the location is updated to the same static value in both concrete and symbolic
  evaluation. %
  \textbf{Case \rulename{assign}} is similar to case \rulename{store} (and
  simpler because it only requires to reason about the value and not the
  index). %
  Finally \textbf{Cases \rulename{ite\_true} and \rulename{ite\_false}} are
  similar to case \rulename{d\_jump}.

  \textbf{Conclusion.} We have shown that \(c_{k-1} \cleval{} c_k'\) with
  \(s_k \concsym{p}{M_k} c_k'\). Because \(\concsym{p}{M_k}\) is a tight
  relation, we have \(c_k' = c_k\) and thus \(c_{k-1} \cleval{} c_k\).
  Therefore, for each model \(M_k\) and configuration \(c_0\) and \(c_k\)
  such that \(c_0 \concsym{p}{M_k} s_0\) and
  \(c_{k} \concsym{p}{M_k} s_{k}\), we have \(c_{0} \cleval{}^{k} c_k\).
\end{proof}

\subsection{Proof of \cref{thm:bv}}\label{app:bv}
\Cref{thm:bv} claims that if symbolic execution does not get stuck due to a
satisfiable insecurity query, then the program is constant-time.

\boundedverif*

\begin{proof} \emph{(Induction on the number of steps k).} Case
  \(k = 0\) is trivial.

  Let $s_0$ be an initial symbolic configuration for which the
  symbolic evaluation never gets stuck. Let us consider a model \(M_0\)
  and concrete configurations \(\cconfvar_0 \concsym{l}{M_0} s_0\),
  \(\cconfvar'_0 \concsym{r}{M_0} s_0\), for which the induction
  hypothesis holds at step \(k-1\), meaning that for all
  \(c_{k-1} \mydef \cconf{\locvar}{\cregmap}{\cmem}\) and
  \(c_{k-1}' \mydef \cconf{\locvar'}{\cregmap'}{\cmem'}\) such that
  $c_0 \cleval{\leakvar}^{k-1} c_{k-1}$, $c_0' \cleval{\leakvar'}^{k-1} c_{k-1}'$,
  then \(\leakvar = \leakvar'\). We show that \cref{thm:bv} still
  holds at step \(k\).

  From \cref{lemma:completeness}, there exists a model \(M\) and a
  symbolic configuration
  \(s_{k-1} \mydef \iconf{\locvar_s}{\regmap}{\smem}{\pc}\) such
  that:
  \begin{equation}
    \label{eq:2}
    s_0 \ieval{}^{k-1} s_{k-1} ~\wedge~ c_{k-1} \concsym{l}{M} s_{k-1} ~\wedge~ c_{k-1}' \concsym{r}{M} s_{k-1}
  \end{equation}

  We show by contradiction that the leakages \(\mathtt{bv}\)
  and \(\mathtt{bv'}\) produced by
  \(c_{k-1} \cleval{\mathtt{bv}} c_{k}\) and
  \(c_{k-1}' \cleval{\mathtt{bv'}} c_{k}'\) are equal.
  Note that from \cref{eq:2} and \cref{def:concsym}, we have
  \(l_s = l = l'\), therefore the same instruction and expressions are
  evaluated in configurations \(c_{k-1}, c_{k-1}', \text{ and } s_{k-1}\).
  Suppose that \(c_{k-1}\) and \(c_{k-1}'\) produce distinct
  leakages. %
  This can happen during the evaluation of rules \rulename{load},
  \rulename{d\_jump}, \rulename{ite\_true}, \rulename{ite\_false},
  \rulename{store}.

  \myparagraph{Case \rulename{load}.} Concrete evaluation of an expression
  \(\load{e}\) in configurations \(c_{k-1}\) and \(c_{k-1}'\) produces
  leakages \(\mathtt{bv}\) and \(\mathtt{bv'}\) and, assuming the load
  is insecure, we have \(\mathtt{bv} \neq \mathtt{bv'}\).
  Symbolic evaluation evaluates the index to a symbolic expression
  \(\rel{\iota}\) and ensures \(\secleak(\rel{\iota}, \pc)\) holds. %
  From \cref{eq:2} and \cref{def:concsym}, we have \(M \sat \pc\),
  \(\mathtt{bv} = M(\lproj{\rel{\iota}})\) and
  \(\mathtt{bv'} = M(\rproj{\rel{\iota}})\).
  Because we assumed \(\mathtt{bv} \neq \mathtt{bv'}\), then
  \(M(\lproj{\rel{\iota}}) \neq M(\rproj{\rel{\iota}})\).  Therefore,
  we have
  \(M \sat \pc \wedge \lproj{\rel{\iota}} \neq \rproj{\rel{\iota}}\),
  meaning that \(\secleak(\rel{\iota}, \pc)\) evaluates to false and
  the symbolic execution is stuck, which is a contradiction.
  Therefore \(\mathtt{bv} = \mathtt{bv'}\).

  \myparagraph{Cases \rulename{d\_jump}, \rulename{ite\_true},
    \rulename{ite\_false}, \rulename{store}.}  The reasoning is
  analogous.

  \myparagraph{Conclusion.}
  We have shown that the hypothesis holds at step \(k\). If
  $s_0 \ieval{}^{k} s_{k}$, then for all model \(M\) and initial configurations
  \(\cconfvar_0 \concsym{l}{M} s_0 \) and \(\cconfvar'_0 \concsym{r}{M} s_0\)
  such that %
  \(\cconfvar_0 \cleval{\leakvar}^{k-1} \cconfvar_{k-1} \cleval{\leakvar_k}
  \cconfvar_{k}\) %
  and %
  \(\cconfvar_0' \cleval{\leakvar'}^{k-1} \cconfvar'_{k-1} \cleval{\leakvar_k'} \cconfvar'_{k}\)
  where \(\leakvar = \leakvar'\), %
  then \(\leakvar \cdot \leakvar_k = \leakvar' \cdot \leakvar_k'\).
\end{proof}

\section{Definition of the \(\lookup\) function for relational indexes}\label{app:rel-lookup}
\begin{newercontentblock}
This section re-defines the \(\lookup\) function, introduced in \cref{sec:row}
for simple indexes, in order to consider relational indexes. Note that simple
indexes are enough as long as the leakage model that is considered subsumes the
memory obliviousness leakage model, and hence constraints memory indexes to be
equal in both executions---this is for instance the case for the constant-time
leakage model.

We recall that \(\compare(\iota,\kappa)\) is a comparison function based on
\emph{syntactic term equality}, which returns true (resp.\ false) only if
\(\iota\) and \(\kappa\) are equal (resp.\ different) in any interpretation, and
is undefined (denoted \(\bot\)) if the terms are not comparable.

We first define an auxiliary function \(\slookup(\smemvar_n, \iota)\) that
checks if the (non-relational) index \(\iota\) matches the \emph{most recent}
store operation in the (non-relational) memory \(\smemvar_n\). If it can
determine that the index of the store matches \(\iota\), it returns the
corresponding value; if it can determine that the index of the store is distinct
from \(\iota\), it is undefined; and if it cannot determine the (dis-)equality
of both indexes, or if \(\smemvar_{n}\) is the initial memory, it returns a
symbolic \(select\) operation from \(\smemvar_{n}\):
\begin{alignat*}{2}
  \slookup(\smemvar_0, \iota) &= select(\smemvar_0, \iota)\\
  \slookup(\smemvar_n, \iota) &=
  \begin{cases}
    \varphi & {\sf if~} \smemvar_n = store(\smemvar_{n-1},\kappa,\varphi) \wedge
    \compare(\iota,\kappa) \\
    \bot & {\sf if~} \smemvar_n = store(\smemvar_{n-1},\kappa,\varphi) \wedge
    \neg\compare(\iota,\kappa)\\
    select(\smemvar_n, \iota) & {\sf if} \smemvar_n = store(\smemvar_{n-1},\kappa,\varphi) \wedge \compare(\iota,\kappa) = \bot
 \end{cases}
\end{alignat*}
We also define a deep lookup function \(\dlookup(\smemvar_n, \iota)\) that
returns the value corresponding to the index \(\iota\) in the memory
\(\smemvar_n\). Contrary to \(\slookup\), \(\dlookup\) recursively iterates
through the memory when it can determine that the index of a store is distinct
from \(\iota\) (i.e., when \(\slookup\) returns \(\bot\)). This function
corresponds to the standard read-over-write~\cite{DBLP:conf/lpar/FarinierDBL18}
optimization:
\begin{alignat*}{2}
  \dlookup(\smemvar_n, \iota) &=
  \begin{cases}
    \slookup(\smemvar_n, \iota) & {\sf if~} \slookup(\smemvar_n, \iota) \neq \bot \\
    \dlookup(\smemvar_{n-1}, \iota) & {\sf if~} \slookup(\smemvar_n, \iota) = \bot \wedge \smemvar_n = store(\smemvar_{n-1},\kappa,\varphi)\\
 \end{cases}
\end{alignat*}

Next, we define a function \(\dedup\) that deduplicates a pair of expression
\(\pair{\varphi_{l}}{\varphi_{r}}\) if it can syntactically determine the
equality of \(\varphi_{l}\) and \(\varphi_{r}\):
\begin{alignat*}{2}
  \dedup(\pair{\varphi_{l}}{\varphi_{r}}) &=
  \begin{cases}
    \simple{\varphi_{l}} & {\sf if~} \compare(\varphi_l,\varphi_r) \\
    \pair{\varphi_{l}}{\varphi_{r}} & {\sf otherwise}
 \end{cases}
\end{alignat*}

Finally, the function \(\lookup\) for relational indexes is defined as follows:
\begin{mathpar}
  \inferrule*[left=match]{%
    \slookup(\proj{\smem_n}, \proj{\rel{\iota}}) = \varphi_{p} \\%
    \forall p \in \{l,r\}
  }{%
    \lookup(\smem_n, \rel{\iota}) = \dedup(\pair{\varphi_{l}}{\varphi_{r}})%
  }%
  \and
  \inferrule*[left=miss]{%
    \slookup(\proj{\smem_n}, \proj{\rel{\iota}}) = \bot \\%
    \forall p \in \{l,r\}
  }{%
    \lookup(\smem_n, \rel{\iota}) = \lookup(\smem_{n-1}, \rel{\iota})%
  }%
  \and
  \inferrule*[left=match-l]{%
    \slookup(\lproj{\smem_n}, \lproj{\rel{\iota}}) = \varphi_{l} \\%
    \slookup(\rproj{\smem_n}, \rproj{\rel{\iota}}) = \bot \\%
    \dlookup(\rproj{\smem_n}, \rproj{\rel{\iota}})
  }{%
    \lookup(\smem_n, \rel{\iota}) = \dedup(\pair{\varphi_{l}}{\varphi_{r}})%
  }%
\end{mathpar}
If both sides of the (relational) index \(\rel{\iota}\) match the most recent
store---or if the equality of the index with the most recent store cannot be
determined---then \(\lookup\) returns the corresponding deduplicated relational
expression (rule \textsc{match}). Notice than the \textsc{match} rule also
encompasses simple indexes as \(\proj{\simple{\iota}} = \iota\). In this case,
the \(\dedup\) function ensures that the returned value is simple.

If both sides of the current index \(\rel{\iota}\) are syntactically distinct
from the most recent store, then the store can be safely bypassed and \(lookup\)
is recursively called on the next store (rule \textsc{miss}). Notice that this
rule keeps the lookup for the right and left side \emph{synchronized}.

Finally, if the index \(\rel{\iota}\) matches the most recent store on the left
side but not on the right side, the rule continues the lookup only on the right
side using \(\dlookup\) (rule \textsc{match-l}). A symmetric rule
\textsc{match-r}, applies when the index matches the most recent store on the
right side.
\end{newercontentblock}

\section{Application: Violations Introduced by \texttt{clang} in a Constant-Time Sort Function}\label{app:ct-sort}
\begin{newercontentblock}
This section details vulnerabilities introduced by the \texttt{clang} compiler
in a constant-time sort function, (partly) given in \cref{lst:ct-sort}, where
\lstinline{in} contains secret data. The LLVM code, given in
\cref{lst:ct-sort_llvm}, contains two \lstinline{select} operations at
line~\ref{lst:ct-sort_llvm:select1} (resp.\
line~\ref{lst:ct-sort_llvm:select2}), which conditionally select the lowest
(resp.\ greatest) element of \lstinline{in} and store it to \lstinline{out[0]}
(resp. \lstinline{out[1]}). Notice that the source code and the LLVM code are
both constant-time: they do not contain secret-dependent control-flow and memory
accesses do not depend on secret data.

\begin{minipage}[b]{.45\linewidth}
  \begin{lstlisting}[style=nonumbers,caption={Constant-time sort: sort the two integers in \lstinline{in} (secret) and write the resulting array in \lstinline{out}.},label={lst:ct-sort}, morekeywords={out, in}]
void sort2(int out[2], int in[2]) {
  signed char c = (in[0] < in[1]) - 1;
  out[0] = (~c & in[0]) | (c & in[1]);
  out[1] = (~c & in[1]) | (c & in[0]);
}
\end{lstlisting}
\end{minipage}
\hfill
\begin{minipage}[b]{.45\linewidth}
\begin{lstlisting}[numbers=left,caption={Simplified llvm code of \Cref{lst:ct-sort}.},label={lst:ct-sort_llvm}, morekeywords={out, in}]
void sort2(i32* out, i32* in) {
  a0 = load in[0]
  a1 = load in[1]
  a = select (a0 < a1) a0 a1 <@\label{lst:ct-sort_llvm:select1}@>
  store a out[0]
  b1 = load in[1]
  b0 = load in[0]
  b = select (a0 < a1) b1 b0 <@\label{lst:ct-sort_llvm:select2}@>
  store b out[1] }
\end{lstlisting}
\end{minipage}

The compiled version of \cref{lst:ct-sort_llvm} for the \texttt{i386}
architecture---which does not feature \dbainline{cmov} instructions---is given
in \cref{lst:ct-sort_i386}. The \dbainline{select} LLVM instructions of
\cref{lst:ct-sort_llvm} are compiled to conditional jumps. Consequently, the
assembly code in \cref{lst:ct-sort_i386} contains two secret dependent jumps at
line~\ref{lst:ct-sort:branch1} and line~\ref{lst:ct-sort:branch2} and is
therefore insecure.

More interestingly, for the more recent architecture \texttt{i686} featuring
\dbainline{cmov}, the compiler introduces a secret dependent memory access. In
the assembly code for \texttt{i686}, given in \cref{lst:ct-sort_i686},
\lstinline{edx} is set at line~\ref{lst:ct-sort_llvm:cmov} to either the address
\lstinline{in+0} or \lstinline{in+4}, depending on whether %
\lstinline{in[0] <= in[1]} (i.e., the result of the \lstinline{cmp} instruction
at line~\ref{lst:ct-sort_llvm:cmp}). At line~\ref{lst:ct-sort_llvm:leak},
\lstinline{edx} is then used as a \dbainline{load} address. Consequently, the
address of the \lstinline{load} depends on the outcome of the condition %
\lstinline{in[0] <= in[1]}, which is secret, hence the assembly code in
\cref{lst:ct-sort_i686} is insecure.

These two vulnerabilities were automatically found by \brelse{} during our study
on preservation of constant-time by compilers detailed in \cref{sec:compilers}.
This example illustrates that reasoning about constant-time at source or LLVM
level in not sufficient.

\begin{minipage}[b]{0.45\linewidth}
\begin{lstlisting}[numbers=left,caption={\Cref{lst:ct-sort} compiled with \texttt{clang-9.0.1 -m32 -fno-stack-protector -O3 -march=i386.}},label={lst:ct-sort_i386}, morekeywords={out, in, c}]
sort2:
        ecx := in+4
        esi := load (in+0)
        edi := load (in+4)
        cmp esi edi
        jle .L0 <@\label{lst:ct-sort:branch1}@>
        esi := edi
.L0:    store (out+0) esi
        jl .L1 <@\label{lst:ct-sort:branch2}@>
        ecx := in+0
.L1:    ecx := load ecx
        store (out+4) ecx
\end{lstlisting}
\end{minipage}
\hfill
\begin{minipage}[b]{0.45\linewidth}
\begin{lstlisting}[numbers=left,caption={\Cref{lst:ct-sort} compiled with \texttt{clang-9.0.1 -m32 -fno-stack-protector -O3 -march=i686.}},label={lst:ct-sort_i686}, morekeywords={out, in, c}]
sort2:
        esi := load (in+0)
        edi := load (in+4)
        cmp esi edi <@\label{lst:ct-sort_llvm:cmp}@>
        edi := cmovle esi
        store (out+0) edi
        ecx := in+0
        edx := in+4
        edx := cmovge ecx <@\label{lst:ct-sort_llvm:cmov}@>
        ecx := load edx   <@\label{lst:ct-sort_llvm:leak}@>
        store (out+4) ecx
\end{lstlisting}
\end{minipage}

\end{newercontentblock}

\section{Scalability of \brelse{} according to input data
  size.}\label{app:scale_size}

\begin{newercontentblock}
In this section, we evaluate the scalability of \brelse{} according to the size
of the input data. %
We conduct this experiment on the subset of programs given in
\cref{tab:bounded-verif} that has variable input size. For these programs, we
report the median execution time for 5 runs of \brelse{} for a base size
(\(S_1\)) and for multiples of base size, ranging from $S_{2} = 2 \times S_1$ to
$S_{6} = 6 \times S_1$. For stream ciphers, we set \(S_1\) to 500; for block
ciphers we, we set \(S_1\) to the size of a block; and for the remove padding
function we set \(S_1\) to 64 in order to get size values around 256, which is
the maximum padding size. Experiments were performed on a laptop with an AMD
Ryzen 7 PRO 3700U @ 2.30GHz processor and 14GB of RAM\footnote{Note that the
  architecture used in this evaluation is different from the architecture used
  in \cref{sec:expes}, hence results are not drectly comparable to
  \cref{tab:scale_size}}. Results are reported in \cref{tab:scale_size}.
Additionally, \cref{fig:scale_size} shows the evolution of the execution time
according to the size.

\begin{table}[ht]
  \centering
  \begin{tabularx}{\textwidth}{lXrrrrrrr}
\toprule
\multicolumn{2}{l}{Program} &  \(S_{1}\) &  $T_1$ &   $T_2$ &   $T_3$ &    $T_4$ &    $T_5$ &    $T_6$ \\
\midrule
    \multirow{2}{*}{Hacl*-utility}
    & cmp-bytes \texttt{-O0} & 500 &   3.1 &   6.2 &   9.3 &   12.2 &   15.4 &   18.5 \\
    & cmp-bytes \texttt{-O3} & 500 &   1.2 &   2.4 &   3.7 &    4.9 &    6.1 &    7.4 \\
\midrule
    \multirow{3}{*}{Hacl*}
    & chacha20 & 500 &   2.7 &   5.3 &   7.8 &   10.4 &   13.0 &   15.3 \\
    & sha256 & 500 &   5.7 &  11.4 &  17.2 &   22.9 &   28.6 &   34.3 \\
    & sha512 & 500 &  11.2 &  18.2 &  27.4 &   36.5 &   45.0 &   54.7 \\
\midrule
    \multirow{4}{*}{Libsodium}
    & chacha20 & 500 &   7.5 &  16.3 &  25.5 &   34.3 &   12.9 &   53.0 \\
    & salsa20 & 500 &   2.5 &   4.9 &   7.3 &    9.7 &   12.0 &   14.2 \\
    & sha256 & 500 &   6.6 &  13.2 &  19.8 &   26.3 &   32.9 &   39.3 \\
    & sha512 & 500 &  13.2 &  21.6 &  32.6 &   43.4 &   53.7 &   64.9 \\
\midrule
    \multirow{2}{*}{BearSSL}
    & aes-ct-cbcenc &  16 &   0.3 &   0.6 &   0.9 &    1.2 &    1.4 &    1.7 \\
    & des-ct-cbcenc &   8 &   6.2 &  16.2 &  30.2 &   48.7 &   70.8 &   98.1 \\
\midrule
    \multirow{1}{*}{OpenSSL}
    & tls-remove-padding-patch &  64 & 790.5 & 846.7 & 944.9 & 1083.6 & 1084.0 & 1094.8 \\
\bottomrule
\end{tabularx}

\caption{Execution time according to input data size where \(S_{1}\) is the base
  size and \(T_{n}\) is the execution time of \brelse{} for input size
  \(S_{n}\).}\label{tab:scale_size}
\end{table}

\noindent\textbf{Results}. For most programs, the execution time is proportional to the size of
the input. Indeed for most programs, almost all queries are simplified and the
number of queries stays the same when increasing size of the input, meaning that
the variation of execution time is due to the symbolic execution engine (and not
to the solver). Consequently, the execution time is proportional to the number
of instruction explored, which is in turn proportional to the input size.

Three notable exceptions are \texttt{des\_ct},
\texttt{tls1\_cbc\_remove\_pad\_patch}, and \texttt{libsodium\_chacha20} for an
input size of \(S_5\). For \texttt{des\_ct}, the execution time does not grow
linearly with the input size because the number (and complexity) of queries sent
to the solver increases with the size of the input. For
\texttt{tls1\_cbc\_remove\_pad\_patch} the execution time is dominated by a
large number of queries, which are independent from the input size. The
execution time slowly increases and reaches a plateau after a size of 256, which
is the maximum padding size---at this point increasing the size of the input
does not change the symbolic exploration. Finally, for
\texttt{libsodium\_chacha20}, the execution time seems to increase linearly but
drops for the size \(S_5\) (i.e., 2500). Manual investigation revealed that the
size 2500 results in a final loop iteration on a 4 bytes chunk, which triggers a
shortcut (introduced by the compiler) and results in less queries sent to the
solver (0 instead of 3).

To conclude, the scalability of \brelse{} according to the input size mostly
depends on whether the number of queries depends on the input size. Fortunately,
in most of our cases, the number of queries does not vary with the input size
and the execution time of \brelse{} scales linearly with the size of the input.

\end{newercontentblock}

\fi

\end{document}